\documentclass[11pt,a4paper]{article}

\usepackage[T1]{fontenc}
\usepackage[utf8]{inputenc}
\usepackage{lmodern}
\usepackage[margin=25mm]{geometry}
\usepackage{microtype}
\usepackage{amsmath,amsfonts,amssymb,amsthm,mathtools,bm,bbm}

\usepackage{graphicx,eurosym}
\usepackage{booktabs,array,tabularx,multirow,arydshln}
\usepackage{float,algorithm,algpseudocode,adjustbox}
\usepackage{xcolor}
\usepackage[section]{placeins}
\usepackage[authoryear,round]{natbib}
\usepackage{xurl}
\usepackage[font=small,labelfont=bf,labelsep=period]{caption}
\usepackage{titlesec}
\usepackage[hidelinks,unicode,bookmarksnumbered]{hyperref}
\usepackage{bookmark}

\graphicspath{{figures/}}
\DeclareUnicodeCharacter{20AC}{\euro}
\hypersetup{
  pdftitle={Tax Fraud Detection via Bayesian Positive--Unlabeled Learning with Gaussian Processes and Multilayer Networks},
  pdfauthor={Konstantinos Bourazas; Angelos Alexopoulos; Petros Dellaportas; Konstantinos Kalogeropoulos},
  pdfsubject={Bayesian positive-unlabeled classification with Gaussian processes and multilayer networks},
  pdfkeywords={Bayesian classification, Gaussian processes, MCMC, multilayer network, positive-unlabeled learning, product of experts}
}

\theoremstyle{plain}
\newtheorem{theorem}{Theorem}[section]
\newtheorem{proposition}[theorem]{Proposition}
\newtheorem{lemma}[theorem]{Lemma}

\theoremstyle{definition}
\newtheorem{remark}[theorem]{Remark}

\newtheorem{assumption}[theorem]{Assumption}

\newtheorem{dfn}{Definition}

\titleformat{\section}{\large\bfseries}{\thesection}{0.7em}{}
\titleformat{\subsection}{\normalsize\bfseries}{\thesubsection}{0.7em}{}
\titleformat{\subsubsection}{\normalsize\itshape}{\thesubsubsection}{0.7em}{}
\titlespacing*{\section}{0pt}{2.5ex plus 1ex minus .2ex}{1.2ex plus .2ex}
\titlespacing*{\subsection}{0pt}{2ex plus .8ex minus .2ex}{.8ex plus .2ex}
\allowdisplaybreaks[1]
\begin{document}

\begin{center}
{\LARGE\bfseries Tax Fraud Detection via Bayesian Positive--Unlabeled Learning with Gaussian Processes and Multilayer Networks\par}
\vspace{1.1em}
{\normalsize Konstantinos Bourazas\textsuperscript{1}\qquad Angelos Alexopoulos\textsuperscript{2}\par}
\vspace{0.25em}
{\normalsize Petros Dellaportas\textsuperscript{3,4}\qquad Konstantinos Kalogeropoulos\textsuperscript{5}\par}
\vspace{0.9em}
{\footnotesize
\textsuperscript{1}Department of Statistics and Actuarial-Financial Mathematics,\\
University of the Aegean, Samos, Greece\par
\vspace{0.25em}
\textsuperscript{2}Department of Economics, Athens University of Economics and Business, Athens, Greece\par
\vspace{0.25em}
\textsuperscript{3}Department of Statistical Science, University College London, London, United Kingdom\par
\vspace{0.25em}
\textsuperscript{4}Department of Statistics, Athens University of Economics and Business, Athens, Greece\par
\vspace{0.25em}
\textsuperscript{5}Department of Statistics, London School of Economics, London, United Kingdom\par}
\end{center}

\begin{abstract}
Tax fraud remains a central challenge for public revenue authorities worldwide, imposing fiscal losses estimated to reach up to~\EUR{1}~trillion annually in the EU alone. Fraudulent firms intentionally manipulate reported figures to conceal their activity. Business networks can provide complementary information and reveal fraud that covariates alone may miss. We propose a Bayesian positive-unlabeled (PU) classification framework that combines firm-level covariates with multilayer network information. As a policy relevant case study, we apply the framework to \textcolor{black}{firms in a regulated segment of the Greek energy market}, a sector exposed to excise tax evasion. Confirmed fraud labels exist only for audited cases, while the remaining observations are unlabeled rather than verified compliant. We develop a Bayesian Gaussian process (GP) classifier integrating covariates with multilayer network information through a Product of Experts (PoE) construction and incorporating a nondetection probability for missed positives. \textcolor{black}{We derive identification bounds for the nondetection rate under an anchor condition and show that, for a fixed latent risk function, a common nondetection rate affects calibration but not ranking.} Simulations show strong ranking performance relative to PU and network-based alternatives, while illustrating the difficulty of estimating the nondetection rate. In real audit data, the method identifies high risk firms with posterior uncertainty, estimates the number of undetected fraudulent cases, and identifies whether risk is driven by covariates, networks, or both. \textcolor{black}{Of the six highest ranked firms, the two that had already been reinspected independently were both confirmed as noncompliant, providing an independent validation of these cases. The remaining four were selected by the tax authority for follow-up inspection.}
\end{abstract}

\noindent\textbf{Keywords:} Bayesian classification; Gaussian processes; MCMC; multilayer network; positive--unlabeled learning; product of experts.

\section{Introduction} \label{sec:intro}
 
Tax fraud and evasion, often facilitated by complex schemes involving underreporting, shell companies, and illicit trade, undermine governments' capacity to fund essential public services such as healthcare, education, and infrastructure. In the European Union alone, annual tax losses have been estimated at up to~\EUR{1}~trillion \citep{european_commission_2017,european_commission_2022}. Excise duties on high value, easily traded goods such as alcohol, tobacco, and fuel are especially exposed, with evasion often relying on underreporting and coordinated schemes among connected firms \citep{ramos2024impact}. Tax authorities increasingly rely on statistical and machine learning methods to detect such activity, but their efforts are often constrained by severe class imbalance and limited reliable labels. Recent studies use machine learning to improve administrative audit selection \citep{battaglini2024refining}, semi supervised and unsupervised methods to address label scarcity \citep{kleanthous2020gated,vanhoeyveld2020value}, and taxpayer networks to detect VAT fraud \citep{alexopoulos2025network}. Audit resources are limited, so fraud is confirmed for only a small share of firms, while many cases remain undetected, which limits standard supervised learning in tax and customs enforcement. Networks offer a complementary source of evidence, since a firm controls the figures it reports but not the counterparties it transacts with. \textcolor{black}{Such relational information may arise from sources such as transaction links, ownership relations, shared business services, or geographic proximity.}
 
Greece provides a case study with direct policy relevance. \textcolor{black}{The Independent Authority for Public Revenue (IAPR) operates nationwide monitoring systems, recording electronically daily firm activity such as purchases and sales, together with administrative information on firm characteristics, audit histories, and interfirm relationships such as common board members, ownership links, and transaction networks.} Two main challenges arise. First, the few confirmed fraud cases give reliable positive labels, but firms without detected fraud cannot be assumed compliant, since they may not have been audited or fraudulent activity may be concealed or be missed during an audit, so they are treated as unlabeled rather than verified negatives. Second, reported variables \textcolor{black}{can be manipulated to evade the controls embedded in the electronic systems of the authorities}. Such behavior may appear plausible at the firm level but suspicious when viewed through the network of economic relationships.

Graph neural networks have become a prominent approach for combining node attributes with relational structure \citep{zhou2020graph}, learning representations that exploit both sources jointly. Their performance, however, typically depends on large volumes of reliable labels and extensive tuning, conditions rarely met in tax enforcement. Gaussian process (GP) classifiers instead provide a flexible nonparametric approach for learning complex decision boundaries with quantified uncertainty \citep{williams1998bayesian}. These formulations operate on covariates in Euclidean space, while GPs have since been extended to networks. Other examples are the Graph Gaussian Process of \citet{ng2018ggp}, which averages latent functions over one hop neighborhoods and learns from few labels, and Matérn type kernels on discrete spaces \citep{borovitskiy2021matern}, which enable relational smoothing on graphs.

Most GP approaches assume fully reliable labels or treat covariates and network data in isolation. Positive Unlabeled (PU) learning addresses the first limitation in settings where only a subset of positives is labeled and the remaining cases may belong to either class \citep{bekker2020learning}. Under the common Selected Completely At Random (SCAR) assumption, labeled positives form a random sample of all true positives \citep{elkan2008learning}, so treating unlabeled cases as negatives introduces systematic label noise. Much of the PU literature addresses this through risk correction. Unbiased PU methods recover the supervised classification risk \citep{duplessis2015convex}, while the nonnegative correction of \citet{kiryo2017nnpu} prevents the estimated negative class risk from becoming negative, thereby reducing severe overfitting with flexible classifiers. Other strategies include Bayesian formulations \citep{he2020bayesianPU,wang2023aoas} and AdaSampling \citep{yang2018adasampling}, which iteratively resamples a candidate negative training set from the unlabeled cases using the current classifier probabilities. Few methods combine an explicit model of the one sided labeling mechanism with heterogeneous covariate and network information in a fully Bayesian framework.

The structure of the application suggests that these two challenges should be addressed together. Covariates and network layers provide distinct but complementary evidence about the same latent fraud status: reported firm characteristics describe the unit itself, while information on relations may reveal suspicious patterns that are not visible from those characteristics alone. It is therefore natural to retain source-specific components rather than collapse all information into a single representation, while allowing evidence from the different sources to complement one another. At the same time, the small number of confirmed positives and the uncertainty of the remaining labels make an additional observation-specific gating mechanism difficult to learn reliably. This motivates a parsimonious construction in which all sources contribute to a common latent risk, while their relative influence is allowed to adapt through source-specific GP components. Therefore, we propose a Bayesian PU classification model for fraud detection that combines covariates with multilayer network data through a Product of Experts (PoE) construction. Each source contributes a separate GP expert, and the expert probabilities are combined multiplicatively. While PoE has traditionally been used to scale GPs through distributed or local experts \citep{liu2020gaussian}, we repurpose it to integrate heterogeneous sources of evidence about a common latent class under one sided label uncertainty.

Our contributions are threefold. First, we formulate a single model in which the covariate expert, the network experts, their mean functions, and the false negative rate are estimated jointly. It yields posterior fraud probabilities for unlabeled firms and a posterior distribution for the number of undetected fraud cases. The model also decomposes each firm's latent log odds exactly into a baseline term and source specific contributions, showing what drives its risk assessment. Second, we derive identification bounds for the false negative rate under an anchor condition and show that, for a fixed latent risk function, a common false negative rate rescales the scores by the same positive factor and therefore preserves the ROC curve and AUC. Third, the model admits an exact single GP representation with additive means and covariances. We exploit this to adapt the auxiliary gradient sampler of \citet{titsias2018auxiliary}, updating the latent field jointly with all kernel hyperparameters. Conditional on the aggregate covariance, the cost of the latent field refresh does not grow with the number of experts.

\textcolor{black}{In our application to audit data from a regulated segment of the Greek energy market, the model identifies high risk firms, including some for which relational information carries most of the evidence, with multiple relational layers providing complementary signals. Of the six highest ranked firms, two had already been reinspected independently by the tax authority and were confirmed as noncompliant, providing an independent retrospective validation of these cases. The remaining four were selected by the authority for follow-up inspection after it reviewed our results.} Simulations under both random and covariate dependent labeling show strong ranking performance relative to PU and network based alternatives, together with quantified posterior uncertainty.
 
The rest of the paper is organized as follows. Section~\ref{sec:model} introduces the PU--PoE model and the design-referenced prior calibration. Section~\ref{sec:eta} studies the identification of the noise rate
and posterior inference for the number of hidden positives, while
Section~\ref{sec:comp} presents the sampler.  Sections~\ref{sec:sim} and~\ref{sec:appl} present the simulation study and the tax-audit application, and Section~\ref{sec:disc} concludes this work. Complete proofs, prior derivations, computational details, and additional empirical results are provided in Appendices~\ref{supp:proofs}--\ref{supp:application}.

\section{The proposed modeling framework} \label{sec:model}

\subsection{Product of experts likelihood}
\label{subsec:poe-model}

Let $T_i\in\{0,1\}$ denote the true and generally unobserved class of unit $i$, for $i=1,\ldots,n$, with $T_i=1$ indicating a fraudulent unit. We observe $m+1$ sources of information, a vector of covariates and $m$ network layers. Each source is modeled by a Gaussian process expert $\mathcal M_j$, for $j=0,\ldots,m$, where $j=0$ is the covariate expert and $j\ge1$ are the network experts. Expert $\mathcal M_j$ carries a latent vector $\mathbf x_j=(x_{j,1},\ldots,x_{j,n})^\top$, and we write $p_{j,i}=\sigma(x_{j,i})$ with $\sigma(x)=\{1+\exp(-x)\}^{-1}$, so that $x_{j,i}$ is the evidence of source $j$ about unit $i$ on the log odds scale.

The Product of Experts construction \citep{hinton2002training} combines the source specific Bernoulli experts through normalized
product pooling with $\Pr(T_i=1\mid x_{0:m,i})\propto\prod_{j=0}^{m}p_{j,i}$ and $\Pr(T_i=0\mid x_{0:m,i})\propto\prod_{j=0}^{m}(1-p_{j,i})$. We add a term $\beta_0$ to account for the imbalance due to low rate of fraud, which is analogous to $\log\{\Pr(T_i=1)/\Pr(T_i=0)\}$, so that the PoE likelihood becomes
\begin{equation}
\label{eq:poe-prob}
\Pr(T_i=1\mid x_{0:m,i},\beta_0)
=
\frac{\exp(\beta_0)\prod_{j=0}^{m}p_{j,i}}
     {\exp(\beta_0)\prod_{j=0}^{m}p_{j,i}
      +\prod_{j=0}^{m}(1-p_{j,i})},
\qquad i=1,\ldots,n.
\end{equation}
The intercept $\beta_0$ can be read as one additional constant expert, so that $\sigma(\beta_0)$ is the fraud risk of a unit for which every source contributes zero. 

\begin{remark}
\label{lemma:poe-logistic}
Let $f_i=\beta_0+\sum_{j=0}^{m}x_{j,i}$ and $\mathbf f=(f_1,\ldots,f_n)^\top$. Then \eqref{eq:poe-prob} simplifies to
$\Pr(T_i=1\mid x_{0:m,i},\beta_0)= \Pr(T_i=1\mid f_i)=\sigma(f_i)$.
\end{remark}

Remark~\ref{lemma:poe-logistic} is the additive log odds form of product pooling. The heterogeneous sources combine into a single Bernoulli model whose latent logit is a common baseline plus one term
per source, and we write $p_i=\sigma(f_i)$ for the resulting fraud probability of unit $i$. \textcolor{black}{Equivalently, at the latent logit level the construction is an additive GP classifier, with one GP component for each information source.} Beyond this convenient representation, product pooling combines the evidence of all experts into a single joint probability, rather than smoothing across them by averaging as a mixture of experts does. The relative influence of each source is learned through the amplitude and mean scale of its GP prior, introduced in Section~\ref{subsec:prior}. This is appropriate in our setting, where the sources provide complementary evidence about a common latent fraud status, rather than representing competing models from which one source is to be selected for each unit. 

\subsection{One sided label noise}
\label{subsec:pu-model}

The class $T_i$ is not observed. What is recorded is an audit label $Y_i\in\{0,1\}$, equal to one when unit $i$ has been confirmed as fraudulent and zero otherwise. Confirmation is reliable, so $Y_i=1$ implies $T_i=1$, but the converse fails, since fraud may go undetected either because the unit was never audited or because the audit did not uncover it. A zero label therefore carries no guarantee of compliance, and units with $Y_i=0$ are unlabeled rather than verified negatives.

We model this one sided labeling mechanism under the Selected Completely At Random assumption \citep{elkan2008learning,bekker2020learning}, in which a true positive fails to be recorded with a false negative rate $\eta\in[0,1)$ that does not depend on the unit, so that $\Pr(Y_i=1\mid T_i=1)=1-\eta$ and $\Pr(Y_i=0\mid T_i=1)=\eta$, while a true negative is never recorded as positive, $\Pr(Y_i=1\mid T_i=0)=0$. Marginalizing over the latent class then gives $\Pr(Y_i=1\mid f_i,\eta)=(1-\eta)p_i$ and $\Pr(Y_i=0\mid f_i,\eta)=1-(1-\eta)p_i$, so that the observed label log likelihood is
\begin{equation}
\label{eq:pu-loglik}
\mathcal L(\mathbf f;\eta)
= \log p(\mathbf y\mid\mathbf f,\eta)
=
\sum_{i=1}^{n}
\Big[
y_i\log\big((1-\eta)p_i\big)
+
(1-y_i)\log\big(1-(1-\eta)p_i\big)
\Big],
\end{equation}
where $\mathbf y=(y_1,\ldots,y_n)^\top$ collects the realized audit labels. Setting $\eta=0$ recovers the standard Bernoulli logistic likelihood, in which every zero is treated as a confirmed negative. For $\eta>0$ the zeros retain some probability of being undetected positives, which attenuates the evidence they provide against the positive class in proportion to $\eta$. The rate $\eta$ is not fixed in advance but estimated jointly with the latent field, and Section~\ref{sec:eta} shows what the observed labels can and cannot reveal about it.

\subsection{Prior specification}
\label{subsec:prior}

Starting from the experts, we assign each expert an independent Gaussian process prior with a source specific mean and covariance kernel. The covariance represents similarity at the natural scale of each source, Euclidean for covariates and relational for networks, while the experts are coupled through the common likelihood. Regarding the mean function, GP classifiers commonly use a zero mean prior. In our setting, however, labels are sparse, so the posterior latent function is pulled toward the prior mean in weakly informed regions, making the usual zero-mean specification potentially restrictive. We therefore use a structured mean to capture broad source specific patterns, leaving the covariance kernels to model residual nonlinear variation in covariate space and residual dependence over the network structure \citep{rasmussen2006gaussian}. 

For the covariate expert ($j=0$), let $\mathbf s_i\in\mathbb R^p$ denote the standardized covariates of unit $i$, stacked in the design matrix $\mathbf Z_0=[\mathbf s_1,\ldots,\mathbf s_n]^\top\in\mathbb R^{n\times p}$, and set
\[
\mathbf x_0 \sim \mathcal N(\boldsymbol\mu_0,\mathbf C_0),
\qquad
\boldsymbol\mu_0=\mathbf Z_0\boldsymbol\gamma_0,
\qquad
[\mathbf C_0]_{ij}
=
\sigma_0^2
\exp\!\left(-\frac{\|\mathbf s_i-\mathbf s_j\|^2}{2l_0^2}\right).
\]
The mean captures the linear covariate effect through the coefficients $\boldsymbol\gamma_0\in\mathbb R^p$, and the squared exponential covariance captures smooth nonlinear deviations from it, over a range of covariate distances set by $l_0$.

For each network layer $j\ge1$, let $\mathbf L^{(j)}_{\mathrm{norm}}=\mathbf U^{(j)}\boldsymbol\Lambda^{(j)}\mathbf U^{(j)\top}$ be the eigendecomposition of the normalized Laplacian of the layer. Following \citet{borovitskiy2021matern}, we use a spectral heat kernel built from this Laplacian, with a community informed mean,
\[
\mathbf x_j \sim \mathcal N(\boldsymbol\mu_j,\mathbf C_j),
\qquad
\boldsymbol\mu_j=\mathbf Z_j\boldsymbol\gamma_j,
\qquad
\mathbf C_j
=
\sigma_j^2
\mathbf U^{(j)}
\exp\!\Big(-\tfrac{1}{2l_j^2}\boldsymbol\Lambda^{(j)}\Big)
\mathbf U^{(j)\top},
\]
where $\mathbf Z_j\in\mathbb R^{n\times q_j}$ contains the leading eigenvectors of the modularity matrix of layer $j$ \citep{newman2006modularity}, retained up to the largest gap among its positive eigenvalues, and $\boldsymbol\gamma_j\in\mathbb R^{q_j}$ are the corresponding coefficients. Nodes with no edges receive an independent contribution with variance $\sigma_j^2\exp\{-1/(2l_j^2)\}$. Here $l_j$ sets how far risk propagates over the graph, with small values concentrating the prior variance on the low frequency eigenvectors and producing smooth global variation, and large values driving $\mathbf C_j$ towards $\sigma_j^2\mathbf I$.

For the mean coefficients we use a random effects formulation, $\gamma_{j,k}\mid\tau_j\sim\mathcal N(0,\tau_j^2)$ for $k=1,\ldots,q_j$, with $q_0=p$ for the covariate expert, and a block scale $\tau_j\sim\mathrm{Half\text{-}}t_{\nu}(0,\lambda_j)$, which shrinks weak blocks towards zero while its heavy tail lets strong ones escape. For the kernel parameters we set $\log l_j\sim\mathcal N(\mu_{l,j},s_{l,j}^2)$ and $\log\sigma_j\sim\mathcal N(\mu_{\sigma,j},s_{\sigma,j}^2)$, keeping both positive over a wide range of values. Finally, $\beta_0\sim\mathcal N(m_{\beta_0},s_{\beta_0}^2)$ and $\eta\sim\mathrm{Beta}(a_\eta,b_\eta)$, the natural choice on the unit interval. 
\textcolor{black}{We use a weakly informative prior for $\eta$, taking $a_\eta,b_\eta\geq1$, which by Remark~\ref{rem:eta-concave} keeps the full conditional log posterior concave. Within this range we set $a_\eta=2$, the smallest integer above one, while $b_\eta$ controls the upper quantiles of the prior. When the likelihood provides limited information about $\eta$, e.g., under violations of the SCAR assumption, this prior discourages posterior concentration near the boundary at zero. Based on the operational information discussed in Section~\ref{sec:data}, we choose $b_\eta$ so that the prior $95$th percentile equals $0.5$, which gives our main prior $\eta\sim\mathrm{Beta}(2,6.39)$ with mode $0.156$. In this way, we follow a fully Bayesian approach that incorporates the information available on the audit process.}

\textcolor{black}{We keep the same fully Bayesian approach for the GP experts, whose hyperparameters are calibrated from the covariate and network designs alone, before any labels are used, while prior information is incorporated for the remaining parameters when available. In general, we keep the priors weakly informative, since vague choices would leave the flat directions of the likelihood unresolved while sharp ones would displace the evidence in the labels.} Appendix~\ref{supp:prior-elicitation} gives the full elicitation procedure together with the values used in each analysis.

\subsection{Illustrative example}
\label{sec:illustration}
 
We use a controlled example to show what each component contributes. The data contain $n=100$ units, $70$ true negatives and $30$ true positives, of which $20$ are recorded and $10$ are hidden among the zero labels, so $\eta=1/3$. Each unit carries ten covariates drawn from $\mathcal N(0,1)$ for the negatives and $\mathcal N(0.5,1)$ for the positives, and the network is a two block stochastic block model on the same partition, with edge probabilities $0.10$ within the negatives, $0.16$ within the positives and $0.05$ across. The true classes only group the units in the figure and are never supplied to any model.
 
Four models are fitted, a covariates only model, a network only model, a PoE model combining the two experts, and the PU--PoE model that additionally estimates $\eta$. Priors are calibrated as described in Appendix~\ref{supp:prior-elicitation}. The latent field has median effective scale $2$ in every model, given to the single expert in the first two and split equally in the last two, so that the four are directly comparable. Each lengthscale prior spans the $10\%$ to $50\%$ points of the attainable average correlation range of its own design, $\beta_0$ places the baseline risk between $0.01$ and $0.10$, and $\eta\sim\mathrm{Beta}(2,6.39)$.
 
Figure~\ref{fig:illustration} reports posterior means and $80\%$ HPD intervals of $\sigma(f_i)$, with units ordered within each group by the PU--PoE posterior mean and the same ordering across all panels. The PoE model combines the evidence of the two experts, and the PU extension raises the probabilities that are already moderately high while leaving the low ones almost unchanged, so the hidden positives stand out without the ordering being disturbed. The posterior mean of $\eta$ is $0.230$ with $80\%$ HPD interval $(0.062, 0.357)$, against a prior mean of $0.238$ and a true value of $1/3$, illustrating that the noise rate is less precisely identified than the risk ordering. Appendix~\ref{supp:prior-sensitivity} \textcolor{black}{examines prior sensitivity over $100$ datasets generated from the same design, and Appendix~\ref{supp:illustration-experts} decomposes the PU--PoE fit into its two experts}.

\begin{figure}[tbp]
  \centering
  \includegraphics[width=0.925\linewidth]{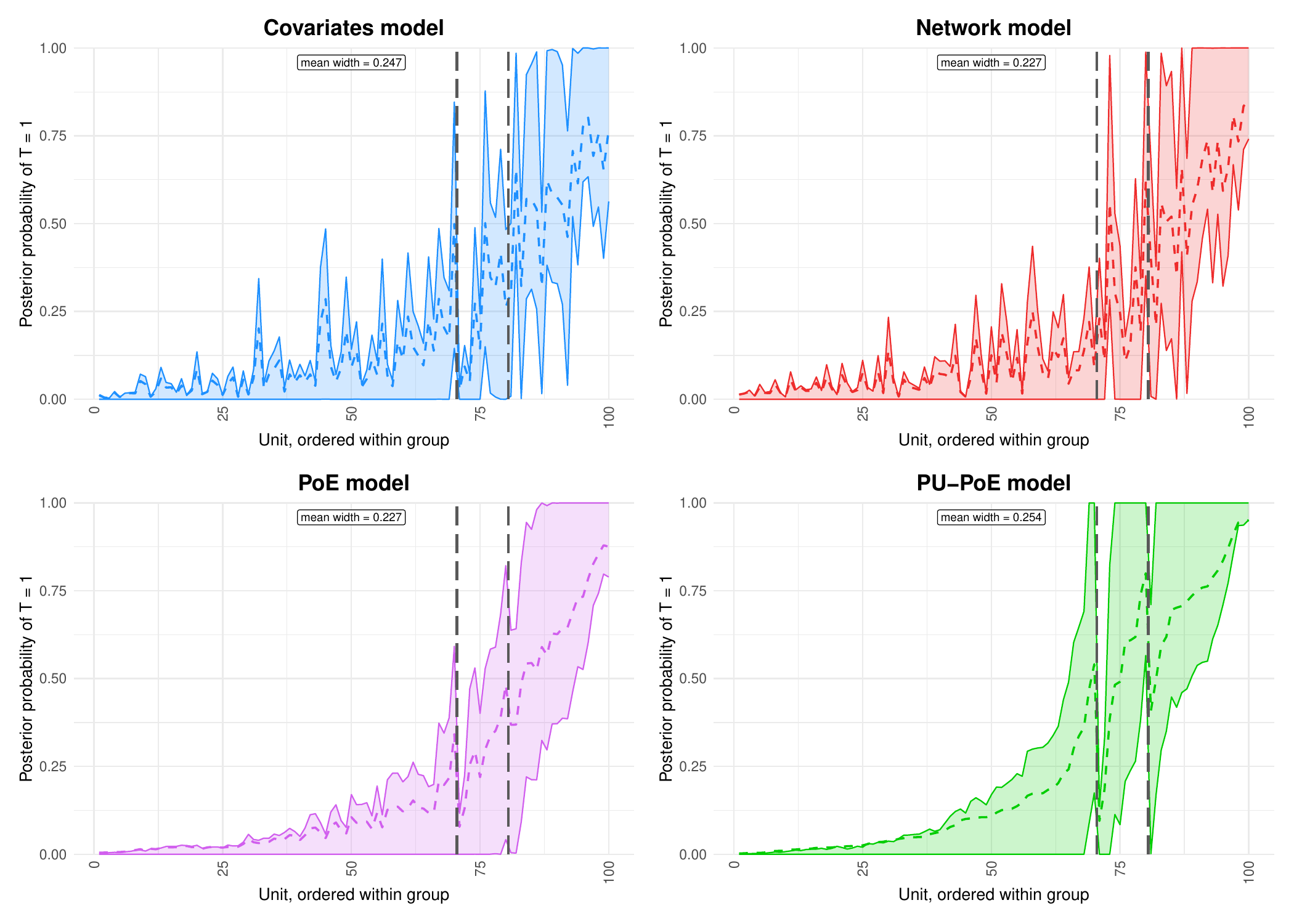}
\caption{Posterior means (dashed) and $80\%$ HPD intervals (shaded) of the latent positive probabilities under the four models. In each panel, the two vertical lines separate the true negatives (left), the hidden positives (middle) and the observed positives (right), and units are ordered within each group by the PU--PoE posterior mean, with the same order in all panels.}
  \label{fig:illustration}
\end{figure}

\section{Identification and inference under one-sided label noise}
\label{sec:eta}
 
Inference on the false-negative rate $\eta$ is important for calibrating latent fraud probabilities and quantifying undetected positives, as it governs the relationship between observed labels and latent fraud status. Let $q(x)=(1-\eta)\sigma(f(x))$ denote the observed conditional success probability. Identifiability of $(f,\eta)$ means that if two pairs $(f,\eta)$ and $(f_0,\eta_0)$ induce the same $q(x)$ almost everywhere, then they coincide $P_X$-almost everywhere. The result below bounds $\eta$ within an interval when the anchor holds approximately, and identifies it exactly when $\operatorname*{ess\,sup}_{x}\sigma(f(x))=1$. All essential suprema are taken with respect to $P_X$.

\begin{assumption}
\label{ass:lower-anchor}
There exists a known $\delta\in[0,1)$ such that
$\operatorname*{ess\,sup}_{x}\sigma(f(x))\ge1-\delta$.
\end{assumption}
 
\begin{lemma}
\label{lem:ID-ROC}
Suppose $Y\mid X\sim\mathrm{Ber}((1-\eta)\sigma(f(X)))$ with $\eta\in[0,1)$.
\begin{enumerate}
\item[(a)] Under Assumption~\ref{ass:lower-anchor},
\begin{equation}
\label{eq:eta-bounds}
\eta\in
\left[
\max\left\{0,1-
\frac{\operatorname*{ess\,sup}_xq(x)}{1-\delta}
\right\},
\;
1-\operatorname*{ess\,sup}_xq(x)
\right].
\end{equation}
When $\delta=0$, so that $\operatorname*{ess\,sup}_{x}\sigma(f(x))=1$, the two bounds coincide, and $\sigma(f(x))=q(x)/(1-\eta)$ $P_X$-almost everywhere.
\item[(b)] For fixed $f$, the score vectors
$\{\sigma(f_i)\}_{i=1}^n$ and
$\{(1-\eta)\sigma(f_i)\}_{i=1}^n$ induce the same ROC curve and
hence the same AUC.
\end{enumerate}
\end{lemma}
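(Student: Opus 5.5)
The whole argument rests on one identity. Since $q(x)=(1-\eta)\sigma(f(x))$ and $1-\eta>0$ is a constant, essential suprema factor:
\[
\operatorname*{ess\,sup}_x q(x)=(1-\eta)\,\operatorname*{ess\,sup}_x\sigma(f(x)).
\]
To justify this I will use the definition of the essential supremum: for a positive constant $c$, the set $\{cg>t\}$ equals $\{g>t/c\}$, so $\operatorname*{ess\,sup}(cg)=c\operatorname*{ess\,sup}g$. Write $S=\operatorname*{ess\,sup}_x\sigma(f(x))$ and $Q=\operatorname*{ess\,sup}_xq(x)$, so that $Q=(1-\eta)S$.

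For part (a), I combine this identity with the two bounds on $S$.
\begin{itemize}
\item \textbf{Upper bound.} Because $\sigma<1$, we have $S\le1$. Hence $Q\le1-\eta$, which gives $\eta\le1-Q$.
\item \textbf{Lower bound.} Assumption~\ref{ass:lower-anchor} gives $S\ge1-\delta>0$. Then $1-\eta=Q/S\le Q/(1-\delta)$, so $\eta\ge1-Q/(1-\delta)$. Taking the maximum with $0$ uses only $\eta\in[0,1)$.
\item \textbf{Anchor case $\delta=0$.} The two endpoints both equal $1-Q$, so $\eta=1-Q$ is determined by $q$ alone. The representation $\sigma(f(x))=q(x)/(1-\eta)$ then holds pointwise, hence $P_X$-a.e., by dividing by $1-\eta\neq0$.
\end{itemize}
I will also note the identifiability consequence. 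If $(f,\eta)$ and $(f_0,\eta_0)$ both satisfy the exact anchor and induce the same $q$ a.e., then $\eta=\eta_0=1-Q$. It follows that $\sigma\circ f=\sigma\circ f_0$ a.e., and since $\sigma$ is injective, $f=f_0$ a.e.

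For part (b), I use the fact that the ROC curve depends on the scores only through the ordering they induce. For any threshold $t$, the empirical rates are
\[
\mathrm{TPR}(t)=\#\{i:T_i=1,\ s_i>t\}/n_1,\qquad \mathrm{FPR}(t)=\#\{i:T_i=0,\ s_i>t\}/n_0 .
\]
Multiplying every score by $c=1-\eta>0$ gives $\{cs_i>ct\}=\{s_i>t\}$. So the map $t\mapsto ct$ is a bijection between thresholds that preserves both rates, and the set of ROC points $\{(\mathrm{FPR}(t),\mathrm{TPR}(t))\}$ is unchanged. The AUC is then unchanged as well. Equivalently, AUC is the Mann–Whitney statistic $\frac{1}{n_0n_1}\sum_{i:T_i=1}\sum_{k:T_k=0}\bigl[\mathbbm 1\{s_i>s_k\}+\tfrac12\mathbbm1\{s_i=s_k\}\bigr]$, which is invariant under any strictly increasing transformation, and $s\mapsto cs$ is one.

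No step is a real obstacle. The only point needing care is the rigorous factoring of the essential supremum by the positive constant in part (a), together with the fact that $S\le1$ holds even though the supremum need not be attained.
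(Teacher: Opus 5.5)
Your proposal is correct and follows the paper's proof essentially step for step. Part (a) rests on the identity $\operatorname*{ess\,sup}_x q(x)=(1-\eta)\operatorname*{ess\,sup}_x\sigma(f(x))$ together with $1-\delta\le\operatorname*{ess\,sup}_x\sigma(f(x))\le 1$, and part (b) rests on order preservation under multiplication by the common factor $1-\eta>0$. Your explicit justification of the scaling of the essential supremum, the threshold bijection and Mann--Whitney form of the AUC, and the identifiability remark only fill in details the paper leaves implicit.
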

 
Part (a) brackets $\eta$ within an interval of width at most
$\operatorname*{ess\,sup}_{x}q(x)\,\delta/(1-\delta)$, which reduces
to a point when $\delta=0$. In the application, we examine whether
the fitted risks are compatible with the anchor condition. Part (b) shows that, for a fixed $f$, a single $\eta$ shared by all units changes calibration but not ranking.
 
Posterior draws of $(\mathbf f,\eta)$ also quantify the number of positives hidden among the zero-labeled units. Let $H=\sum_{i:y_i=0}\mathbf 1\{T_i=1\}$. For each zero-labeled unit, write
\[
h_i
=
\Pr(T_i=1\mid Y_i=0,f_i,\eta)
=
\frac{\eta\sigma(f_i)}{1-(1-\eta)\sigma(f_i)}.
\]
Hence the coherent posterior mean is
\begin{equation}
\label{eq:H-posterior}
\mathbb E(H\mid\mathbf y)
=
\sum_{i:y_i=0}
\mathbb E(h_i\mid\mathbf y).
\end{equation}
Posterior draws of $H$ that take integer values are obtained by drawing $T_i^{(s)}\sim\mathrm{Bernoulli}(h_i^{(s)})$ independently across units, conditional on each posterior draw, and summing over the units for which the observed label is zero.

\textcolor{black}{We also report the estimated number of hidden positives based on $\eta$ under SCAR. If $L=\sum_i y_i$ and $N^+=L+H$, then $L\mid \left( N^+,\eta \right) \sim\mathrm{Binomial}(N^+,1-\eta)$. Replacing $N^+$ by $L/(1-\eta)$ gives, for posterior draw $s$, $H_\eta^{(s)}=L\frac{\eta^{(s)}}{1-\eta^{(s)}}$, with posterior mean
\begin{equation} \label{eq:H-estimator}
\widehat H_{\eta}=\frac{1}{S}\sum_{s=1}^S H_\eta^{(s)}.
\end{equation}
We distinguish this SCAR-implied count from the posterior quantity in \eqref{eq:H-posterior}, which uses the risk of each individual unit.} The latter is the coherent posterior target, whereas $\widehat H_\eta$ provides a simple diagnostic of the overall prevalence implied by the SCAR assumption.

\section{Posterior computation}
\label{sec:comp}

Let $\boldsymbol\theta=\{(\sigma_j,l_j)_{j=0}^m\}$ collect the kernel hyperparameters. By independence of the GP experts, the conditional prior of $\mathbf{f}$ is
\[
\mathbf f\mid \left( \beta_0,\boldsymbol\gamma,\boldsymbol\theta \right)
\sim
\mathcal N(\boldsymbol\mu_{\mathbf f},\mathbf C_{\mathbf f}),
\]
with $\boldsymbol\mu_{\mathbf f}
=
\beta_0\mathbf 1_n+\sum_{j=0}^m\mathbf Z_j\boldsymbol\gamma_j$ and
$\mathbf C_{\mathbf f}=\sum_{j=0}^m\mathbf C_j(\boldsymbol\theta_j)$. Thus, although the model is formulated in terms of multiple source-specific experts, posterior computation for the latent field can be carried out using a single GP with additive mean and covariance.
The joint posterior is
\begin{equation}
\label{eq:post-target}
\pi(\mathbf f,\boldsymbol\theta,\beta_0,\boldsymbol\gamma,
\boldsymbol\tau,\eta\mid\mathbf y)
\propto
p(\mathbf y\mid\mathbf f,\eta)
\mathcal N(\mathbf f\mid\boldsymbol\mu_{\mathbf f},\mathbf C_{\mathbf f})
\pi(\boldsymbol\theta)\pi(\beta_0)
\pi(\boldsymbol\gamma\mid\boldsymbol\tau)
\pi(\boldsymbol\tau)\pi(\eta).
\end{equation}
 
The sampler cycles through four blocks. The pair $(\mathbf f,\boldsymbol\theta)$ is updated jointly by an auxiliary-gradient move followed by an $\mathbf f$-only refresh at the accepted kernel, $(\beta_0,\boldsymbol\gamma)$ are drawn exactly from their Gaussian full conditional, each Half-$t$ scale $\tau_j$ and its inverse-gamma auxiliary are updated in closed form, and $\eta$ is updated by slice sampling on the logit scale. Appendix~\ref{supp:mcmc-details} gives the conditional formulae. Remark~\ref{rem:eta-concave} shows that the conditional posterior of $\eta$ has a unique mode.

\begin{remark}
\label{rem:eta-concave}
Under the PU likelihood in \eqref{eq:pu-loglik} and a
$\mathrm{Beta}(a_\eta,b_\eta)$ prior with
$a_\eta,b_\eta\ge1$, the full conditional log-posterior of
$\eta$ is strictly concave on $(0,1)$ for every finite
$\mathbf f$.
\end{remark}

The main computational challenge is updating the high dimensional latent field $\mathbf f$ under the non-Gaussian PU likelihood. For $p_i=\sigma(f_i)$, the log likelihood gradient has entries
\begin{equation}
\label{eq:pu-gradient}
[\nabla_{\mathbf f}\mathcal L(\mathbf f;\eta)]_i
=
y_i(1-p_i)
-(1-y_i)
\frac{(1-\eta)p_i(1-p_i)}
{1-(1-\eta)p_i},
\end{equation}
which reduces to $y_i-p_i$ when $\eta=0$.

\begin{remark}
\label{rem:geometry-f}
If $\eta>0$ and at least one $y_i=0$, the PU log-likelihood is not globally concave in $\mathbf f$.
\end{remark}

Since the target is not necessarily log-concave, we use the auxiliary-gradient sampler aGrad-$z$ of \citet{titsias2018auxiliary}, which exploits the Gaussian prior of $\mathbf f$ while following the local gradient of the likelihood. Writing $\mathbf h=\mathbf f-\boldsymbol\mu_{\mathbf f}$ for the centered field and fixing a stepsize $\delta>0$, one update draws an auxiliary vector and then a proposal,
\begin{equation}
\label{eq:agrad}
\mathbf z\sim
\mathcal N\!\Big(
\mathbf h+\dfrac{\delta}{2}\nabla\mathcal L(\mathbf f;\eta),
\ \dfrac{\delta}{2}\mathbf I
\Big),
\qquad
\mathbf h'\sim
\mathcal N\!\Big(
\dfrac{2}{\delta}\mathbf A\mathbf z,
\ \mathbf A
\Big),
\qquad
\mathbf A=\Big(
\mathbf C_{\mathbf f}^{-1}
+\dfrac{2}{\delta}\mathbf I
\Big)^{-1}.
\end{equation}
The Gaussian prior cancels, and the proposal is accepted with probability
\begin{equation}
\label{eq:agrad-acceptance}
\alpha_{\mathrm{MH}}^{(\mathbf f)}
=
\min\left\{
1,
\exp\left[
\mathcal L(\mathbf f';\eta)-\mathcal L(\mathbf f;\eta)
+
g(\mathbf z,\mathbf h';\mathbf f')
-
g(\mathbf z,\mathbf h;\mathbf f)
\right]
\right\},
\end{equation}
where
$g(\mathbf z,\mathbf u;\mathbf v)
=
\{\mathbf z-\mathbf u-(\dfrac{\delta}{4})
\nabla\mathcal L(\mathbf v;\eta)\}^{\top}
\nabla\mathcal L(\mathbf v;\eta)$.

\begin{proposition}
\label{prop:poe-cost}
Conditionally on $(\beta_0,\boldsymbol\gamma,\boldsymbol\theta,\eta)$, the aGrad-$z$ update of $\mathbf f$ is valid for the PU--PoE model. Once $\mathbf C_{\mathbf f}$ has been formed and factorized, the cost of updating $\mathbf f$ is independent of the number of experts.
\end{proposition}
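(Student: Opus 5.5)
The argument has two parts, validity and cost, and both rest on the single-GP representation: conditionally on $(\beta_0,\boldsymbol\gamma,\boldsymbol\theta)$, $\mathbf f\sim\mathcal N(\boldsymbol\mu_{\mathbf f},\mathbf C_{\mathbf f})$ with $\mathbf C_{\mathbf f}=\sum_j\mathbf C_j$. Independence of the experts gives this, since a sum of independent Gaussian vectors is Gaussian with summed means and covariances. By Remark~\ref{lemma:poe-logistic}, the likelihood in \eqref{eq:pu-loglik} depends on the experts only through $\mathbf f$. So the conditional target for $\mathbf f$ is $p(\mathbf y\mid\mathbf f,\eta)\,\mathcal N(\mathbf f\mid\boldsymbol\mu_{\mathbf f},\mathbf C_{\mathbf f})$, which is exactly the latent Gaussian model for which aGrad-$z$ of \citet{titsias2018auxiliary} was designed. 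The expert-level vectors $\mathbf x_j$ never need to be updated individually, because the likelihood does not distinguish them.

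\textbf{Validity.} Augment the target with $\mathbf z\mid\mathbf h\sim\mathcal N(\mathbf h+\tfrac{\delta}{2}\nabla\mathcal L,\tfrac{\delta}{2}\mathbf I)$. I would instead use the equivalent augmentation $\mathbf z\mid\mathbf h\sim\mathcal N(\mathbf h,\tfrac{\delta}{2}\mathbf I)$ and then approximate the likelihood to first order around the current $\mathbf f$.
\begin{enumerate}
\item The Gaussian prior combined with $\mathcal N(\mathbf z\mid\mathbf h,\tfrac{\delta}{2}\mathbf I)$ gives the exact conditional $\mathbf h\mid\mathbf z\sim\mathcal N(\tfrac{2}{\delta}\mathbf A\mathbf z,\mathbf A)$. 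This is the proposal in \eqref{eq:agrad}.
\item Form the Metropolis--Hastings ratio for the move $(\mathbf h,\mathbf z)\to(\mathbf h',\mathbf z)$ on the extended space. The prior and the $\mathcal N(\mathbf z\mid\mathbf h,\cdot)$ factors cancel against the proposal density.
\item What remains is the likelihood ratio together with the ratio of the gradient-tilted auxiliary densities. Expanding the quadratic forms should yield exactly $g(\mathbf z,\mathbf h';\mathbf f')-g(\mathbf z,\mathbf h;\mathbf f)$. This gives \eqref{eq:agrad-acceptance} and detailed balance with respect to $\pi(\mathbf f\mid\cdot)$.
\end{enumerate}
For this to go through, $\mathcal L$ must be differentiable and $\mathbf C_{\mathbf f}$ must be positive definite.
\begin{itemize}
\item Differentiability holds because $\eta<1$ keeps $(1-\eta)p_i\in(0,1)$, so \eqref{eq:pu-gradient} is finite and smooth.
\item Positive definiteness holds because $\mathbf C_0$ (squared exponential on distinct points) is positive definite and each $\mathbf C_j$ is positive semidefinite. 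I would note this, or add jitter.
\end{itemize}
Log-concavity is never used, so Remark~\ref{rem:geometry-f} does not obstruct the argument.

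\textbf{Cost.} Once $\mathbf C_{\mathbf f}=\mathbf Q\mathbf D\mathbf Q^\top$ (or a Cholesky factor) is available, the matrix $\mathbf A=\mathbf Q\,\mathrm{diag}\{d_k/(1+2d_k/\delta)\}\mathbf Q^\top$ follows for any $\delta$ at cost $O(n^2)$ or $O(n^3)$ once. The per-iteration operations are then:
\begin{itemize}
\item evaluating $\nabla\mathcal L$ and $\mathcal L$ at $O(n)$ cost;
\item drawing $\mathbf z$ at $O(n)$ cost;
\item computing $\mathbf A\mathbf z$ and sampling $\mathcal N(\cdot,\mathbf A)$ through the factor, at $O(n^2)$ cost;
\item evaluating $g$ at $O(n)$ cost.
\end{itemize}
None of these involve $m$, since the experts enter only through the already aggregated $\boldsymbol\mu_{\mathbf f}$ and $\mathbf C_{\mathbf f}$. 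The per-update cost is therefore $O(n^2)$ given the factorization, independent of the number of experts; forming $\mathbf C_{\mathbf f}$ itself is the $O(mn^2)$ step excluded by the hypothesis.

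\textbf{Main obstacle.} The delicate step is the algebra in the validity argument showing that the tilted auxiliary density produces exactly $g$ with the $\delta/4$ factor. I would verify it by writing $\log\mathcal N(\mathbf z\mid\mathbf h+\tfrac{\delta}{2}\mathbf g,\tfrac{\delta}{2}\mathbf I)-\log\mathcal N(\mathbf z\mid\mathbf h,\tfrac{\delta}{2}\mathbf I)=(\mathbf z-\mathbf h)^\top\mathbf g-\tfrac{\delta}{4}\|\mathbf g\|^2$, with $\mathbf g=\nabla\mathcal L(\mathbf f;\eta)$. This matches $g(\mathbf z,\mathbf h;\mathbf f)$. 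Apart from this step, the argument amounts to citing \citet{titsias2018auxiliary} and using the single-GP reduction.
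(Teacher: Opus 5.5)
This is correct and follows the paper's proof: the single-GP reduction by prior independence, the observation that the likelihood sees the experts only through $\mathbf f$, the prior-informed proposal $\mathcal N(\tfrac{2}{\delta}\mathbf A\mathbf z,\mathbf A)$ obtained by completing the square, cancellation of the prior factors and the proposal's normalising constant in the Metropolis--Hastings ratio on $(\mathbf h,\mathbf z)$ with $\mathbf z$ fixed, and an $\mathcal O(n^2)$ per-update cost that involves the experts only through $\mathbf C_{\mathbf f}$. Your explicit tilted-versus-untilted computation also supplies the $g$ algebra that the paper leaves implicit.

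One caution: drop the aside calling $\mathbf z\mid\mathbf h\sim\mathcal N(\mathbf h,\tfrac{\delta}{2}\mathbf I)$ plus a first-order expansion an ``equivalent augmentation''. Holding that untilted auxiliary fixed gives a different kernel, whose acceptance ratio carries extra terms (including a ratio of $\mathcal N(\cdot\,;\mathbf 0,\mathbf C_{\mathbf f}+\tfrac{\delta}{2}\mathbf I)$ densities) and is not \eqref{eq:agrad-acceptance}. That acceptance rule specifically requires the gradient-tilted auxiliary held fixed, which is what your steps actually use, and no linearisation enters the validity argument.
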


For fixed kernel hyperparameters one update costs $\mathcal O(n^2)$, after an $\mathcal O(n^3)$ eigendecomposition of $\mathbf C_{\mathbf f}$ that is computed once and reused. This applies only to $\mathbf f$, while learning $\boldsymbol\theta$ does depend on $m$. Every proposal rebuilds the kernels it moves and calls for a fresh eigendecomposition of $\mathbf C_{\mathbf f}$.

Since $\mathbf f$ and $\boldsymbol\theta$ are strongly dependent a posteriori, we update them jointly, proposing a new $\boldsymbol\theta$ and a new $\mathbf f$ together and accepting or rejecting the pair as one. Warm-up comes first and has two stages. The initial stage moves one expert at a time, which is slow but stable, and its purpose is to find the region the chain will occupy and to learn the scale of each block. The second stage uses the path just traced to build a joint proposal for all kernel parameters at once, and tunes only its overall size. Then, sampling proceeds with the proposals established during warm-up. We target acceptance rates near $55\%$ for $\mathbf f$ and $25\%$ for $\boldsymbol\theta$. Appendix~\ref{supp:mcmc-details} gives the proposal geometry and the acceptance ratios, with the joint move in Algorithm~\ref{alg:joint} and the schedule in Algorithm~\ref{alg:schedule}.

Individual experts follow by Gaussian conditioning on $\mathbf f$,
which requires only $\mathrm{Cov}(\mathbf x_j,\mathbf f)=\mathbf C_j$, by prior independence:
\[
\mathbf x_j
\mid
\left(\mathbf f,\boldsymbol\theta,\boldsymbol\gamma,\beta_0\right)
\sim
\mathcal N\!\left(
\boldsymbol\mu_j+
\mathbf C_j\mathbf C_{\mathbf f}^{-1}
(\mathbf f-\boldsymbol\mu_{\mathbf f}),
\;
\mathbf C_j-
\mathbf C_j\mathbf C_{\mathbf f}^{-1}\mathbf C_j
\right).
\]
Together with the intercept, the conditional means add to $\mathbf f$, so the decomposition is exact on the logit scale. We report each expert as $\sigma(x_{j,i})$, which exceeds one half when
the contribution is positive.

\section{Simulation Study}
\label{sec:sim}

We evaluate the proposed model in three designs, built by combining two data generating mechanisms with two mechanisms for hiding positives. In the first, called linear, the covariates are linearly separable and the network is drawn from a two block Stochastic Block Model (SBM). In the second, called nonlinear, the covariates carry interactions and curved boundaries that no hyperplane separates, and the network comes from a composition of three Chinese Restaurant Processes (CRP). Regarding hidden positives, they are hidden either completely at random (SCAR), which is the mechanism the model assumes, or with a probability that depends on the covariates (SAR), under which the least suspicious positives go unrecorded first and the assumption is violated in a direction that is particularly challenging to the model. \textcolor{black}{The main text reports the nonlinear design under both SCAR and SAR, while Appendix~\ref{supp:sim} reports the linear design under SCAR.} Each design is replicated $100$ times. A replication has $n=400$ units, $p=10$ covariates and one network layer, with $45$ true positives and $355$ true zeros. Of the positives, $30$ are recorded and $15$ are hidden among the zero labels, so the true nondetection rate is $\eta=1/3$. To stress the model, the positives are heterogeneous by construction, one third carrying covariate signal only, one third network signal only and one third both, with hiding applied within each group. Thus, the experts are in mild disagreement on the positives by construction, as may also occur in practice, so not all hidden positives receive support from both sources and complete recovery is not expected.

For the nonlinear design, covariates are generated class
conditionally in five bivariate blocks that distinguish the $30$ units that carry a covariate signal from the remaining $370$, each block carrying a different kind of signal. Specifically, the five blocks comprise a mean shift, two separated clusters placed symmetrically on opposite sides of the background distribution so that their pooled mean matches the background mean, a noisy ring within a disc, a displacement from a common parabolic trend, and identical marginals with opposite within pair correlations. Only the first provides a linear mean shift, while the remaining blocks encode nonlinear or dependence based signal. The network is constructed by combining three edge layers based on Chinese Restaurant Process (CRP) partitions, one over all units, one over the units carrying network signal and one over the remainder. All three use the same concentration and expected within community degree, and their union is followed by one pass of triadic closure. Accordingly, the class signal lies in the pattern of connections rather than in node degree. The exact distributions and network parameters are given in Appendix~\ref{supp:nonlinear}.

The nested Bayesian comparison isolates one component at a time.
Models with the structured mean alone are denoted Lin and those that
add the Gaussian process residual GP, while Cov, Net and PU indicate
which of the covariates, the network and the noise mechanism are
included, with PoE denoting the two experts combined. We also fit
PU-PoE under $\boldsymbol\mu=\beta_0\mathbf 1_n$, which drops the
structured mean and serves as its baseline. The remaining competitors
cover a range of approaches to label noise, class
imbalance and network structure. AdaSampling (Ada;
\citealp{yang2018adasampling}) with an SVM base learner represents PU
learning through adaptive resampling and margin based classifiers,
and the non-negative positive unlabeled risk estimator (nnPU;
\citealp{kiryo2017nnpu}) the risk correction approach. We also fit
weighted logistic regression (WLR), with a weight of $1/2$ on the
zero labeled units, and Bayesian logistic regression fitted through
the Integrated Nested Laplace Approximation (INLA;
\citealp{rue2009approximate}). A two layer Graph Convolutional
Network (GCN; \citealp{KipfWelling2017GCN}) serves as a deep learning benchmark that incorporates relational structure directly. The first four are also fitted in network augmented versions using the same modularity eigenvectors, selected at the largest eigengap among the positive eigenvalues as in Section~\ref{subsec:prior}. \textcolor{black}{Implementation settings for the competing methods and the priors for the linear Bayesian models are given in Appendix~\ref{supp:competitors}.}

For the GP models, the relevant priors are recalibrated for each replication through Algorithm~\ref{alg:hyperprior}. The total marginal standard deviation of the residual field is centered at $c=2$ and the variance is split equally between the two experts, so each carries median $\sqrt2$ with a $95\%$ interval of $[0.71,2.83]$, obtained by halving and doubling the center. Each lengthscale is calibrated so that the induced correlation covers $10\%$ to $50\%$ of its attainable range with probability $0.95$. The structured mean uses Half-$t_3$ scales set by the median matching rule of Section~\ref{subsec:prior}. The intercept is $\beta_0\sim\mathcal N(-3.40,0.61^2)$, which places a $95\%$ prior interval of $[0.01,0.10]$ on the prevalence, and $\eta\sim\mathrm{Beta}(2,6.39)$.

For the Bayesian methods, we first run five independent pilot chains of $40{,}000$ iterations on separate samples, to obtain the starting values, the step size and the proposal geometry used in every replication. The production chain then runs
$15{,}000$ warm-up iterations, the first third blocked by expert and the rest joint, followed by $40{,}000$ sampling iterations thinned by $20$, giving $2{,}000$ stored draws with every proposal parameter frozen at its final warm-up value.

Overall AUC and Overall Recall@45 use all $400$ units and the latent classes $T_i$, while their hidden counterparts are restricted to the $370$ unlabeled units and to the $15$ positives among them. Writing $S_k$ for the $k$ highest ranked units, $\text{Recall@}45=|S_{45}\cap\{T_i=1\}|/45$ over all units and $\text{Recall@}15=|S_{15}\cap\{y_i=0,T_i=1\}|/15$ over the unlabeled ones. Log loss and Brier scores are computed against $T_i$ from the same probabilities that produce the rankings, while coverage and interval scores \citep{gneiting2007strictly} are at the stated level and are available only for methods that produce predictive draws for the latent labels. \textcolor{black}{For the Bayesian models, these are generated as $T_i^{(s)}\sim\mathrm{Bernoulli}\{\sigma(f_i^{(s)})\}$.} Also, for the two best PU models we report the mean estimate, the root mean squared error, the coverage and the interval score of $\eta$, and the same four for the implied number of undetected positives $H$. 

Table~\ref{tab:sim-scar} presents the results for the nonlinear design under SCAR. Using a Gaussian process for the covariates instead of a linear mean raises the overall AUC from $0.659$ to $0.907$, the hidden AUC from $0.566$ to $0.765$ and Recall@45 from $0.292$ to $0.705$. Adding the network expert raises these to $0.949$, $0.856$ and $0.780$, so the two sources contribute in that order and neither is redundant. \textcolor{black}{Consistent with Lemma~\ref{lem:ID-ROC}(b), the PU correction has little effect on ranking but substantially affects the predicted probabilities.} Specifically, across the four pairs of models that differ only in the correction the overall AUC moves by at most $0.001$, while the hidden log loss falls by between $0.193$ and $0.278$ and the hidden Brier score by between $0.030$ and $0.071$. The same split appears between the structured mean and its
intercept only baseline. The baseline ranks marginally better, at $0.951$, $0.862$ and $0.784$ against $0.948$, $0.854$ and $0.780$, while the structured mean is better calibrated, at $0.184$ against $0.187$ in overall log loss and $0.052$ against $0.053$ in overall Brier. The two separate more clearly on the noise rate itself. Table~\ref{tab:eta-nonlinear-scar} reports the mean estimate, the root mean squared error, the coverage and the interval score of $\eta$, together with the same quantities for the implied number of undetected positives $H$. The full model gives a mean $\hat\eta$ of $0.187$ with coverage $0.470$, while GP-PU-PoE$_{\beta_0}$ gives $0.177$ with coverage $0.380$, and the same ordering holds for $H$ at $6.955$ against $6.500$. Both models substantially underestimate the true value $\eta=1/3$, and the corresponding intervals have below-nominal coverage. This reflects the weak identification of the noise rate when latent risks do not approach one sufficiently closely, as anticipated by Lemma~\ref{lem:ID-ROC}(a); inference on $\eta$ and the number of hidden positives is therefore appreciably more difficult than risk ranking in this design.

\begin{table}[!t]
\centering
\caption{Nonlinear design under SCAR. Monte Carlo means over $100$
replications, standard deviations in parentheses. Best value in bold,
closest to nominal for coverage.}
\label{tab:sim-scar}
\begingroup
\scriptsize
\setlength{\tabcolsep}{1.4pt}
\renewcommand{\arraystretch}{1.1}
\setlength{\aboverulesep}{0.15ex}
\setlength{\belowrulesep}{0.15ex}

\resizebox{\linewidth}{!}{
\begin{tabular}{lcccccc}
\toprule
Method
& \shortstack{Overall\\AUC}
& \shortstack{Hidden\\AUC}
& \shortstack{Overall\\Recall@45}
& \shortstack{Hidden\\Recall@15}
& \shortstack{Overall\\LogLoss}
& \shortstack{Hidden\\LogLoss} \\
\midrule
Lin-Cov & 0.659 (0.043) & 0.566 (0.085) & 0.292 (0.058) & 0.110 (0.076) & 0.346 (0.010) & 2.668 (0.142) \\
Lin-PU-Cov & 0.659 (0.042) & 0.566 (0.084) & 0.292 (0.054) & 0.107 (0.076) & 0.338 (0.010) & 2.475 (0.146) \\
Lin-Cov+Net & 0.685 (0.057) & 0.596 (0.094) & 0.306 (0.083) & 0.111 (0.090) & 0.338 (0.017) & 2.631 (0.164) \\
Lin-PU-Cov+Net & 0.685 (0.057) & 0.597 (0.093) & 0.303 (0.083) & 0.106 (0.086) & 0.330 (0.018) & 2.418 (0.186) \\
\hdashline
GP-Cov & 0.907 (0.025) & 0.765 (0.059) & 0.705 (0.049) & 0.305 (0.093) & 0.228 (0.017) & 2.485 (0.188) \\
GP-PU-Cov & 0.907 (0.024) & 0.763 (0.059) & 0.706 (0.046) & 0.299 (0.099) & 0.221 (0.017) & 2.248 (0.200) \\
GP-PoE & 0.949 (0.018) & 0.856 (0.052) & 0.780 (0.042) & 0.392 (0.103) & 0.191 (0.014) & 2.278 (0.214) \\
GP-PU-PoE & 0.948 (0.018) & 0.854 (0.053) & 0.780 (0.042) & 0.393 (0.104) & \textbf{0.184} (0.015) & 2.000 (0.252) \\
GP-PU-PoE$_{\beta_0}$ & \textbf{0.951} (0.018) & \textbf{0.862} (0.052) & \textbf{0.784} (0.038) & \textbf{0.402} (0.103) & 0.187 (0.014) & 1.995 (0.226) \\
\hdashline
Ada & 0.824 (0.078) & 0.730 (0.079) & 0.548 (0.079) & 0.254 (0.103) & 0.448 (0.040) & 0.849 (0.124) \\
Ada+Net & 0.814 (0.109) & 0.713 (0.102) & 0.534 (0.098) & 0.214 (0.111) & 0.466 (0.044) & \textbf{0.847} (0.133) \\
nnPU & 0.565 (0.055) & 0.511 (0.085) & 0.342 (0.060) & 0.180 (0.078) & 0.541 (0.048) & 5.425 (0.666) \\
nnPU+Net & 0.593 (0.079) & 0.532 (0.093) & 0.377 (0.077) & 0.177 (0.083) & 0.515 (0.077) & 5.305 (0.707) \\
WLR & 0.648 (0.045) & 0.545 (0.089) & 0.291 (0.060) & 0.114 (0.081) & 0.337 (0.011) & 2.177 (0.256) \\
WLR+Net & 0.681 (0.056) & 0.579 (0.092) & 0.315 (0.071) & 0.122 (0.091) & 0.328 (0.018) & 2.163 (0.274) \\
INLA & 0.641 (0.047) & 0.537 (0.092) & 0.299 (0.060) & 0.115 (0.082) & 0.353 (0.012) & 2.984 (0.310) \\
INLA+Net & 0.674 (0.057) & 0.573 (0.094) & 0.322 (0.073) & 0.122 (0.093) & 0.346 (0.017) & 3.000 (0.342) \\
GCN & 0.821 (0.039) & 0.687 (0.090) & 0.486 (0.067) & 0.191 (0.096) & 0.286 (0.023) & 2.815 (0.462) \\
\bottomrule
\end{tabular}
}

\vspace{0.3ex}

\resizebox{\linewidth}{!}{
\begin{tabular}{lcccccc}
\toprule
Method
& \shortstack{Overall\\Cov$_{80}$}
& \shortstack{Hidden\\Cov$_{80}$}
& \shortstack{Overall\\IS$_{80}$}
& \shortstack{Hidden\\IS$_{80}$}
& \shortstack{Overall\\Brier}
& \shortstack{Hidden\\Brier} \\
\midrule
Lin-Cov & \textbf{0.916} (0.017) & 0.165 (0.132) & 0.938 (0.108) & 8.512 (1.186) & 0.098 (0.003) & 0.853 (0.018) \\
Lin-PU-Cov & 0.935 (0.016) & 0.298 (0.150) & 0.873 (0.097) & 7.318 (1.352) & 0.096 (0.003) & 0.823 (0.022) \\
Lin-Cov+Net & 0.923 (0.017) & 0.208 (0.143) & 0.896 (0.125) & 8.128 (1.283) & 0.096 (0.005) & 0.840 (0.035) \\
Lin-PU-Cov+Net & 0.941 (0.016) & 0.357 (0.161) & 0.833 (0.114) & 6.783 (1.450) & 0.094 (0.005) & 0.804 (0.048) \\
\hdashline
GP-Cov & 0.975 (0.005) & 0.465 (0.113) & 0.415 (0.053) & 5.812 (1.013) & 0.066 (0.005) & 0.795 (0.029) \\
GP-PU-Cov & 0.981 (0.005) & 0.567 (0.117) & 0.425 (0.055) & 4.894 (1.053) & 0.063 (0.005) & 0.746 (0.039) \\
GP-PoE & 0.981 (0.005) & 0.521 (0.127) & 0.339 (0.047) & 5.308 (1.145) & 0.056 (0.005) & 0.750 (0.043) \\
GP-PU-PoE & 0.986 (0.005) & 0.640 (0.129) & 0.345 (0.046) & 4.237 (1.165) & \textbf{0.052} (0.005) & 0.679 (0.064) \\
GP-PU-PoE$_{\beta_0}$ & 0.986 (0.005) & 0.646 (0.130) & \textbf{0.336} (0.048) & 4.186 (1.168) & 0.053 (0.005) & 0.685 (0.056) \\
\hdashline
Ada & -- & -- & -- & -- & 0.138 (0.016) & \textbf{0.314} (0.049) \\
Ada+Net & -- & -- & -- & -- & 0.146 (0.017) & 0.315 (0.056) \\
nnPU & -- & -- & -- & -- & 0.087 (0.009) & 0.911 (0.046) \\
nnPU+Net & -- & -- & -- & -- & 0.082 (0.009) & 0.907 (0.045) \\
WLR & 0.976 (0.007) & 0.683 (0.130) & 0.895 (0.082) & 3.856 (1.173) & 0.095 (0.004) & 0.732 (0.047) \\
WLR+Net & 0.977 (0.007) & \textbf{0.691} (0.136) & 0.846 (0.102) & \textbf{3.778} (1.226) & 0.093 (0.006) & 0.717 (0.056) \\
INLA & 0.937 (0.008) & 0.300 (0.122) & 0.852 (0.073) & 7.300 (1.097) & 0.095 (0.004) & 0.854 (0.035) \\
INLA+Net & 0.940 (0.008) & 0.335 (0.130) & 0.816 (0.083) & 6.988 (1.172) & 0.093 (0.005) & 0.844 (0.045) \\
GCN & -- & -- & -- & -- & 0.081 (0.006) & 0.791 (0.057) \\
\bottomrule
\end{tabular}
}
\endgroup

\end{table}

\begin{table}[!t]
\centering
\caption{Estimation of $\eta$ and $H$ in the nonlinear design under SCAR over $100$ replications.}
\label{tab:eta-nonlinear-scar}
\small
\setlength{\tabcolsep}{6pt}
\begin{tabular}{lrrrrrr}
\toprule
Method
& Mean $\hat{\eta}$ & RMSE$_{\eta}$ & Cov$_{80,\eta}$ & IS$_{80,\eta}$ & Mean $\hat H$ & RMSE$_H$ \\
\midrule
GP-PU-PoE
& 0.187 & 0.149 & 0.470 & 0.441 & 6.955 & 8.171 \\
GP-PU-PoE$_{\beta_0}$
& 0.177 & 0.158 & 0.380 & 0.497 & 6.500 & 8.567 \\
\bottomrule
\end{tabular}

\end{table}

Among the competitors the ordering follows their capacity for nonlinear and relational structure. Ada and the GCN lead, at overall AUCs of $0.824$ and $0.821$ and hidden AUCs of $0.730$ and $0.687$. WLR and INLA sit close to the linear Bayesian models at $0.648$ and $0.641$, which is what the covariate design predicts, and nnPU is lowest at $0.565$, \textcolor{black}{which may reflect the small number of labeled positives in this design, even though it is given the true class prior $\pi_p=45/400$.} Appending the modularity eigenvectors helps almost everywhere, raising the overall AUC of the linear model by $0.026$, of WLR by $0.033$, of INLA by $0.033$ and of nnPU by $0.028$, with Ada the exception at $-0.010$. Proper scores on the hidden set reward methods that are optimistic everywhere, since the truth there is always one, and Ada attains the lowest hidden Brier score at $0.314$ while having an overall Brier at $0.138$, above the $0.100$ of a constant predictor at the base rate. The proposed model keeps a margin of $0.124$ in both overall and hidden AUC over the best competitor, and places $39.3\%$ of the hidden positives in the top $15$ unlabeled positions against $25.4\%$. The PU correction also improves the hidden coverage of every pair, from $0.165$ to $0.298$ for the linear covariate model, from $0.465$ to $0.567$ for the covariate expert and from $0.521$ to $0.640$ for the product of experts. Dashes mark the methods that return point estimates only and support no uncertainty quantification.

\textcolor{black}{Tables~\ref{tab:sim-sar} and~\ref{tab:eta-sar} report the results for the nonlinear design under SAR with $\lambda=2$, using the covariate dependent hiding mechanism described in Appendix~\ref{supp:hiding}. Under SAR most methods lose ground on the hidden set, since the positives that go unrecorded are by construction the ones whose covariates resemble the zeros most closely, but the proposed model stays robust to the violation and does so through the network. On covariates alone it falls from $0.763$ to $0.578$ in hidden AUC, as expected since the violation is defined entirely on the covariates. Once the network expert is present the fall stops at $0.854$ to $0.721$, so a source untouched by the mechanism mitigates part of the loss. The model also remains the best of those compared, at $0.906$ against $0.821$ for the strongest competitor on the overall set, although the margin on the hidden set narrows to $0.019$ over the GCN, whose relational input is likewise unaffected. Table~\ref{tab:eta-sar} shows the effect on the count. The estimated number of undetected cases falls from $6.9$ under SCAR to $4.5$, and the coverage of $\eta$ to $0.020$.}

\begin{table}[!t]
\centering
\caption{Nonlinear covariate-driven design under SAR with $\lambda=2$. Monte Carlo means over $100$ replications, standard deviations in parentheses. Best value in bold, closest to nominal for coverage.}
\label{tab:sim-sar}
\begingroup
\scriptsize
\setlength{\tabcolsep}{1.4pt}
\renewcommand{\arraystretch}{1.10}
\setlength{\aboverulesep}{0.15ex}
\setlength{\belowrulesep}{0.15ex}

\resizebox{\linewidth}{!}{
\begin{tabular}{lcccccc}
\toprule
Method
& \shortstack{Overall\\AUC}
& \shortstack{Hidden\\AUC}
& \shortstack{Overall\\Recall@45}
& \shortstack{Hidden\\Recall@15}
& \shortstack{Overall\\LogLoss}
& \shortstack{Hidden\\LogLoss} \\
\midrule
Lin-Cov & 0.667 (0.038) & 0.519 (0.068) & 0.301 (0.057) & 0.061 (0.059) & 0.341 (0.011) & 2.797 (0.171) \\
Lin-PU-Cov & 0.668 (0.038) & 0.519 (0.068) & 0.297 (0.059) & 0.063 (0.055) & 0.334 (0.010) & 2.604 (0.175) \\
Lin-Cov+Net & 0.691 (0.049) & 0.558 (0.090) & 0.324 (0.069) & 0.088 (0.083) & 0.334 (0.017) & 2.736 (0.191) \\
Lin-PU-Cov+Net & 0.692 (0.050) & 0.560 (0.090) & 0.319 (0.073) & 0.091 (0.085) & 0.326 (0.018) & 2.527 (0.201) \\
\hdashline
GP-Cov & 0.856 (0.024) & 0.579 (0.073) & 0.699 (0.033) & 0.157 (0.081) & 0.215 (0.011) & 3.494 (0.378) \\
GP-PU-Cov & 0.855 (0.024) & 0.578 (0.074) & 0.697 (0.032) & 0.153 (0.079) & 0.211 (0.012) & 3.305 (0.386) \\
GP-PoE & \textbf{0.906} (0.031) & 0.720 (0.095) & 0.736 (0.034) & 0.225 (0.109) & 0.190 (0.013) & 3.063 (0.423) \\
GP-PU-PoE & \textbf{0.906} (0.031) & \textbf{0.721} (0.094) & 0.736 (0.035) & 0.221 (0.107) & \textbf{0.185} (0.013) & 2.845 (0.437) \\
GP-PU-PoE$_{\beta_0}$ & 0.905 (0.031) & 0.719 (0.094) & \textbf{0.737} (0.036) & \textbf{0.228} (0.111) & 0.187 (0.012) & 2.836 (0.425) \\
\hdashline
Ada & 0.813 (0.031) & 0.570 (0.075) & 0.569 (0.048) & 0.114 (0.074) & 0.396 (0.033) & 1.421 (0.250) \\
Ada+Net & 0.814 (0.035) & 0.582 (0.093) & 0.562 (0.055) & 0.114 (0.090) & 0.428 (0.040) & \textbf{1.258} (0.250) \\
nnPU & 0.581 (0.072) & 0.486 (0.072) & 0.373 (0.066) & 0.088 (0.058) & 0.519 (0.078) & 5.568 (0.566) \\
nnPU+Net & 0.619 (0.093) & 0.521 (0.086) & 0.405 (0.092) & 0.099 (0.078) & 0.477 (0.101) & 5.286 (0.705) \\
WLR & 0.660 (0.038) & 0.515 (0.066) & 0.291 (0.063) & 0.063 (0.056) & 0.337 (0.011) & 2.352 (0.252) \\
WLR+Net & 0.688 (0.049) & 0.553 (0.084) & 0.318 (0.072) & 0.079 (0.076) & 0.328 (0.018) & 2.309 (0.268) \\
INLA & 0.654 (0.039) & 0.512 (0.067) & 0.298 (0.063) & 0.061 (0.055) & 0.353 (0.010) & 3.175 (0.290) \\
INLA+Net & 0.682 (0.049) & 0.550 (0.084) & 0.324 (0.070) & 0.079 (0.073) & 0.346 (0.017) & 3.158 (0.320) \\
GCN & 0.821 (0.037) & 0.702 (0.078) & 0.480 (0.066) & 0.197 (0.097) & 0.286 (0.023) & 2.760 (0.470) \\
\bottomrule
\end{tabular}
}

\vspace{0.3ex}

\resizebox{\linewidth}{!}{
\begin{tabular}{lcccccc}
\toprule
Method
& \shortstack{Overall\\Cov$_{80}$}
& \shortstack{Hidden\\Cov$_{80}$}
& \shortstack{Overall\\IS$_{80}$}
& \shortstack{Hidden\\IS$_{80}$}
& \shortstack{Overall\\Brier}
& \shortstack{Hidden\\Brier} \\
\midrule
Lin-Cov & \textbf{0.926} (0.015) & 0.149 (0.102) & 0.889 (0.095) & 8.662 (0.914) & 0.096 (0.004) & 0.865 (0.016) \\
Lin-PU-Cov & 0.942 (0.012) & 0.268 (0.121) & 0.844 (0.078) & 7.588 (1.085) & 0.095 (0.004) & 0.836 (0.019) \\
Lin-Cov+Net & 0.931 (0.015) & 0.203 (0.131) & 0.848 (0.105) & 8.176 (1.181) & 0.095 (0.005) & 0.851 (0.033) \\
Lin-PU-Cov+Net & 0.947 (0.011) & 0.327 (0.136) & 0.804 (0.091) & 7.059 (1.228) & 0.093 (0.005) & 0.817 (0.043) \\
\hdashline
GP-Cov & 0.968 (0.004) & 0.203 (0.097) & 0.451 (0.035) & 8.170 (0.871) & 0.053 (0.004) & 0.882 (0.035) \\
GP-PU-Cov & 0.970 (0.004) & 0.231 (0.103) & 0.463 (0.035) & 7.918 (0.924) & 0.052 (0.004) & 0.859 (0.042) \\
GP-PoE & 0.972 (0.005) & 0.269 (0.136) & \textbf{0.403} (0.043) & 7.576 (1.227) & 0.049 (0.003) & 0.853 (0.045) \\
GP-PU-PoE & 0.975 (0.005) & 0.335 (0.151) & 0.406 (0.047) & 6.988 (1.359) & \textbf{0.048} (0.003) & 0.821 (0.057) \\
GP-PU-PoE$_{\beta_0}$ & 0.975 (0.006) & 0.333 (0.156) & 0.406 (0.048) & 7.000 (1.401) & 0.049 (0.003) & 0.823 (0.054) \\
\hdashline
Ada & -- & -- & -- & -- & 0.121 (0.013) & 0.506 (0.074) \\
Ada+Net & -- & -- & -- & -- & 0.134 (0.016) & \textbf{0.459} (0.083) \\
nnPU & -- & -- & -- & -- & 0.078 (0.008) & 0.940 (0.033) \\
nnPU+Net & -- & -- & -- & -- & 0.074 (0.010) & 0.934 (0.041) \\
WLR & 0.973 (0.007) & 0.589 (0.152) & 0.864 (0.071) & 4.702 (1.365) & 0.095 (0.004) & 0.766 (0.036) \\
WLR+Net & 0.974 (0.006) & \textbf{0.607} (0.143) & 0.828 (0.088) & \textbf{4.533} (1.291) & 0.093 (0.006) & 0.747 (0.054) \\
INLA & 0.938 (0.006) & 0.225 (0.096) & 0.841 (0.058) & 7.978 (0.868) & 0.094 (0.004) & 0.879 (0.025) \\
INLA+Net & 0.941 (0.007) & 0.259 (0.115) & 0.812 (0.079) & 7.665 (1.038) & 0.093 (0.006) & 0.866 (0.041) \\
GCN & -- & -- & -- & -- & 0.081 (0.006) & 0.784 (0.045) \\
\bottomrule
\end{tabular}
}
\endgroup

\end{table}

\begin{table}[!t]
\centering
\caption{Estimation of $\eta$ and $H$ in the nonlinear design under SAR with $\lambda=2$ over $100$ replications.}
\label{tab:eta-sar}
\small
\setlength{\tabcolsep}{6pt}
\begin{tabular}{lrrrrrr}
\toprule
Method
& Mean $\hat{\eta}$ & RMSE$_{\eta}$ & Cov$_{80,\eta}$ & IS$_{80,\eta}$ & Mean $\hat H$ & RMSE$_H$ \\
\midrule
GP-PU-PoE
& 0.130 & 0.205 & 0.020 & 1.139 & 4.479 & 10.549 \\
GP-PU-PoE$_{\beta_0}$
& 0.127 & 0.208 & 0.000 & 1.184 & 4.358 & 10.664 \\
\bottomrule
\end{tabular}

\end{table}

\textcolor{black}{In the linear design under SCAR, the model remains competitive despite the simpler data-generating mechanism, as shown in Appendix~\ref{supp:sim}.}

\section{Detecting fraud in the Greek \textcolor{black}{energy market}}
\label{sec:appl}

\subsection{Data and prior setting} \label{sec:data}

We analyze administrative data on \textcolor{black}{a regulated segment of the Greek energy market} for $2023$, provided by the Independent Authority for Public Revenue (IAPR), the Greek tax authority. The records are confidential and were made available in anonymized form, with all identifiers removed before analysis. \textcolor{black}{The fitted sample contains $n=550$ firms prioritized for inspection during the period. As a result, the analysis concerns risk differentiation among the firms prioritized for inspection rather than prevalence among all firms in this market segment.} Of these, $37$ carry a positive label, meaning that an inspection established non-compliance, while the rest carry a zero, meaning either that no inspection took place or that one took place and found nothing. The PU likelihood allows this zero-labeled group to contain undetected positives rather than treating all such \textcolor{black}{firms} as compliant.

\textcolor{black}{Each firm has $p=13$ covariates. Three describe commercial activity and pricing, and the remaining ten are yearly counts of alerts raised by the electronic monitoring system that every firm is required to operate, each category recording a distinct type of abnormal operation.} The covariates are transformed by $\log(x+1)$ and standardized on the fitted sample. \textcolor{black}{Three network layers describe the relational structure among the firms. For confidentiality reasons, we refer to them as Network~1, Network~2 and Network~3 without disclosing their specific definitions.}

Priors follow the elicitation of Appendix~\ref{supp:prior-elicitation}, calibrated on the observed designs. We set the target standard deviation of the total stochastic latent field to $c=2$ and give the covariate expert half of the prior variance, the three networks sharing the other half equally, so that $w=(1/2,1/6,1/6,1/6)$. The lengthscale quantiles of each expert are matched at the $10\%$ and $50\%$ points of the correlation range its own kernel can attain, the Half-$t_3$ scales place the structured mean of each expert on its own scale $c_j$, the
baseline risk is given a central $95\%$ range of $1\%$ to $10\%$, and the false negative rate a prior $95$th percentile of one half. \textcolor{black}{This value reflects operational information from the IAPR, which indicates that some noncompliant firms escape detection but that detection succeeds more often than it fails when noncompliance is present.} The elicited hyperparameters are in Appendix~\ref{supp:elicited-priors}.  Regarding the network experts, the modularity eigengap rule keeps \textcolor{black}{five eigenvectors for Network~1, two for Network~2 and one for Network~3} in the structured means. 

\subsection{Validation} \label{sub:val_appl}
 
Since only the $37$ observed positives have verified class status, predictive ranking is evaluated through leave-one-held-out-positive cross-validation (LOO-HOP).

In each fold one observed positive is relabeled as zero and the model is refitted following Algorithm~\ref{alg:schedule}, with $15{,}000$ warm-up iterations and $40{,}000$ sampling iterations thinned by $20$, retaining $2{,}000$ draws. The GP models are fitted exactly, without a sparse approximation. The warm-up of the full fit supplies the starting values of every fold, which helps each chain settle quickly. We record where the held out \textcolor{black}{firm} ranks among the unlabeled ones, and across the $37$ folds we report the mean AUC, the recall within the top $37$ and the top $50$ positions, the mean rank and the mean reciprocal rank (MRR).

Table~\ref{tab:loo-pos} presents the results. Lin-Cov reaches an AUC
of $0.904$, Lin-PU-Cov $0.916$, Lin-Cov+Net $0.941$ and
Lin-PU-Cov+Net $0.957$. Moving to the GP models gives an AUC of
$0.964$ for GP-Cov and $0.975$ for GP-PoE, with the latter recovering
$97.3\%$ of the held out positives within the top $37$ positions.
Each competing method generally improves once network information is
included, most notably nnPU, which moves from $0.767$ to $0.968$ with
nnPU+Net. The improvement of the network augmented competitors shows
that the relational signal is not specific to the PoE construction.
GP-PU-PoE reaches an AUC of $0.974$, against $0.968$ for nnPU+Net,
$0.951$ for Ada+Net and $0.948$ for the GCN. GP-PoE,
GP-PU-PoE and GP-PU-PoE$_{\beta_0}$ remain very close in ranking
performance, with AUCs of $0.975$, $0.974$ and $0.979$, respectively. Their larger differences lie in the fitted probabilities and in the
estimated number of undetected cases rather than in the rankings, and
Appendix~\ref{supp:mu0} compares the last two in full.

\begin{table}[t]
\centering
\caption{LOO-HOP evaluation over the $37$ observed positives, reporting point estimates with $95\%$ bootstrap intervals in brackets.}
\label{tab:loo-pos}
\footnotesize
\setlength{\tabcolsep}{3.5pt}
\resizebox{\linewidth}{!}{
\begin{tabular}{lccccc}
\toprule
Method & LOO-HOP AUC & Recall@37 & Recall@50 & Mean rank & MRR \\
\midrule
Lin-Cov & 0.904 [0.858, 0.942] & 0.676 [0.541, 0.811] & 0.730 [0.595, 0.865] & 50.0 [30.8, 73.7] & 0.101 [0.056, 0.165] \\
Lin-PU-Cov & 0.916 [0.872, 0.952] & 0.676 [0.514, 0.811] & 0.730 [0.595, 0.865] & 43.9 [25.6, 66.7] & 0.120 [0.071, 0.186] \\
Lin-Cov+Net & 0.941 [0.902, 0.967] & 0.784 [0.649, 0.919] & 0.865 [0.757, 0.973] & 31.4 [18.1, 51.4] & 0.106 [0.064, 0.168] \\
Lin-PU-Cov+Net & 0.957 [0.934, 0.972] & 0.865 [0.757, 0.973] & 0.919 [0.811, 1.000] & 23.1 [15.2, 34.9] & 0.108 [0.067, 0.170] \\
\hdashline
GP-Cov & 0.964 [0.933, 0.983] & 0.919 [0.811, 1.000] & \textbf{0.973 [0.919, 1.000]} & 19.27 [9.59, 35.41] & 0.288 [0.184, 0.403] \\
GP-PU-Cov & 0.963 [0.934, 0.981] & 0.865 [0.757, 0.973] & 0.946 [0.865, 1.000] & 19.95 [10.59, 34.97] & 0.234 [0.147, 0.332] \\
GP-PoE & 0.975 [0.956, 0.987] & \textbf{0.973 [0.919, 1.000]} & \textbf{0.973 [0.919, 1.000]} & 13.70 [7.70, 23.41] & 0.285 [0.190, 0.396] \\
GP-PU-PoE & 0.974 [0.957, 0.986] & 0.946 [0.865, 1.000] & \textbf{0.973 [0.919, 1.000]} & 14.14 [8.41, 23.14] & 0.218 [0.142, 0.307] \\
\textbf{GP-PU-PoE$_{\beta_0}$} & \textbf{0.979 [0.963, 0.989]} & \textbf{0.973 [0.919, 1.000]} & \textbf{0.973 [0.919, 1.000]} & \textbf{12.03 [6.89, 20.11]} & \textbf{0.306 [0.214, 0.405]} \\
\hdashline
Ada & 0.943 [0.905, 0.968] & 0.703 [0.541, 0.838] & 0.892 [0.784, 0.973] & 30.2 [17.2, 49.7] & 0.176 [0.099, 0.272] \\
Ada+Net & 0.951 [0.918, 0.973] & 0.892 [0.784, 0.973] & 0.919 [0.811, 1.000] & 25.9 [15.1, 43.1] & 0.121 [0.081, 0.168] \\
nnPU & 0.767 [0.673, 0.849] & 0.459 [0.297, 0.622] & 0.514 [0.351, 0.676] & 120.8 [78.3, 168.6] & 0.083 [0.048, 0.124] \\
nnPU+Net & 0.968 [0.948, 0.984] & 0.892 [0.784, 0.973] & 0.892 [0.784, 0.973] & 17.2 [9.1, 27.8] & 0.237 [0.162, 0.327] \\
WLR & 0.905 [0.864, 0.940] & 0.622 [0.459, 0.784] & 0.703 [0.541, 0.838] & 49.9 [31.7, 70.8] & 0.087 [0.059, 0.117] \\
WLR+Net & 0.928 [0.869, 0.965] & 0.730 [0.595, 0.865] & 0.838 [0.703, 0.946] & 37.7 [18.7, 68.4] & 0.183 [0.096, 0.286] \\
INLA & 0.889 [0.846, 0.928] & 0.595 [0.432, 0.757] & 0.676 [0.541, 0.811] & 57.7 [37.8, 80.1] & 0.077 [0.051, 0.106] \\
INLA+Net & 0.911 [0.848, 0.956] & 0.730 [0.595, 0.865] & 0.757 [0.622, 0.892] & 46.4 [23.6, 79.1] & 0.146 [0.079, 0.227] \\
GCN & 0.948 [0.897, 0.977] & 0.946 [0.865, 1.000] & 0.946 [0.865, 1.000] & 27.7 [12.7, 54.0] & 0.204 [0.117, 0.307] \\
\bottomrule
\end{tabular}
}

\end{table}

\subsection{Posterior risk and audit implications}
\label{sub:audit}

In this subsection we focus on the full GP-PU-PoE model, fitted on all $550$ \textcolor{black}{firms} following again Algorithm~\ref{alg:schedule} with $60{,}000$ warm-up iterations, the first third blocked by expert and the rest joint, then $250{,}000$ sampling iterations thinned by $50$, retaining $5{,}000$ draws. Diagnostics and posterior summaries are in
Appendix~\ref{supp:posterior-diagnostics}. The posterior mean of $\eta$ is $0.137$, with standard deviation $0.076$ and a $5\%$ to
$95\%$ interval of $[0.029,0.277]$. Conditional on the SCAR model and the identification assumptions of Section~\ref{sec:eta}, this suggests that most fraudulent \textcolor{black}{firms} in the fitted sample are recorded, although inference on the nondetection rate is substantially more uncertain than the risk ranking.

\textcolor{black}{Summing the individual probabilities of being an undetected positive gives $\mathbb E(H\mid\mathbf y)=5.25$ in \eqref{eq:H-posterior}. Drawing the latent indicators gives an integer valued posterior with median $5$ and an $80\%$ HPD interval of $[0,8]$. Under SCAR, the estimate based on $\eta$ gives $\widehat H_\eta=6.26$, with an $80\%$ HPD interval of $[0.66,9.64]$.} Figure~\ref{fig:pupoe-pred} shows the posterior probabilities with $80\%$ HPD intervals in the top panel and the four expert contributions $\sigma(x_{j,i})$ below. The separation is clear, with a small number of zero labeled \textcolor{black}{firms} carrying a raised probability and five of them above one half. The lower panels show that all four experts contribute, with \textcolor{black}{Network~1} giving the cleanest separation between the two groups.

Table~\ref{tab:application-candidates} records the six candidates with anonymized identifiers. The networks carry the signal, and the installer layer most of it. The full decomposition of the expert shares is reported in
Appendix~\ref{supp:expert-shares}.

\textcolor{black}{The estimate based on $\eta$ under SCAR suggests about six undetected cases, and given the capacity of the IAPR for follow-up inspections, we focused on the six highest ranked zero labeled firms. When we discussed the results with the IAPR, two of them had already been reinspected independently of our analysis, before the results were communicated, and both had been confirmed as noncompliant. This provides an independent retrospective validation of these two cases. The remaining four had not yet been reinspected and were selected by the IAPR for follow-up inspection. For two of these four, the covariate expert points away from noncompliance, while their elevated risk is driven by the network experts. Specifically, the first and the fourth candidate show most clearly what the relational structure adds, with their covariate expert values of $0.380$ and $0.364$. In both cases the evidence comes from the relational side of the model, which is consistent with the possibility that noncompliant firms misreport the figures they submit. The same pattern holds across all firms, as the posterior share of the covariate expert falls from its prior value of $0.500$ to $0.301$, while the combined share of the three network experts rises from $0.500$ to $0.699$.}

\begin{figure}[tbp]
  \centering
  \includegraphics[width=0.875\linewidth]
    {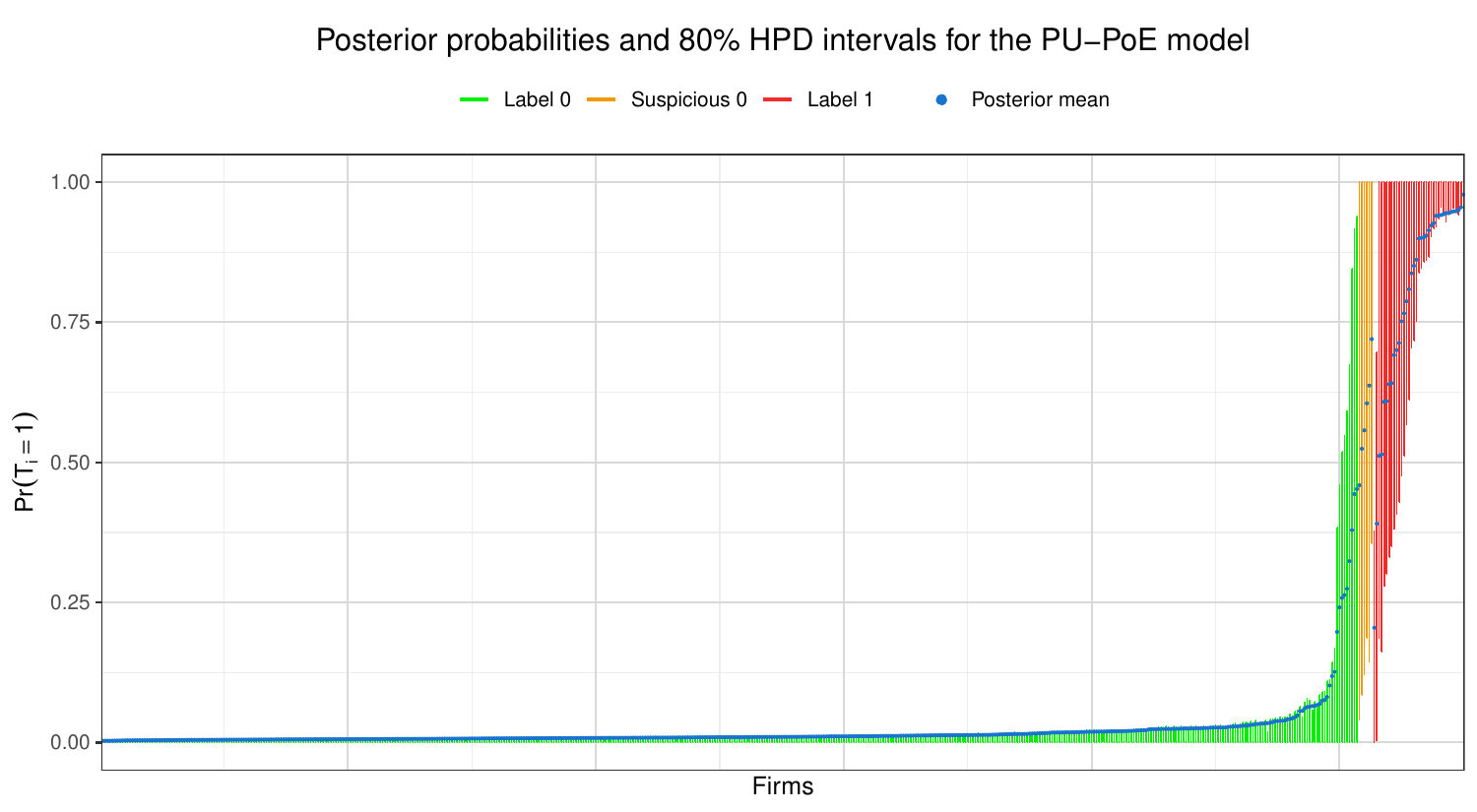}
  \vspace{0.5em}
  \includegraphics[width=0.875\linewidth]
    {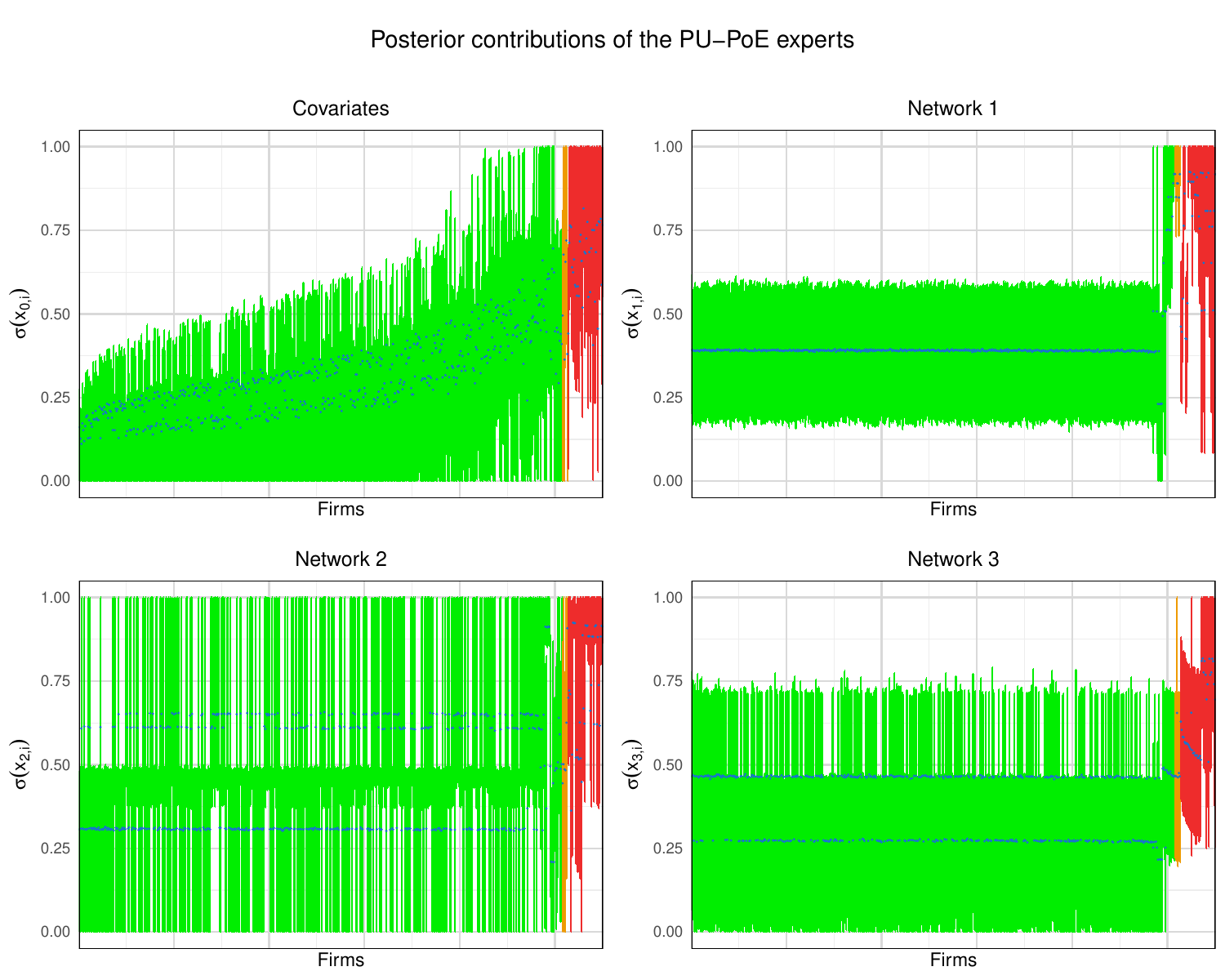}
    
  \caption{Top, posterior probabilities and $80\%$ HPD intervals.
  Bottom, expert specific contributions $\sigma(x_{j,i})$.}
  \label{fig:pupoe-pred}
\end{figure}

\begin{table}[tbp]
\centering
\caption{The six screened \textcolor{black}{firms}, with overall and expert specific posterior estimates.}
\label{tab:application-candidates}
\small
\renewcommand{\arraystretch}{1.08}
\resizebox{0.8\linewidth}{!}{
\begin{tabular}{lrrrrr}
\toprule
\multirow{2}{*}{Candidate} &
Overall &
Covariates &
Network 1 &
Network 2 &
Network 3 \\[-0.6ex]
&
$\hat p_i$ &
$\sigma(x_{0,i})$ &
$\sigma(x_{1,i})$ &
$\sigma(x_{2,i})$ &
$\sigma(x_{3,i})$ \\
\midrule
C1 & 0.720 & 0.380 & 0.918 & 0.908 & 0.465 \\
C2 & 0.637 & 0.720 & 0.848 & 0.626 & 0.475 \\
C3 & 0.605 & 0.656 & 0.890 & 0.484 & 0.462 \\
C4 & 0.557 & 0.364 & 0.890 & 0.486 & 0.655 \\
C5 & 0.524 & 0.677 & 0.849 & 0.502 & 0.465 \\
C6 & 0.459 & 0.495 & 0.918 & 0.445 & 0.467 \\
\bottomrule
\end{tabular}
}
\end{table}

\section{Discussion and Limitations}
\label{sec:disc}

We developed a Bayesian framework for fraud detection under positive unlabeled labels that combines covariates with multilayer network information. Each source contributes a GP expert with a structured mean and its own covariance kernel, and the model estimates the false negative rate, posterior fraud probabilities, the number of undetected cases, and source specific contributions within a single posterior analysis. Under an anchor condition the false negative rate is bounded. For a fixed latent risk function, a common false negative rate leaves the induced ranking unchanged. Computationally, once the aggregate covariance is factorized, the cost of a latent field update does not increase with the number of experts, although hyperparameter proposals require a new factorization.

The application shows that the model components serve different purposes and should not be judged by a single metric. The network experts mainly improve ranking, the PU component mainly affects calibration and inference about undetected cases, and the structured mean sharpens confidence in observed positives, although no individual coefficient separates clearly from zero. \textcolor{black}{Of the six highest ranked firms, two were confirmed as noncompliant through independent reinspection, and the remaining four, including the two whose elevated risk comes mainly from the network experts, were selected by the IAPR for follow-up inspection.} For an audit authority these are distinct products, a better ordered inspection list, an estimate of how much fraud current controls leave undetected, and a statement of which relationships drove each selection. \textcolor{black}{These relationships provide predictive information about risk, but should not be interpreted as evidence that a connected party was involved in the violation.}

Several limitations remain. The likelihood assumes a common false negative rate, whereas audit selection may depend on covariates, geography, or prior intelligence. Our covariate dependent simulation indicates some robustness to this form of misspecification, but an explicit mechanism, with a unit specific rate driven by observed audit effort, would be preferable. Identification also relies on the anchor condition, that some firms carry near certain fraud risk. This is plausible in audit data but cannot be verified, because the labels record only how often fraud is detected, which combines how often it occurs with how often audits catch it. Consistent with this limitation, the simulations show that the false-negative rate and the number of hidden positives can be difficult to estimate accurately even when risk ranking is strong. The dense implementation targets moderate sample sizes, so sparse or low rank representations are needed at national scale. Finally, prior variance is allocated across experts rather than learned, although their posterior contributions can adapt to the data. Strong dependence across experts can produce overconfident predictions, and a generalized product of experts that downweights redundant experts \citep{cao2014generalized}, or an explicit expert selection prior, is a natural extension.

Overall, the framework offers a coherent Bayesian tool for risk based auditing with heterogeneous data, and more broadly a template for settings where confirmed cases are few, unlabeled units cannot be treated as negatives, and relational information is harder to manipulate than reported figures.

\section*{Software and Reproducibility}
\phantomsection
\addcontentsline{toc}{section}{Software and Reproducibility}
\textcolor{black}{The \textsf{R} code for the illustrative example of Section~\ref{sec:illustration}, its expert decomposition, the prior sensitivity analysis, and the three simulation designs is available at \url{https://github.com/KonstantinosBourazas/GP_PU-PoE}. The administrative data analyzed in Section~\ref{sec:appl} are confidential, as they contain protected tax records and commercially sensitive information, and cannot be made publicly available. Thus, the application analysis, including the LOO-HOP evaluation, is not part of the repository.}

\FloatBarrier

\section*{Acknowledgments}
\phantomsection
\addcontentsline{toc}{section}{Acknowledgments}
We are grateful to the Independent Authority for Public Revenue of the Hellenic Republic (IAPR) for supporting this research and to the many officials who have provided feedback through extensive discussions. We also thank participants at the Workshop on ``New Perspectives on Tax Fraud Detection'', December 1, 2025, Athens, Greece, for helpful comments and suggestions. \textcolor{black}{The methodology in this paper was developed by the authors, and the views expressed are theirs and do not necessarily reflect those of the IAPR and its Management.}

\section*{Funding}
\phantomsection
\addcontentsline{toc}{section}{Funding}
Alexopoulos and Bourazas gratefully acknowledge funding from the Hellenic Foundation for Research and Innovation (H.F.R.I.) under the National Recovery and Resilience Plan ``Greece 2.0'', funded by the European Union—NextGenerationEU (H.F.R.I. Project No.\ 15973).

\FloatBarrier
\bibliographystyle{plainnat}
\bibliography{refs}

\clearpage
\appendix
\titleformat{\section}{\large\bfseries}{Appendix \thesection.}{0.6em}{}
\numberwithin{equation}{section}
\numberwithin{figure}{section}
\numberwithin{table}{section}
\numberwithin{algorithm}{section}
\renewcommand{\theHequation}{\thesection.\arabic{equation}}
\renewcommand{\theHfigure}{\thesection.\arabic{figure}}
\renewcommand{\theHtable}{\thesection.\arabic{table}}
\providecommand{\theHalgorithm}{}
\renewcommand{\theHalgorithm}{\thesection.\arabic{algorithm}}

\section{Proofs}
\label{supp:proofs}

\subsection{Proof of Remark~\ref{lemma:poe-logistic}}

\begin{proof}
For $p_{j,i}=\sigma(x_{j,i})$,
$p_{j,i}/(1-p_{j,i})=\exp(x_{j,i})$. Dividing the numerator and denominator of \eqref{eq:poe-prob} by $\prod_{j=0}^m(1-p_{j,i})$ gives
\[
\Pr(T_i=1\mid x_{0:m,i},\beta_0)
=
\frac{\exp\{\beta_0+\sum_{j=0}^m x_{j,i}\}}
{1+\exp\{\beta_0+\sum_{j=0}^m x_{j,i}\}}
=
\sigma(f_i).
\]
\end{proof}

\subsection{Proof of Lemma~\ref{lem:ID-ROC}}

\begin{proof}
For part (a), let
$r^\star=\operatorname*{ess\,sup}_x\sigma(f(x))$ and
$q^\star=\operatorname*{ess\,sup}_xq(x)$. Since
$q^\star=(1-\eta)r^\star$ and
$r^\star\in[1-\delta,1]$,
\[
q^\star
\le
1-\eta
\le
\frac{q^\star}{1-\delta}.
\]
Subtracting from one and imposing $\eta\ge0$ yields \eqref{eq:eta-bounds}. The interval has width at most
$q^\star\delta/(1-\delta)$, and when $\delta=0$ it collapses to
$\eta=1-q^\star$, since $r^\star=1$ and $q^\star=1-\eta$. Therefore,
$\sigma(f(x))=q(x)/(1-\eta)$ $P_X$-almost everywhere, and $f$ is
recovered by the inverse logit transformation.

For part (b), since $1-\eta>0$, for every pair $i,j$,
\[
\sigma(f_i)\leq\sigma(f_j)
\quad\Longleftrightarrow\quad
(1-\eta)\sigma(f_i)\leq(1-\eta)\sigma(f_j).
\]
Thus multiplication by the common positive factor $1-\eta$
preserves the ordering of the scores. The two score vectors
therefore induce the same ROC curve and AUC.

\end{proof}

\subsection{Proof of Remark~\ref{rem:eta-concave}}

\begin{proof}
Conditionally on $\mathbf f$, the log posterior of $\eta$, up to
terms that do not depend on $\eta$, is
\[
\begin{aligned}
\log\pi(\eta\mid\mathbf f,\mathbf y)
&=
\sum_{i:y_i=1}\log(1-\eta)
+
\sum_{i:y_i=0}\log(1-p_i+\eta p_i) \\
&\quad +
(a_\eta-1)\log\eta
+
(b_\eta-1)\log(1-\eta).
\end{aligned}
\]
Differentiating twice gives
\[
\frac{\partial^2}{\partial\eta^2}
\log\pi(\eta\mid\mathbf f,\mathbf y)
=
-\sum_{i:y_i=1}\frac{1}{(1-\eta)^2}
-\sum_{i:y_i=0}
\frac{p_i^2}{(1-p_i+\eta p_i)^2}
-\frac{a_\eta-1}{\eta^2}
-\frac{b_\eta-1}{(1-\eta)^2}.
\]
For finite $f_i$, $p_i\in(0,1)$, so every likelihood contribution
is strictly negative. The two prior contributions are nonpositive
when $a_\eta,b_\eta\ge1$. Since $n\ge1$, the second derivative is
strictly negative on $(0,1)$, proving strict concavity.
\end{proof}

\subsection{Proof of Remark~\ref{rem:geometry-f}}

\begin{proof}
The log likelihood in \eqref{eq:pu-loglik} is separable, with
$\mathcal L(\mathbf f;\eta)=\sum_{i=1}^n\mathcal L_i(f_i;\eta)$, where
\[
\mathcal L_i(f_i;\eta)
=
y_i\log\big((1-\eta)p_i\big)
+
(1-y_i)\log\big(1-(1-\eta)p_i\big),
\qquad
p_i=\sigma(f_i).
\]
Since $\partial p_i/\partial f_i=p_i(1-p_i)$,
\[
\frac{\partial\mathcal L_i(f_i;\eta)}{\partial f_i}
=
y_i(1-p_i)
-
(1-y_i)
\frac{(1-\eta)p_i(1-p_i)}
{1-(1-\eta)p_i},
\]
which gives \eqref{eq:pu-gradient}. When $\eta=0$, this reduces to
$y_i(1-p_i)-(1-y_i)p_i=y_i-p_i$.

For $y_i=1$,
$\partial^2\mathcal L_i/\partial f_i^2=-p_i(1-p_i)<0$.
For $y_i=0$, write $A=1-\eta$, $p=\sigma(f_i)$,
$U=p(1-p)$, and $D=1-Ap$. Since
$\partial U/\partial f_i=U(1-2p)$ and
$\partial D/\partial f_i=-AU$, direct differentiation gives
\[
\frac{\partial^2\mathcal L_i(f_i;\eta)}{\partial f_i^2}
=
\frac{-AU(1-2p)D-A^2U^2}{D^2}
=
\frac{Ap(1-p)\{2p-1-Ap^2\}}{(1-Ap)^2}.
\]
As $f_i\to\infty$, $p=\sigma(f_i)\to1$, and hence
$2p-1-Ap^2\to1-A=\eta>0$. Thus, for any zero labeled observation, the
corresponding likelihood term is locally convex for sufficiently
large finite $f_i$. If at least one $y_i=0$, the log likelihood
cannot therefore be globally concave in $\mathbf f$.
\end{proof}

The conditional posterior Hessian is
$\operatorname{diag}\{\mathcal L_1'',\ldots,\mathcal L_n''\}
-\mathbf C_{\mathbf f}^{-1}$.
A sufficiently informative Gaussian prior can restore log concavity,
so the remark concerns the likelihood rather than every possible
posterior specification.

\subsection{Proof of Proposition~\ref{prop:poe-cost}}

\begin{proof}
Throughout, we condition on
$(\beta_0,\boldsymbol\gamma,\boldsymbol\theta,\eta)$
and suppress this conditioning from the notation.
By prior independence,
\[
p(\mathbf x_0,\ldots,\mathbf x_m)
=
\prod_{j=0}^m
\mathcal N(\mathbf x_j;\boldsymbol\mu_j,\mathbf C_j),
\]
and we write
$\mathbf h=\mathbf f-\boldsymbol\mu_{\mathbf f}
=\sum_{j=0}^m(\mathbf x_j-\boldsymbol\mu_j)$.
Since the centered expert states are independent, for every
$\mathbf t\in\mathbb R^n$,
\[
\begin{aligned}
\mathbb E\!\left[\exp\{\mathrm{i}\mathbf t^\top\mathbf h\}\right]
&=
\prod_{j=0}^m
\mathbb E\!\left[
\exp\{\mathrm{i}\mathbf t^\top(\mathbf x_j-\boldsymbol\mu_j)\}
\right] \\
&=
\prod_{j=0}^m
\exp\!\left(
-\frac{1}{2}\mathbf t^\top\mathbf C_j\mathbf t
\right)
=
\exp\!\left(
-\frac{1}{2}\mathbf t^\top\mathbf C_{\mathbf f}\mathbf t
\right),
\end{aligned}
\]
so that summing the expert contributions induces
$\mathbf h\sim\mathcal N(\mathbf 0,\mathbf C_{\mathbf f})$
with $\mathbf C_{\mathbf f}=\sum_{j=0}^m\mathbf C_j$.

Because the likelihood depends on the expert states only through
$\mathbf f=\boldsymbol\mu_{\mathbf f}+\mathbf h$, marginalizing them
while retaining their sum gives the exact conditional posterior
\[
\pi(\mathbf h\mid
\mathbf y,\beta_0,\boldsymbol\gamma,\boldsymbol\theta,\eta)
\propto
\exp\!\left\{
\mathcal L(\boldsymbol\mu_{\mathbf f}+\mathbf h;\eta)
\right\}
\mathcal N(\mathbf h;\mathbf 0,\mathbf C_{\mathbf f}),
\]
which is of the latent Gaussian form of
\citet{titsias2018auxiliary}, with prior covariance
$\mathbf C_{\mathbf f}$ and gradient
$\nabla_{\mathbf h}
\mathcal L(\boldsymbol\mu_{\mathbf f}+\mathbf h;\eta)
=\nabla_{\mathbf f}\mathcal L(\mathbf f;\eta)$.

Given $\mathbf z$, the prior informed proposal satisfies, on
completing the square,
\[
q(\mathbf h'\mid\mathbf z)
\propto
\mathcal N(\mathbf h';\mathbf 0,\mathbf C_{\mathbf f})
\mathcal N\!\left(
\mathbf z;\mathbf h',\dfrac{\delta}{2}\mathbf I
\right)
=
Z(\mathbf z)\,
\mathcal N\!\left(
\mathbf h';
\dfrac{2}{\delta}\mathbf A\mathbf z,
\mathbf A
\right),
\]
where
$\mathbf A=(\mathbf C_{\mathbf f}^{-1}+2\mathbf I/\delta)^{-1}$
and
$Z(\mathbf z)=
\mathcal N(\mathbf z;\mathbf 0,
\mathbf C_{\mathbf f}+\delta\mathbf I/2)$,
so the proposal is that of \eqref{eq:agrad}. Given $\mathbf z$, both $Z(\mathbf z)$ and the Gaussian
factor cancel from the Metropolis--Hastings ratio, leaving the
acceptance probability \eqref{eq:agrad-acceptance}, so the update
targets the exact conditional posterior of $\mathbf h$ and hence of
$\mathbf f$. The experts enter only through $\mathbf C_{\mathbf f}$,
and once this is factorized each update costs $\mathcal O(n^2)$: a
fixed number of dense matrix vector products and one evaluation of
the likelihood and its gradient, none of them expert specific. Only
the construction and summation of the $\mathbf C_j$ depends on $m$.

\end{proof}

\section{Prior specification and the illustrative example}
\label{supp:prior-specification}

\subsection{Hyperparameter elicitation}
\label{supp:prior-elicitation}

In Gaussian process classification the kernel hyperparameters are usually selected by maximizing an approximate marginal likelihood \citep{kuss2005assessing,nickisch2008approximations}, which uses the labels both to set the prior and to fit the model, and relies on approximations centered at a posterior mode. Neither is appealing here, since the PU log likelihood is not concave and $\eta$ is only partially identified. For this reason, we calibrate the priors for the GP experts from the covariate and network designs alone, before observing any labels. We keep these priors weakly informative, as vague priors leave the flat directions of the likelihood unresolved while sharp ones would displace the evidence in the labels.
 
Although the hyperpriors for every $l_j$ and $\sigma_j$ belong to the same family, the same numerical values mean very different things across experts, since a Euclidean lengthscale and a graph lengthscale are not comparable and a spectral network kernel does not have a constant diagonal. We therefore calibrate both through summaries that each kernel induces on the observed design. Write $\mathbf C_j=\sigma_j^2\mathbf K_j(l_j)$, where $\mathbf K_j(l)$ is the kernel with the amplitude factored out.
 
For the amplitudes, let $c>0$ be the target standard deviation of the total stochastic latent field. We assign known shares $w_j\ge0$ with $\sum_{j=0}^m w_j=1$ and give expert $j$ the scale $c_j=c\sqrt{w_j}$, so that $\sum_j c_j^2=c^2$. A larger $w_j$ gives expert $j$ more prior influence, while a larger $c$ makes the prior more aggressive overall, pushing the fraud probabilities toward zero and one. The prior for $\sigma_j$ is calibrated at the prior median lengthscale $\exp(\mu_{l,j})$, so that the effective marginal standard deviation there has central $95\%$ interval $[c_j/2,\,2c_j]$. For the covariate expert $\mathbf K_0$ has unit diagonal, so this standard deviation is $\sigma_0$ itself. For a network expert it is $\sigma_jd_j$, where $d_j(l)=n^{-1}\sum_{i=1}^n[\mathbf K_j(l)]_{ii}^{1/2}$ averages the marginal standard deviations of the kernel once the amplitude is factored out.
 
For the lengthscales, define the average induced correlation
\[
\bar\rho_j(l)
=
\binom n2^{-1}\sum_{i<i'}
\frac{[\mathbf K_j(l)]_{ii'}}
{[\mathbf K_j(l)]_{ii}^{1/2}[\mathbf K_j(l)]_{i'i'}^{1/2}}.
\]
Higher average correlation makes the latent field vary smoothly and globally, while lower correlation makes it local. For the covariate expert $\bar\rho_0(l)$ sweeps the whole of $(0,1)$ as $l$ varies, but on a network the attainable range can be far narrower, for instance when many nodes are isolated, so absolute correlation targets are not transferable. We compute the attainable range $[\rho_{j,\min},\rho_{j,\max}]$ over a wide grid of lengthscales, take the values of $l$ attaining two prescribed fractions of it, and match them to the $2.5\%$ and $97.5\%$ quantiles of the prior for $l_j$. We use the fractions $10\%$ and $50\%$. The two inverted lengthscales are ordered before matching, since the correlation curve rises with $l$ for the covariate kernel and falls with it on a network.
 
For a network expert the correlation curve is evaluated separately on each connected component, retaining the zero Laplacian eigenvalue. A clique of size $s$, the typical component in our layers, needs no eigendecomposition, since
\[
\mathbf K(l)=q(l)\mathbf I_s+\frac{1-q(l)}{s}\mathbf1_s\mathbf1_s^\top,
\qquad
q(l)=\exp\!\Big\{-\frac{s/(s-1)}{2l^2}\Big\}.
\]
 
Both the lengthscale and the amplitude priors are then obtained by fitting a log-normal to two positive quantiles. Matching $a<b$ at central probabilities $p_{\mathrm{lo}},p_{\mathrm{hi}}$ gives
\[
s=\frac{\log b-\log a}{z_{p_{\mathrm{hi}}}-z_{p_{\mathrm{lo}}}},
\qquad
\mu=\log a-z_{p_{\mathrm{lo}}}s,
\]
applied to the inverted lengthscales and to the amplitude interval divided by $d_j$.
 
For the structured means we take three degrees of freedom, the smallest integer for which the $t$ distribution has a finite mean and variance while its tails stay heavy enough that a well supported block can escape the shrinkage. Both mean designs have standardized columns, so $b_j=\{n^{-1}\lVert\mathbf Z_j\rVert^2_{\mathbf f}\}^{1/2}$, the root mean square size of the design of expert $j$, is essentially $\sqrt{q_j}$, where $q_j$ is its number of columns and $\lVert\cdot\rVert_{\mathbf f}$ the Frobenius norm. Since $\mathbb E[n^{-1}\lVert\mathbf Z_j\boldsymbol\gamma_j\rVert^2\mid\tau_j]=\tau_j^2b_j^2$, the size of the structured mean is governed by $\tau_jb_j$, not by $\tau_j$ alone, so calibrating $\tau_j$ directly would favor an expert merely for having more columns. We therefore calibrate $\tau_jb_j$ and match its prior median to the expert scale $c_j$ used for the amplitude, placing the mean and the stochastic part of each expert on the same scale. The median of $\tau_j=\lambda_j\lvert t\rvert$, $t\sim t_3$, is $\lambda_jt_{3,0.75}$, so $\lambda_j=c_j/\{t_{3,0.75}\,b_j\}$ makes the prior median of $\tau_jb_j$ equal to $c_j$. The full procedure is summarized in Algorithm~\ref{alg:hyperprior}. The values of $c$, the shares $w_j$, and the correlation targets used in each analysis are reported in the corresponding simulation or application section.
 
\begin{algorithm}[t]
\caption{Design-referenced hyperprior calibration for one expert}
\label{alg:hyperprior}
\begin{algorithmic}[1]
\Require kernel $\mathbf K(l)$, mean design $\mathbf Z$, expert scale $c_j$, correlation targets $u_{\mathrm{lo}}<u_{\mathrm{hi}}$, amplitude factor $r_\sigma$, search bounds $[l_{\min},l_{\max}]$
\State Evaluate $\bar\rho(l)$ on a log-spaced grid over $[l_{\min},l_{\max}]$ and let $[\rho_{\min},\rho_{\max}]$ be its range
\State Solve $\bar\rho(l)=\rho_{\min}+u_k(\rho_{\max}-\rho_{\min})$ for $l_k$, $k\in\{\mathrm{lo},\mathrm{hi}\}$
\State Fit $\log l\sim\mathcal N(\mu_l,s_l^2)$ with $\min(l_{\mathrm{lo}},l_{\mathrm{hi}})$ and $\max(l_{\mathrm{lo}},l_{\mathrm{hi}})$ as its $2.5\%$ and $97.5\%$ quantiles
\State Compute $d\gets n^{-1}\sum_{i=1}^n[\mathbf K(e^{\mu_l})]_{ii}^{1/2}$
\State Fit $\log\sigma\sim\mathcal N(\mu_\sigma,s_\sigma^2)$ with $c_j/(r_\sigma d)$ and $r_\sigma c_j/d$ as its $2.5\%$ and $97.5\%$ quantiles
\State Compute $b\gets\{n^{-1}\lVert\mathbf Z\rVert_{\mathbf f}^2\}^{1/2}$ and set $\lambda\gets c_j/(t_{3,0.75}\,b)$
\Ensure $(\mu_l,s_l)$, $(\mu_\sigma,s_\sigma)$, and $\lambda$
\end{algorithmic}
\end{algorithm}
 
For the intercept, let $(\pi_{\mathrm{lo}},\pi_{\mathrm{hi}})$ be a central $95\%$ prior interval for the baseline risk $\sigma(\beta_0)$, that is, the fraud probability of a unit whose experts all contribute zero. Matching quantiles gives $m_{\beta_0}=\{\operatorname{logit}(\pi_{\mathrm{lo}})+\operatorname{logit}(\pi_{\mathrm{hi}})\}/2$ and $s_{\beta_0}=\{\operatorname{logit}(\pi_{\mathrm{hi}})-\operatorname{logit}(\pi_{\mathrm{lo}})\}/(2z_{0.975})$. We use $(0.01,0.10)$, which gives $m_{\beta_0}=-3.40$, $s_{\beta_0}=0.61$ and a baseline risk of $3.2\%$. For $\eta$ we take $a_\eta,b_\eta\ge1$, which by Remark~\ref{rem:eta-concave} keeps the full conditional log posterior concave. \textcolor{black}{With $b_\eta$ chosen by the percentile rule below, setting $a_\eta=1$ would give a decreasing density with its mode at the origin, favoring the position that no positives have been missed, which is the hypothesis at issue.} We use $a_\eta=2$, the smallest integer above one, and choose $b_\eta$ so that a plausible upper value $\eta_{\mathrm{hi}}$ is the prior $95$th percentile. Taking $\eta_{\mathrm{hi}}=0.5$ encodes that audits are mostly successful and gives $b_\eta=6.39$, with prior mean $0.238$, mode $0.156$, and median $0.217$.

\subsection{Prior sensitivity}
\label{supp:prior-sensitivity}

\textcolor{black}{We assess prior sensitivity over $100$ datasets generated from the design of the illustrative example of Section~\ref{sec:illustration}, fitting the PU--PoE model to each dataset under $13$ prior settings. Starting from the default specification, each of five groups of priors is given a conservative and a diffuse alternative, changed one group at a time, and two joint settings change all five groups simultaneously toward their conservative or diffuse alternatives. Table~\ref{tab:prior-settings} lists the settings. The alternatives change prior location and/or spread, so the comparison is across plausible specifications rather than around a fixed center.}

\begin{table}[tbp]
\centering
\caption{Prior settings of the sensitivity analysis. Here $c$ is the target standard deviation of the total stochastic latent field, the lengthscale pairs are the fractions of the attainable correlation range matched to the prior quantiles, $\lambda_j$ are the Half-$t_3$ scales of Appendix~\ref{supp:prior-elicitation}, and the intercept pairs are central $95\%$ prior intervals for the baseline risk $\sigma(\beta_0)$.}
\label{tab:prior-settings}
\small
\begin{tabular}{llll}
\toprule
Group & Conservative & Default & Diffuse \\
\midrule
GP amplitude, $\sigma_j$ & $c=1$ & $c=2$ & $c=4$ \\
Lengthscale, $l_j$ & $(0.05,0.25)$ & $(0.10,0.50)$ & $(0.10,0.90)$ \\
Structured mean, $\boldsymbol\gamma_j$ & $\lambda_j/2$ & $\lambda_j$ & $2\lambda_j$ \\
Intercept, $\beta_0$ & $(0.005,0.05)$ & $(0.01,0.10)$ & $(0.02,0.20)$ \\
Nondetection rate, $\eta$ & $\mathrm{Beta}(1,5)$ & $\mathrm{Beta}(2,6.39)$ & $\mathrm{Beta}(1,1)$ \\
\bottomrule
\end{tabular}
\end{table}

\textcolor{black}{Figure~\ref{fig:prior-sens-pred} summarizes the posterior probabilities of the latent class. Since the units differ across datasets, they are compared by their common simulation positions rather than individually. The figure averages over the $100$ datasets, at each position, the posterior means and the lower and upper endpoints of the $80\%$ HPD intervals. The shaded bands are therefore averages of interval endpoints, not intervals of a single posterior distribution. The averaged curves are similar across the prior settings, suggesting limited sensitivity of the fitted probabilities. The lengthscale and intercept priors leave the curves essentially unchanged. The amplitude prior has the largest effect, since a larger target scale $c$ pushes the probabilities toward zero and one, and this effect is concentrated on the hidden and observed positives. The structured mean and $\eta$ priors act in the same direction to a smaller extent, and the joint settings combine these shifts.}

\begin{figure}[p]
  \centering
  \includegraphics[width=\linewidth,height=0.84\textheight,keepaspectratio]{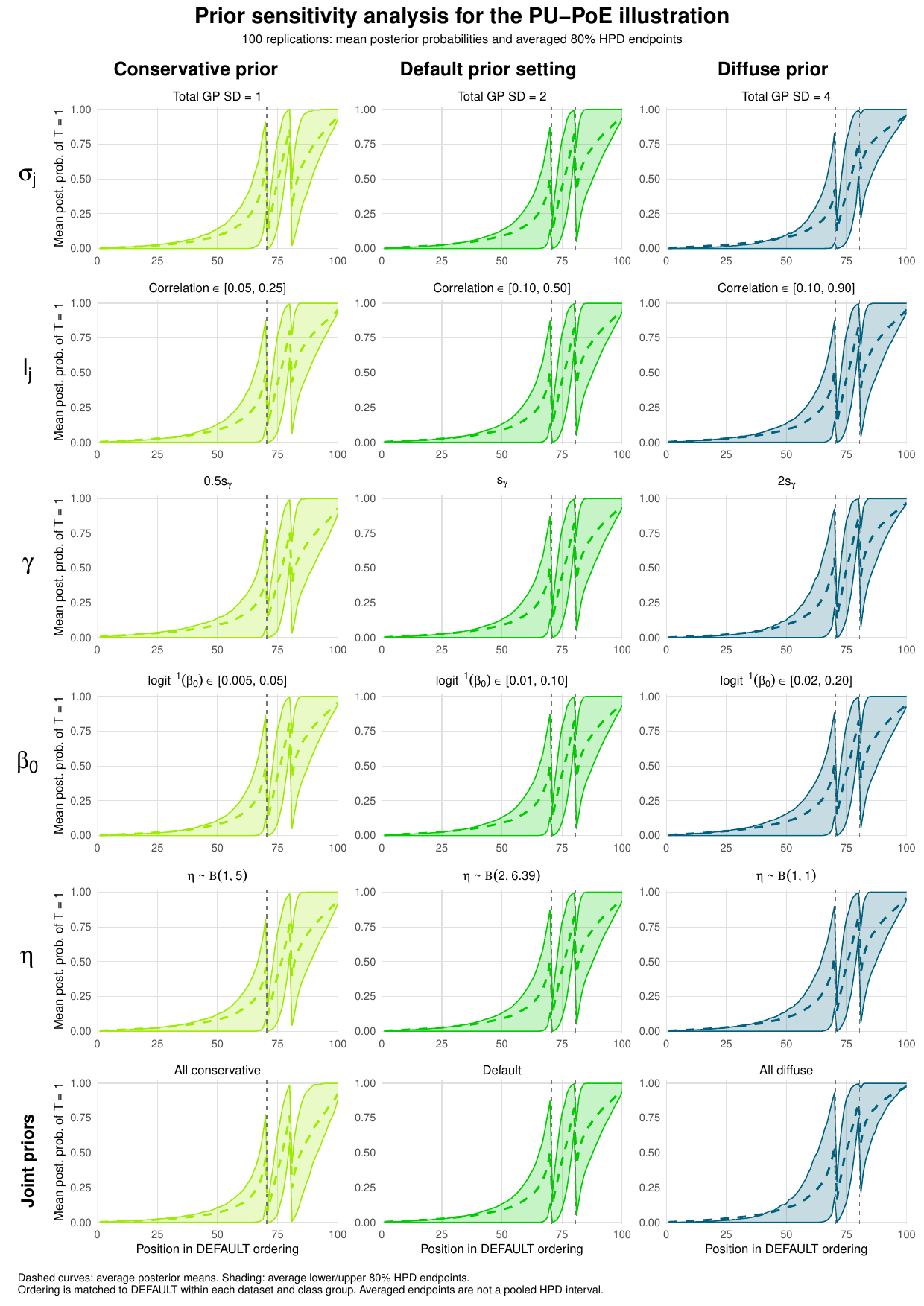}
  \caption{Prior sensitivity of the posterior probabilities over $100$ datasets. Rows correspond to the prior groups of Table~\ref{tab:prior-settings} and columns to the conservative, default and diffuse settings. Dashed curves show the average posterior means at each simulation position and shading the average lower and upper $80\%$ HPD endpoints, which do not form the interval of a single posterior. In each panel, the two vertical lines separate the true zeros (left), the hidden positives (middle) and the observed positives (right).}
  \label{fig:prior-sens-pred}
\end{figure}

\begin{figure}[tbp]
  \centering
  \includegraphics[width=\linewidth]{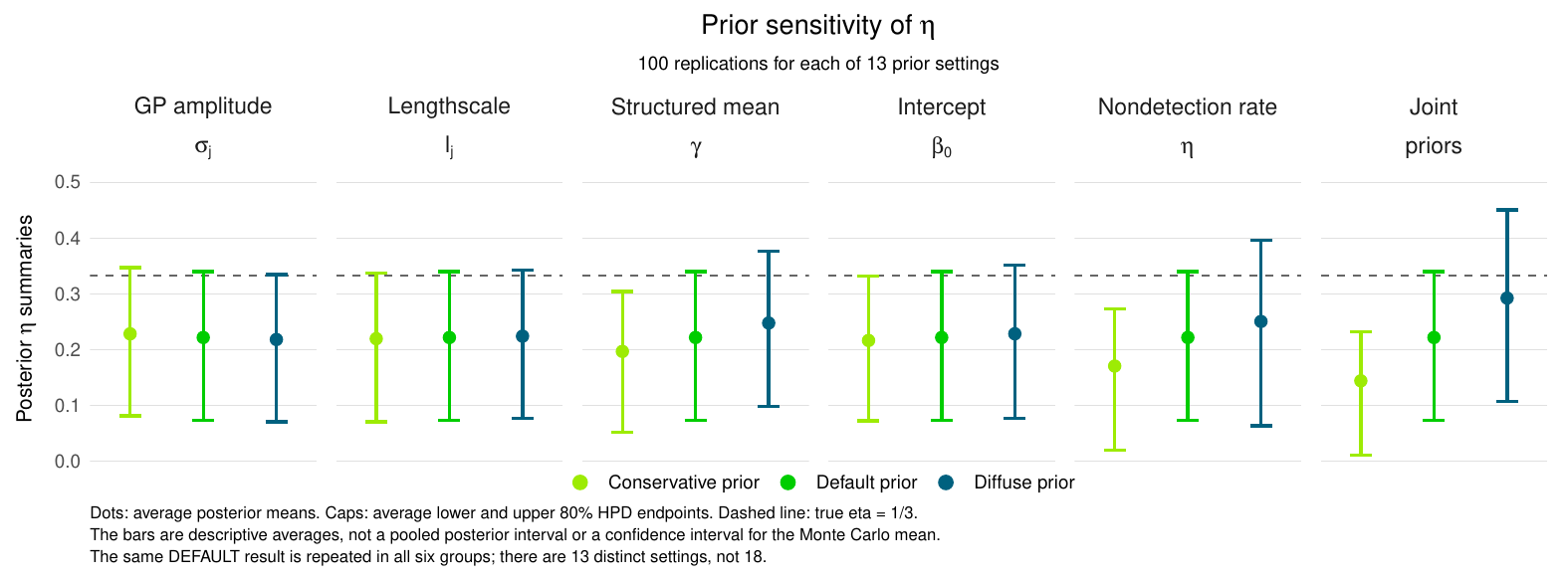}
  \caption{Prior sensitivity of $\eta$ over $100$ datasets. Dots show the average posterior mean and caps the average lower and upper $80\%$ HPD endpoints for each of the $13$ settings, with the default setting repeated in every group. The dashed line marks the true value $\eta=1/3$. The bars are descriptive averages, not the interval of a single posterior or a confidence interval for the Monte Carlo mean.}
  \label{fig:prior-sens-eta}
\end{figure}

\textcolor{black}{Figure~\ref{fig:prior-sens-eta} reports, for each setting, the averages over the $100$ datasets of the posterior mean of $\eta$ and of the endpoints of its $80\%$ HPD interval. Under the default prior, the average posterior mean is $0.22$, with average endpoints $0.07$ and $0.34$, against the true value $\eta=1/3$. The amplitude, lengthscale and intercept priors leave these summaries almost unchanged. The structured mean and the prior on $\eta$ itself have a visible effect, with average posterior means from $0.20$ to $0.25$ when the Half-$t_3$ scales are halved or doubled, and from $0.17$ under $\mathrm{Beta}(1,5)$ to $0.25$ under $\mathrm{Beta}(1,1)$. The joint settings move $\eta$ the most, from $0.15$ under the conservative to $0.29$ under the diffuse specification, whose average endpoints widen to $0.11$ and $0.45$. The underestimation under the default prior agrees with the simulation results of Section~\ref{sec:sim}, and the dependence of $\eta$ on its prior agrees with its weak identification in Section~\ref{sec:eta}, while the ordering of the probabilities is much more stable.}

\textcolor{black}{The prior also affects mixing. Averaged over the $100$ datasets, the mean effective sample size of the fitted probabilities is $2{,}806$ under the default prior. It decreases under the diffuse alternatives of the structured mean, lengthscale and $\eta$ priors, to $2{,}300$, $2{,}635$ and $2{,}524$, respectively, and most under the joint diffuse setting, to $2{,}015$, as weakly identified parameters are allowed to vary more freely. The joint conservative setting also lowers it, to $2{,}509$. The conservative alternatives of the structured mean and $\eta$ priors mix better, at $3{,}297$ and $2{,}924$, but pull $\eta$ further below its true value. The amplitude prior behaves differently, with $1{,}984$ under $c=1$ and $3{,}053$ under $c=4$, while the intercept prior has little effect. These results illustrate a trade-off between mixing and the sensitivity of inference on $\eta$.}

\subsection{Expert decomposition in the illustrative example}
\label{supp:illustration-experts}

\textcolor{black}{Figure~\ref{fig:illustration-experts} decomposes the PU--PoE fit of the illustrative example of Section~\ref{sec:illustration} into its two experts. For each unit it shows the posterior mean and $80\%$ HPD interval of $\sigma(x_{0,i})$ for the covariate expert and of $\sigma(x_{1,i})$ for the network expert, obtained by Gaussian conditioning on $\mathbf f$ as described in Section~\ref{sec:comp}. Both experts assign higher posterior means to the hidden and observed positives than to most true zeros, although the intervals are wide.}

\begin{figure}[tbp]
  \centering
  \includegraphics[width=\linewidth]{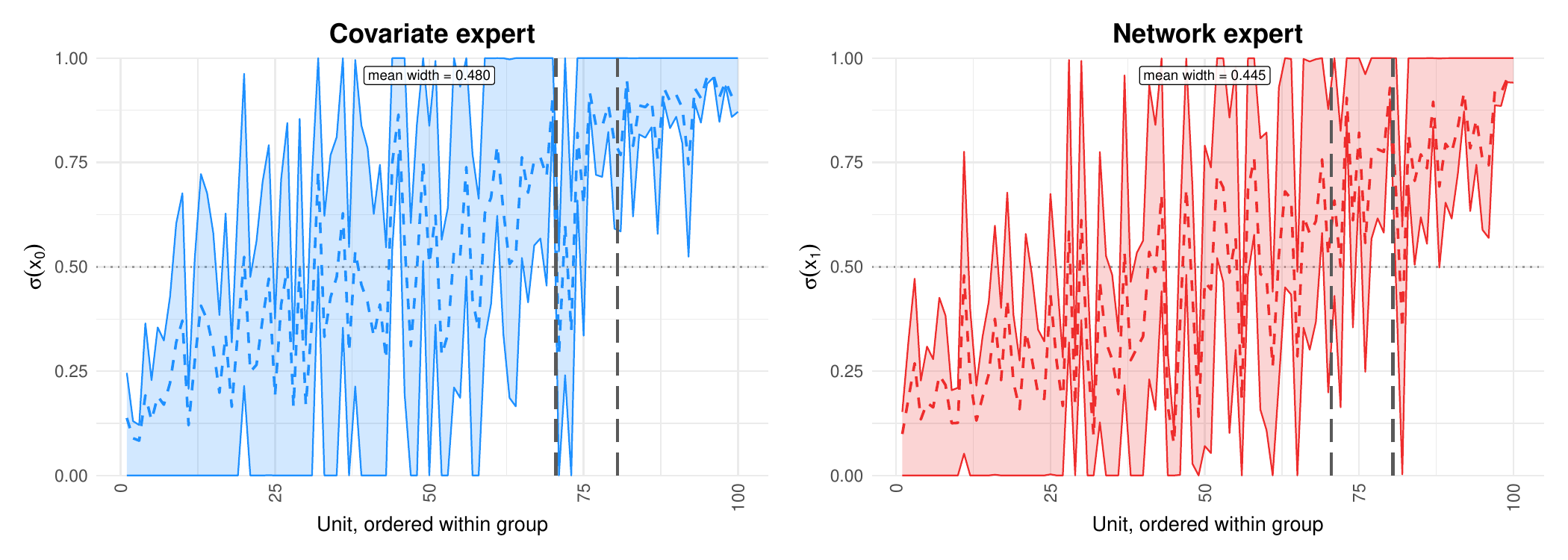}
  \caption{Expert decomposition of the PU--PoE fit in the illustrative example. Posterior means (dashed) and $80\%$ HPD intervals (shaded) of $\sigma(x_{0,i})$ for the covariate expert (left) and $\sigma(x_{1,i})$ for the network expert (right). In each panel, the two vertical lines separate the true negatives (left), the hidden positives (middle) and the observed positives (right), and  units are ordered within each group as in Figure~\ref{fig:illustration} of the main text.}
  \label{fig:illustration-experts}
\end{figure}

\section{Full conditional distributions and implementation details}
\label{supp:mcmc-details}

Each iteration of the sampler updates, in order, the latent field $\mathbf f$ jointly with the kernel hyperparameters, the noise rate, the mean coefficients, and the Half-$t$ scales with their auxiliary variables.
The last two have closed form full conditionals, the noise rate is
updated by slice sampling, and the first block by the auxiliary
gradient move, whose adaptation we detail at the end. We begin with
the closed form updates.

Collect the mean coefficients in
$\mathbf b=(\beta_0,\boldsymbol\gamma_0^\top,\ldots,
\boldsymbol\gamma_m^\top)^\top$ and the corresponding columns in
$\mathbf Z=[\mathbf 1_n,\mathbf Z_0,\ldots,\mathbf Z_m]$, so that
$\boldsymbol\mu_{\mathbf f}=\mathbf Z\mathbf b$. The priors of
Section~\ref{subsec:prior} give
$\mathbf b\sim\mathcal N(\mathbf m_0,\mathbf V_0)$ with
$\mathbf m_0=(m_{\beta_0},\mathbf 0^\top)^\top$ and
$\mathbf V_0=\operatorname{blockdiag}(s_{\beta_0}^2,
\tau_0^2\mathbf I_{q_0},\ldots,\tau_m^2\mathbf I_{q_m})$. Since
$\mathbf f\mid \left( \mathbf b,\boldsymbol\theta \right)
\sim\mathcal N(\mathbf Z\mathbf b,\mathbf C_{\mathbf f})$, with the
labels entering only through $\mathbf f$, conjugacy gives
\[
\mathbf b\mid \left( \mathbf f,\boldsymbol\theta,\boldsymbol\tau \right)
\sim
\mathcal N(\mathbf Q^{-1}\mathbf r_b,\mathbf Q^{-1}),
\qquad
\mathbf Q=\mathbf Z^\top\mathbf C_{\mathbf f}^{-1}\mathbf Z
+\mathbf V_0^{-1},
\qquad
\mathbf r_b=\mathbf Z^\top\mathbf C_{\mathbf f}^{-1}\mathbf f
+\mathbf V_0^{-1}\mathbf m_0.
\]
Reusing the decomposition of $\mathbf C_{\mathbf f}$ from the field
move, the update costs
$\mathcal O(n^2d_b+nd_b^2+d_b^3)$ with $d_b=1+\sum_{j=0}^m q_j$. For $\tau_j^2$ we use the Inverse Gamma representation of the
Half-$t_\nu(0,\lambda_j)$ prior \citep{wand2011},
$\tau_j^2\mid a_j\sim\mathrm{IG}(\nu/2,\nu/a_j)$ with
$a_j\sim\mathrm{IG}(1/2,\lambda_j^{-2})$. Both scales are 
conditionally conjugate, and for a block of $q_j$ coefficients
\[
\begin{aligned}
\tau_j^2\mid\left(\boldsymbol\gamma_j,a_j\right)
&\sim
\mathrm{IG}\!\left(
\frac{\nu+q_j}{2},\
\frac{\nu}{a_j}+\frac{\boldsymbol\gamma_j^\top\boldsymbol\gamma_j}{2}
\right), \\
a_j\mid\tau_j^2
&\sim
\mathrm{IG}\!\left(
\frac{\nu+1}{2},\
\frac{1}{\lambda_j^2}+\frac{\nu}{\tau_j^2}
\right),
\qquad j=0,\ldots,m,
\end{aligned}
\]
so each block is refreshed by two exact draws. Continuing with the noise rate, this is updated by stepping out
slice sampling on $u=\operatorname{logit}(\eta)$, with log target $\mathcal L(\mathbf f;\eta)+\log\pi(\eta)+\log\eta+\log(1-\eta)$ evaluated at $\eta=\{1+\exp(-u)\}^{-1}$, the last two terms being the Jacobian of the transformation. By Remark~\ref{rem:eta-concave},
concavity keeps the slice a single interval.

For the kernel parameters, all proposals are Gaussian random walks on
the log scale, $\boldsymbol\vartheta_j=(\log l_j,\log\sigma_j)^\top$.
We write $\boldsymbol\vartheta$ for the full vector and
$\pi(\boldsymbol\vartheta)$ for its prior density on this scale.
Adaptation uses two warm-up phases, the first to learn a step size
for each expert separately and the second to combine them into one
move over all experts. In the first, the field is updated by the
aGrad-$z$ move \eqref{eq:agrad} and the kernel parameters one expert
at a time, conditionally on the current field. Block $j$ is proposed
as $\boldsymbol\vartheta_j'\sim\mathcal N(\boldsymbol\vartheta_j,
\kappa_j\mathbf D_j\mathbf R_j\mathbf D_j)$. Here $\mathbf D_j$ holds
the prior standard deviations of the block, which set the relative
step of $\sigma_j$ against $l_j$. The correlation matrix
$\mathbf R_j$ is estimated from a rolling window of the path and
mixed with the identity to avoid degeneracy, so proposals travel
along the ridge the two parameters share. The scalar $\kappa_j$ is
adapted on the log scale and carries the overall size. With the field
fixed, the likelihood cancels and only the Gaussian and prior ratios
remain. The step size $\delta$ is adapted in the same phase toward
the acceptance rate recommended by \citet{titsias2018auxiliary} and
then held fixed, so this phase both tunes $\delta$ and learns the
proposal geometry of the kernel parameters.

The second phase moves all kernel parameters at once, through a joint
proposal built from the blocked path. We set
$\boldsymbol\Sigma_{\vartheta,0}
=\operatorname{diag}(\mathbf d_\vartheta)\mathbf R_\vartheta
\operatorname{diag}(\mathbf d_\vartheta)$, where
\[
\mathbf d_\vartheta
=
\left(
\sqrt{\kappa_0}\,s_{l,0},\
\sqrt{\kappa_0}\,s_{\sigma,0},\
\ldots,\
\sqrt{\kappa_m}\,s_{l,m},\
\sqrt{\kappa_m}\,s_{\sigma,m}
\right)^\top
\]
carries the step size learned for each coordinate, with $s_{l,j}$ and
$s_{\sigma,j}$ the prior standard deviations in $\mathbf D_j$, and
$\mathbf R_\vartheta$ is the full correlation matrix estimated from
the tail of the blocked path and treated as above. The proposal is
$\boldsymbol\vartheta'\sim\mathcal N(\boldsymbol\vartheta,
s_\vartheta^2\boldsymbol\Sigma_{\vartheta,0})$, and only the scalar
$s_\vartheta$ is adapted. All proposal parameters are fixed during
production.

\begin{algorithm}[t]
\caption{aGrad-$z$ joint move for $(\mathbf f,\boldsymbol\vartheta)$}
\label{alg:joint}
\begin{algorithmic}[1]
\Require $(\mathbf f,\boldsymbol\vartheta,\eta)$,
$\boldsymbol\mu_{\mathbf f}$, $\delta$, $\boldsymbol\Sigma_\vartheta$,
together with $\mathcal L(\mathbf f;\eta)$,
$\nabla\mathcal L(\mathbf f;\eta)$ and
$\mathbf C_{\mathbf f}(\boldsymbol\vartheta)
=\mathbf U\operatorname{diag}(\xi_i)\mathbf U^\top$ carried from the
previous call
\State Set $\mathbf h=\mathbf f-\boldsymbol\mu_{\mathbf f}$.
\State Draw $\mathbf z\sim\mathcal N\big(
\mathbf h+(\delta/2)\nabla\mathcal L(\mathbf f;\eta),\
(\delta/2)\mathbf I\big)$.
\State Draw $\boldsymbol\vartheta'\sim
\mathcal N(\boldsymbol\vartheta,\boldsymbol\Sigma_\vartheta)$.
\If{$\pi(\boldsymbol\vartheta')=0$}
\hfill{\color{blue}$\triangleright$ outside the prior support}
\State Keep the current state and exit.
\EndIf
\State Form $\mathbf C_{\mathbf f}(\boldsymbol\vartheta')
=\mathbf U'\operatorname{diag}(\xi_i')\mathbf U'^\top$.
\State Set $[\boldsymbol\Lambda_1']_{ii}
=\xi_i'\delta/(\delta+2\xi_i')$.
\State Draw $\boldsymbol\varepsilon\sim
\mathcal N(\mathbf 0,\mathbf I)$.
\State Set $\mathbf h'=\mathbf U'\boldsymbol\Lambda_1'^{1/2}\big(
\boldsymbol\Lambda_1'^{1/2}\mathbf U'^\top(2\mathbf z/\delta)
+\boldsymbol\varepsilon\big)$.
\State Set $\mathbf f'=\boldsymbol\mu_{\mathbf f}+\mathbf h'$.
\State Compute $\mathcal L(\mathbf f';\eta)$ and
$\nabla\mathcal L(\mathbf f';\eta)$.
\State Evaluate $g(\mathbf z,\mathbf h';\mathbf f')$ and
$g(\mathbf z,\mathbf h;\mathbf f)$.
\State Evaluate $Z(\mathbf z,\boldsymbol\vartheta')$ and
$Z(\mathbf z,\boldsymbol\vartheta)$.
\State Set $\log r=\mathcal L(\mathbf f';\eta)
-\mathcal L(\mathbf f;\eta)
+g(\mathbf z,\mathbf h';\mathbf f')
-g(\mathbf z,\mathbf h;\mathbf f)
+\log\dfrac{Z(\mathbf z,\boldsymbol\vartheta')}
{Z(\mathbf z,\boldsymbol\vartheta)}
+\log\dfrac{\pi(\boldsymbol\vartheta')}{\pi(\boldsymbol\vartheta)}$.
\State Draw $v\sim\mathcal U(0,1)$.
\If{$\log v<\min\{0,\log r\}$}
\State Set $(\mathbf f,\boldsymbol\vartheta)\gets
(\mathbf f',\boldsymbol\vartheta')$.
\State Store $\mathbf U'$, $\xi_i'$,
$\mathcal L(\mathbf f';\eta)$,
$\nabla\mathcal L(\mathbf f';\eta)$
\hfill{\color{blue}$\triangleright$ reused by the next call}
\EndIf
\Ensure updated $(\mathbf f,\boldsymbol\vartheta)$
\end{algorithmic}
\end{algorithm}

The joint move draws $\mathbf z$ at the current state as in
\eqref{eq:agrad}, proposes $\boldsymbol\vartheta'$, and draws the
field from the covariance of the proposed kernel. Each proposal
therefore needs a fresh spectral decomposition
$\mathbf C_{\mathbf f}(\boldsymbol\vartheta')
=\mathbf U'\operatorname{diag}(\xi_i')\mathbf U'^\top$, which avoids
inverting $\mathbf C_{\mathbf f}$ since
$\mathbf A(\boldsymbol\vartheta')
=\mathbf U'\boldsymbol\Lambda_1'\mathbf U'^\top$ with
$[\boldsymbol\Lambda_1']_{ii}=\xi_i'\delta/(\delta+2\xi_i')$, and
\[
\mathbf h'\mid\left(\mathbf z,\boldsymbol\vartheta'\right)
\sim
\mathcal N\!\left(
\frac{2}{\delta}\mathbf A(\boldsymbol\vartheta')\mathbf z,\
\mathbf A(\boldsymbol\vartheta')
\right),
\qquad
\mathbf f'=\boldsymbol\mu_{\mathbf f}+\mathbf h'.
\]
Writing
$Z(\mathbf z,\boldsymbol\vartheta)
=\phi(\mathbf z;\mathbf 0,
\mathbf C_{\mathbf f}(\boldsymbol\vartheta)+\delta\mathbf I/2)$
with $\phi$ the Gaussian density, and using the symmetry of the
random walk on $\boldsymbol\vartheta$, the joint log acceptance ratio
is
\[
\log r
=
\mathcal L(\mathbf f';\eta)-\mathcal L(\mathbf f;\eta)
+g(\mathbf z,\mathbf h';\mathbf f')-g(\mathbf z,\mathbf h;\mathbf f)
+\log\frac{Z(\mathbf z,\boldsymbol\vartheta')}
{Z(\mathbf z,\boldsymbol\vartheta)}
+\log\frac{\pi(\boldsymbol\vartheta')}{\pi(\boldsymbol\vartheta)}.
\]
The first two differences are the usual aGrad-$z$ correction for the
field. The $Z$ ratio assesses $\boldsymbol\vartheta'$ against
$\mathbf z$ rather than the field itself, a flatter target that
admits larger moves. Because the $\mathcal O(n^3)$ decomposition is
already paid for, the sampler then refreshes the field at the
accepted kernel at no extra factorization cost.
Algorithm~\ref{alg:joint} gives the joint move step by step and
Algorithm~\ref{alg:schedule} the sampling stage, in which every
proposal parameter is held at its final warm-up value.

\begin{algorithm}[t]
\caption{Sampling stage for the PU--PoE model}
\label{alg:schedule}
\begin{algorithmic}[1]
\Require final warm-up state
$(\mathbf f,\boldsymbol\vartheta,\eta,\mathbf b,\boldsymbol\tau,
\mathbf a)$, the decomposition of
$\mathbf C_{\mathbf f}(\boldsymbol\vartheta)$, $\delta$,
$\boldsymbol\Sigma_\vartheta$, $N$
\For{$t=1,\ldots,N$}
\State Update $(\mathbf f,\boldsymbol\vartheta)$ by
Algorithm~\ref{alg:joint}.
\State Refresh $\mathbf f$ by the aGrad-$z$ move
\eqref{eq:agrad}--\eqref{eq:agrad-acceptance} at the current kernel.
\State Update $\eta$ by stepping out slice sampling on
$\operatorname{logit}(\eta)$.
\State Recompute $\mathcal L(\mathbf f;\eta)$ and
$\nabla\mathcal L(\mathbf f;\eta)$.
\State Draw $\mathbf b\sim
\mathcal N(\mathbf Q^{-1}\mathbf r_b,\mathbf Q^{-1})$.
\State Set $\boldsymbol\mu_{\mathbf f}=\mathbf Z\mathbf b$.
\State Draw $\tau_j^2\mid \left( \boldsymbol\gamma_j,a_j\right)$, $j=0,\ldots,m$.
\State Draw $a_j\mid\tau_j^2$, $j=0,\ldots,m$.
\State Store
$(\mathbf f,\boldsymbol\vartheta,\eta,\mathbf b,\boldsymbol\tau)$.
\EndFor
\Ensure $N$ draws from the posterior \eqref{eq:post-target}
\end{algorithmic}
\end{algorithm}

\section{Simulation study details and additional results}
\label{supp:simulation}

\subsection{The nonlinear design} \label{supp:nonlinear}

Write $\mathcal I_1$ for the $30$ units carrying covariate signal
and $\mathcal I_0$ for the remaining $370$ units. Similarly, let
$\mathcal J_1$ and $\mathcal J_0$ denote the corresponding partition
according to network signal. The ten covariates are generated in five
bivariate blocks, as summarized in Table~\ref{tab:cov-blocks}.
Conditional on the profile indicator shared by the second and fourth
blocks for units in $\mathcal I_1$, the random quantities entering the
five blocks are generated independently. All ten covariate columns are
standardized before model fitting. Only the first block is accessible to a linear classifier. In the
second the two components of the mixture sit symmetrically about the
mean of $\mathcal I_0$, so the two classes share a mean. In the third
they share a center and differ only in radial concentration, in the
fourth they follow the same parabolic trend and differ only in the
displacement, and in the fifth they have identical marginals and
opposite dependence.

\begin{table}[!t]
\centering
\caption{Distributions of the five covariate blocks conditional on
covariate-signal membership.}
\label{tab:cov-blocks}

\footnotesize
\renewcommand{\arraystretch}{1.4}
\setlength{\tabcolsep}{4pt}

\begin{tabular}{
@{}>{\raggedright\arraybackslash}p{0.11\linewidth}
p{0.42\linewidth}
p{0.42\linewidth}@{}
}
\toprule
Block & $\mathcal I_0$ & $\mathcal I_1$ \\
\midrule

Mean shift
&
$\begin{pmatrix}
X_{i1}\\
X_{i2}
\end{pmatrix}
\sim
\mathcal N\!\left(
\begin{pmatrix}
0\\
0
\end{pmatrix},
\begin{pmatrix}
1 & 0\\
0 & 1
\end{pmatrix}
\right)$
&
$\begin{pmatrix}
X_{i1}\\
X_{i2}
\end{pmatrix}
\sim
\mathcal N\!\left(
\begin{pmatrix}
0.5\\
0.5
\end{pmatrix},
\begin{pmatrix}
1 & 0\\
0 & 1
\end{pmatrix}
\right)$
\\
\hdashline

Two profiles
&
$\begin{pmatrix}
X_{i3}\\
X_{i4}
\end{pmatrix}
\sim
\mathcal N\!\left(
\begin{pmatrix}
0\\
1
\end{pmatrix},
\begin{pmatrix}
2^2 & 0\\
0 & 1.2^2
\end{pmatrix}
\right)$
&
$\begin{aligned}
\begin{pmatrix}
X_{i3}\\
X_{i4}
\end{pmatrix}
&\sim
\mathcal N\!\Biggl(
\begin{pmatrix}
0\\
1
\end{pmatrix}
+
\xi_i
\begin{pmatrix}
2.6\\
-1.4
\end{pmatrix}, \\
&\quad
\begin{pmatrix}
0.9^2 & 0\\
0 & 0.7^2
\end{pmatrix}
\Biggr)
\end{aligned}$
\\
\hdashline

Ring
&
$\begin{aligned}
(X_{i5},X_{i6})
&=R_i(\cos\vartheta_i,\sin\vartheta_i),\\
\vartheta_i
&\sim \operatorname{Unif}(0,2\pi),\\
U_i
&\sim \operatorname{Unif}(0,1),\\
R_i
&=2\sqrt{U_i}
\end{aligned}$
&
$\begin{aligned}
(X_{i5},X_{i6})
&=R_i(\cos\vartheta_i,\sin\vartheta_i),\\
\vartheta_i
&\sim \operatorname{Unif}(0,2\pi),\\
R_i
&\sim \operatorname{TN}_{[0,2]}(1.5,0.25^2)
\end{aligned}$
\\
\hdashline

Parabola
&
$\begin{aligned}
X_{i7}
&\sim \mathcal N(0,1),\\
\varepsilon_i
&\sim \mathcal N(0,0.4^2),\\
X_{i8}
&=0.35X_{i7}^2+\varepsilon_i
\end{aligned}$
&
$\begin{aligned}
X_{i7}
&\sim \mathcal N(0,1),\\
\varepsilon_i
&\sim \mathcal N(0,0.35^2),\\
X_{i8}
&=0.35X_{i7}^2-0.8\,\xi_i+\varepsilon_i
\end{aligned}$
\\
\hdashline

Opposite correlation
&
$\begin{pmatrix}
X_{i9}\\
X_{i10}
\end{pmatrix}
\sim
\mathcal N\!\left(
\begin{pmatrix}
0\\
0
\end{pmatrix},
\begin{pmatrix}
1 & 0.85\\
0.85 & 1
\end{pmatrix}
\right)$
&
$\begin{pmatrix}
X_{i9}\\
X_{i10}
\end{pmatrix}
\sim
\mathcal N\!\left(
\begin{pmatrix}
0\\
0
\end{pmatrix},
\begin{pmatrix}
1 & -0.85\\
-0.85 & 1
\end{pmatrix}
\right)$
\\

\bottomrule
\end{tabular}

\vspace{0.7ex}

\begin{minipage}{0.94\linewidth}
\footnotesize
\textit{Note:}
$\mathcal I_1$ contains the $30$ covariate-signal units and
$\mathcal I_0$ the remaining $370$.
For $i\in\mathcal I_1$,
$\Pr(\xi_i=-1)=\Pr(\xi_i=1)=1/2$, with the same $\xi_i$
used in the two-profiles and parabola blocks.
$\operatorname{TN}_{[a,b]}(\mu,\sigma^2)$ denotes a
$\mathcal N(\mu,\sigma^2)$ distribution truncated to $[a,b]$.
\end{minipage}

\end{table}

The network is built in three layers and then merged. Each layer
starts from a CRP with concentration $\alpha=2$, the first over all
$400$ units, the second over $\mathcal J_1$ and the third over
$\mathcal J_0$. Inside a group of size $g$ produced by the CRP every
pair is joined independently with probability
$\min\{1,\,2/(g-1)\}$, so each unit gains about two connections
within its group whether the group is small or large. An edge is
kept if it appears in at least one layer, and the network is then
closed once over triangles, so that units sharing neighbors are
given a chance to become adjacent themselves. Finally, each shared neighbor gives an independent closure probability of
$0.05$, so a pair with $c$ neighbors in common is joined with
probability $1-0.95^{c}$, which grows with the number of neighbors
they share.

Table~\ref{tab:net-summaries} summarizes the result over $1{,}000$
generated networks. None of the three partitions collapses into a
single group, and the second and third layers add the same number of
connections per unit, so both classes end up with similar degrees.
The class signal is therefore not in how many connections a unit has
but in whom it connects to, with a unit in $\mathcal J_1$ having ten
times as many neighbors in $\mathcal J_1$ as a unit outside it.

\begin{table}[tbp]
\centering
\caption{Network summaries over $1{,}000$ generated networks, means
with standard deviations in parentheses.}
\label{tab:net-summaries}

\footnotesize
\renewcommand{\arraystretch}{1.25}
\begin{tabular}{lcc}
\toprule
& $\mathcal J_1$ & $\mathcal J_0$ \\
\midrule
CRP groups in the shared layer
& \multicolumn{2}{c}{$11.09$ $(2.91)$} \\
CRP groups in the own layer
& $6.09$ $(2.02)$ & $10.99$ $(3.03)$ \\
\midrule
Edge density within class
& $0.0717$ $(0.0128)$ & $0.0123$ $(0.0005)$ \\
Edge density across classes
& \multicolumn{2}{c}{$0.0063$ $(0.0009)$} \\
\midrule
Mean degree
& $4.42$ $(0.53)$ & $4.72$ $(0.21)$ \\
neighbors belonging to $\mathcal J_1$
& $2.08$ $(0.37)$ & $0.19$ $(0.03)$ \\
\bottomrule
\end{tabular}
\end{table}

Figure~\ref{fig:sim-dgp} shows the class conditional covariate densities on the left and network summaries based on $1{,}000$ generated networks on the right, including edge probabilities by class pair, the degree distributions of the two classes, the share of neighbors that are positives, and a representative network shown both as a graph and as an adjacency matrix.

\begin{figure}[tbp]
  \centering
  \includegraphics[width=0.775\linewidth]
    {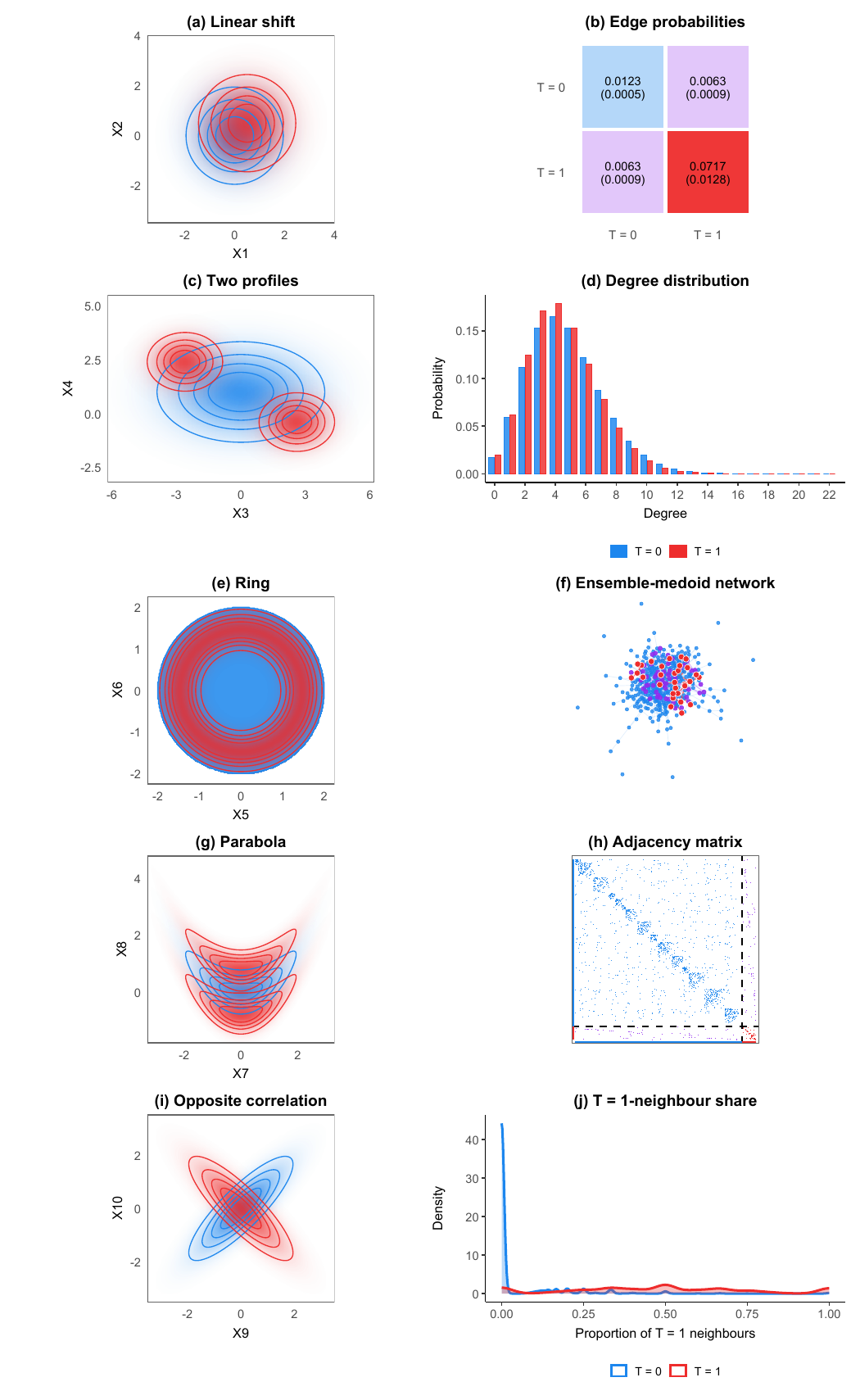}
    
  \caption{Nonlinear design. Class conditional covariate densities are shown on the left. Network summaries on the right are based on $1{,}000$ generated networks, except panels (f) and (h), which show the ensemble medoid network and its adjacency matrix.}
  \label{fig:sim-dgp}
\end{figure}

\subsection{The linear design} \label{supp:linear}

The linear design keeps the sample size, the class allocation and the
hiding rate of the nonlinear one, and replaces both generating
mechanisms by linear ones. The ten covariates are drawn
independently, $\mathcal N(0.5,1)$ for the $30$ units with covariate
signal and $\mathcal N(0,1)$ for the remaining $370$, so a single
hyperplane separates the classes. The network is a two block SBM in
which the $30$ units carrying network signal form one block, with
edge probabilities $0.25$ within that block, $0.10$ within the
remaining units and $0.08$ across, so the structure is assortative. The linear design is considered only under the SCAR hiding mechanism, as described in Subsection~\ref{supp:hiding}.

\subsection{The two hiding mechanisms} \label{supp:hiding}

Under random hiding, five of the fifteen positives are drawn
uniformly without replacement from each of the three signal groups,
so the hidden set is independent of the realized covariates and
network. This is the mechanism used in both designs.

Covariate dependent hiding applies to the nonlinear design and
selects the hidden positives through an oracle score built from the
true generating densities of Section~\ref{supp:nonlinear}. For block
$b$ let $f_b(\cdot\mid T=1)$ and $f_b(\cdot\mid T=0)$ be the class
conditional densities, where for the second and fourth blocks the
positive density is the equal mixture over the two profiles. The
score of unit $i$ is the total log likelihood ratio,
\[
\psi_i=\sum_{b=1}^{5}\log\frac{f_b(\mathbf x_{ib}\mid T_i=1)}
{f_b(\mathbf x_{ib}\mid T_i=0)},
\]
so a large $\psi_i$ means that the covariates of unit $i$ look
strongly fraudulent under the true model. The score is standardized
across the $45$ positives and clipped,
\[
\tilde\psi_i=\max\big[-3,\min\{3,
(\psi_i-\bar\psi)/\mathrm{sd}(\psi)\}\big],
\]
and the fifteen hidden positives are drawn without replacement with
probability proportional to $\exp(-\lambda\tilde\psi_i)$ with
$\lambda=2$. The dependence on the covariates is what makes the
mechanism SAR rather than SCAR. The exponent is negative because this
weight is the propensity to be hidden, so the positives with the
weakest covariate signal are the most likely to go unrecorded and the
ones left recorded are the easiest to find. Setting $\lambda=0$
returns the random mechanism, and the score is used only to generate
the labels.

\subsection{Competing method settings} \label{supp:competitors}

AdaSampling \citep{yang2018adasampling} is fitted with the \textsf{AdaSampling} package \citep{pkg_AdaSampling}, using $C=50$ SVM classifiers and a sample factor of one. Weighted logistic regression is an unpenalized binomial GLM with logit link, unit weight on the observed positives and weight $1/2$ on the zero labeled units. INLA \citep{rue2009approximate} is fitted with the \textsf{INLA} package \citep{pkg_INLA} as a binomial logistic regression in which the intercept enters as an explicit constant column, so that all coefficients, the intercept included, receive the default Gaussian prior with mean zero and precision $0.001$.

The GCN \citep{KipfWelling2017GCN} has two graph convolutional layers with $64$ hidden ReLU units and dropout $0.5$, uses the symmetrically normalized adjacency matrix with self loops, and is trained on the observed labels for $200$ epochs with Adam, learning rate $10^{-2}$ and weight decay $5\times10^{-4}$. The nnPU classifier \citep{kiryo2017nnpu} has one hidden layer of $100$ ReLU units and uses the sigmoid surrogate loss, trained with $5{,}000$ full batch Adam updates at learning rate $10^{-3}$, weight decay $0.005$ and correction parameters $\beta=0$ and $\gamma=1$. Its positive sample consists of the $30$ observed positives and its unlabeled sample of all $400$ units, and its class prior is set to the true value $\pi_p=45/400$. Ranking metrics use its raw decision scores, while log loss and Brier scores use their logistic transformation. For AdaSampling, nnPU, WLR and INLA, network information is added through the standardized modularity eigenvectors.

The linear Bayesian models keep only the intercept and the structured mean terms of the corresponding GP models using the same priors. Namely, $\beta_0\sim\mathcal N(-3.40,0.61^2)$, the Half-$t_3$ block scales of Appendix~\ref{supp:prior-elicitation} recalibrated for each replication, and $\eta\sim\mathrm{Beta}(2,6.39)$ in the PU versions, while the posterior sampling is carried out by P\'olya--Gamma data augmentation \citep{polson2013bayesian}. In their network versions, the structured mean uses the same standardized modularity eigenvectors. Finally, WLR and INLA produce predictive label draws in the same way as the Bayesian models, using draws of the linear predictor from its asymptotic Gaussian distribution for WLR and from its posterior marginals for INLA.

\subsection{Additional results} \label{supp:sim}

Table~\ref{tab:sim-linear-scar} reports all metrics for the linear design under SCAR. Coverage and interval scores are unavailable for Ada, nnPU and the GCN, which do not produce predictive draws for the latent labels. For the Bayesian models, predictive label draws are generated as $T_i^{(s)}\sim\mathrm{Bernoulli}\{\sigma(f_i^{(s)})\}$, so the coverage and interval scores target the same latent class $T_i$ in the PU and non-PU specifications and can be compared directly.

\begin{table}[!t]
\centering
\caption{Linear design under SCAR. Monte Carlo means over $100$
replications, standard deviations in parentheses. Best value in bold,
closest to nominal for coverage.}
\label{tab:sim-linear-scar}
\begingroup
\scriptsize
\setlength{\tabcolsep}{1.4pt}
\renewcommand{\arraystretch}{1.10}
\setlength{\aboverulesep}{0.15ex}
\setlength{\belowrulesep}{0.15ex}

\resizebox{\linewidth}{!}{
\begin{tabular}{lcccccc}
\toprule
Method
& \shortstack{Overall\\AUC}
& \shortstack{Hidden\\AUC}
& \shortstack{Overall\\Recall@45}
& \shortstack{Hidden\\Recall@15}
& \shortstack{Overall\\LogLoss}
& \shortstack{Hidden\\LogLoss} \\
\midrule
Lin-Cov & 0.766 (0.031) & 0.713 (0.062) & 0.406 (0.055) & 0.219 (0.089) & 0.310 (0.013) & 2.418 (0.177) \\
Lin-PU-Cov & 0.766 (0.031) & 0.712 (0.062) & 0.404 (0.053) & 0.218 (0.084) & 0.301 (0.013) & 2.202 (0.186) \\
Lin-Cov+Net & 0.806 (0.041) & 0.754 (0.070) & 0.451 (0.071) & 0.255 (0.114) & 0.291 (0.021) & 2.301 (0.237) \\
Lin-PU-Cov+Net & 0.808 (0.042) & 0.754 (0.070) & 0.450 (0.070) & 0.251 (0.116) & 0.281 (0.023) & 2.052 (0.265) \\
\hdashline
GP-Cov & 0.846 (0.028) & 0.705 (0.064) & 0.540 (0.051) & 0.202 (0.084) & 0.273 (0.013) & 2.513 (0.195) \\
GP-PU-Cov & 0.848 (0.027) & 0.704 (0.063) & 0.533 (0.052) & 0.198 (0.087) & 0.263 (0.013) & 2.205 (0.212) \\
GP-PoE & 0.941 (0.018) & 0.826 (0.053) & 0.774 (0.043) & \textbf{0.337} (0.125) & 0.195 (0.011) & 2.426 (0.204) \\
GP-PU-PoE & 0.939 (0.018) & 0.821 (0.054) & 0.764 (0.043) & 0.333 (0.120) & \textbf{0.191} (0.012) & 2.087 (0.237) \\
GP-PU-PoE$_{\beta_0}$ & \textbf{0.943} (0.017) & \textbf{0.830} (0.052) & \textbf{0.778} (0.037) & \textbf{0.337} (0.110) & 0.200 (0.009) & 2.233 (0.139) \\
\hdashline
Ada & 0.786 (0.118) & 0.689 (0.093) & 0.471 (0.099) & 0.178 (0.089) & 0.479 (0.038) & 0.884 (0.130) \\
Ada+Net & 0.837 (0.075) & 0.742 (0.079) & 0.534 (0.084) & 0.210 (0.108) & 0.456 (0.040) & \textbf{0.813} (0.165) \\
nnPU & 0.729 (0.040) & 0.688 (0.069) & 0.411 (0.058) & 0.169 (0.080) & 0.431 (0.047) & 4.370 (0.494) \\
nnPU+Net & 0.774 (0.053) & 0.718 (0.075) & 0.485 (0.092) & 0.186 (0.100) & 0.363 (0.066) & 4.078 (0.610) \\
WLR & 0.766 (0.031) & 0.708 (0.063) & 0.400 (0.051) & 0.207 (0.088) & 0.299 (0.013) & 1.905 (0.239) \\
WLR+Net & 0.808 (0.042) & 0.750 (0.069) & 0.446 (0.067) & 0.242 (0.111) & 0.278 (0.024) & 1.838 (0.288) \\
INLA & 0.765 (0.031) & 0.709 (0.063) & 0.405 (0.052) & 0.209 (0.088) & 0.315 (0.013) & 2.623 (0.276) \\
INLA+Net & 0.806 (0.042) & 0.749 (0.069) & 0.450 (0.068) & 0.241 (0.114) & 0.297 (0.022) & 2.585 (0.351) \\
GCN & 0.799 (0.034) & 0.746 (0.074) & 0.408 (0.052) & 0.191 (0.093) & 0.313 (0.008) & 2.324 (0.145) \\
\bottomrule
\end{tabular}
}

\vspace{0.3ex}

\resizebox{\linewidth}{!}{
\begin{tabular}{lcccccc}
\toprule
Method
& \shortstack{Overall\\Cov$_{80}$}
& \shortstack{Hidden\\Cov$_{80}$}
& \shortstack{Overall\\IS$_{80}$}
& \shortstack{Hidden\\IS$_{80}$}
& \shortstack{Overall\\Brier}
& \shortstack{Hidden\\Brier} \\
\midrule
Lin-Cov & \textbf{0.951} (0.007) & 0.488 (0.116) & 0.704 (0.066) & 5.608 (1.041) & 0.089 (0.003) & 0.778 (0.033) \\
Lin-PU-Cov & 0.962 (0.006) & 0.575 (0.112) & 0.685 (0.058) & 4.822 (1.008) & 0.087 (0.004) & 0.728 (0.042) \\
Lin-Cov+Net & 0.957 (0.008) & 0.527 (0.123) & 0.650 (0.078) & 5.254 (1.109) & 0.085 (0.006) & 0.747 (0.053) \\
Lin-PU-Cov+Net & 0.968 (0.008) & 0.628 (0.124) & 0.626 (0.076) & 4.348 (1.116) & 0.081 (0.007) & 0.682 (0.071) \\
\hdashline
GP-Cov & 0.966 (0.007) & 0.443 (0.129) & 0.553 (0.063) & 6.010 (1.158) & 0.080 (0.004) & 0.796 (0.032) \\
GP-PU-Cov & 0.978 (0.006) & 0.582 (0.120) & 0.552 (0.056) & 4.762 (1.082) & 0.076 (0.004) & 0.729 (0.046) \\
GP-PoE & 0.980 (0.005) & 0.479 (0.133) & \textbf{0.350} (0.045) & 5.692 (1.195) & 0.057 (0.004) & 0.791 (0.041) \\
GP-PU-PoE & 0.986 (0.005) & 0.619 (0.126) & 0.386 (0.045) & 4.432 (1.132) & \textbf{0.053} (0.003) & 0.712 (0.063) \\
GP-PU-PoE$_{\beta_0}$ & 0.984 (0.004) & 0.575 (0.113) & 0.353 (0.047) & 4.828 (1.019) & 0.056 (0.003) & 0.772 (0.028) \\
\hdashline
Ada & -- & -- & -- & -- & 0.152 (0.015) & 0.325 (0.049) \\
Ada+Net & -- & -- & -- & -- & 0.144 (0.016) & \textbf{0.293} (0.063) \\
nnPU & -- & -- & -- & -- & 0.083 (0.010) & 0.892 (0.049) \\
nnPU+Net & -- & -- & -- & -- & 0.070 (0.011) & 0.866 (0.059) \\
WLR & 0.979 (0.004) & 0.739 (0.099) & 0.714 (0.060) & 3.345 (0.887) & 0.085 (0.004) & 0.629 (0.054) \\
WLR+Net & 0.982 (0.005) & \textbf{0.763} (0.102) & 0.642 (0.077) & \textbf{3.133} (0.915) & 0.080 (0.007) & 0.599 (0.072) \\
INLA & 0.953 (0.006) & 0.495 (0.110) & 0.698 (0.057) & 5.548 (0.988) & 0.087 (0.003) & 0.769 (0.043) \\
INLA+Net & 0.957 (0.007) & 0.518 (0.117) & 0.648 (0.082) & 5.337 (1.051) & 0.083 (0.006) & 0.743 (0.063) \\
GCN & -- & -- & -- & -- & 0.093 (0.002) & 0.796 (0.024) \\
\bottomrule
\end{tabular}
}
\endgroup

\end{table}

\begin{table}[!t]
\centering
\caption{Estimation of $\eta$ and $H$ in the linear design under SCAR over $100$ replications.}
\label{tab:eta-linear-scar}
\small
\setlength{\tabcolsep}{6pt}
\begin{tabular}{lrrrrrr}
\toprule
Method
& Mean $\hat{\eta}$ & RMSE$_{\eta}$ & Cov$_{80,\eta}$ & IS$_{80,\eta}$ & Mean $\hat H$ & RMSE$_H$ \\
\midrule
GP-PU-PoE
& 0.209 & 0.127 & 0.870 & 0.329 & 7.969 & 7.149 \\
GP-PU-PoE$_{\beta_0}$
& 0.181 & 0.153 & 0.450 & 0.386 & 6.637 & 8.390 \\
\bottomrule
\end{tabular}

\end{table}

The proposed model comes first on the linear design as well, although this design is favorable to the competing methods, since the covariates are linearly separable and the network is a stochastic block model. It reaches an overall AUC of $0.943$ and a hidden AUC of $0.830$, against $0.808$ and $0.754$ for the best linear Bayesian model and $0.837$ and $0.750$ for the best competitor. Both sources contribute, the covariates raising the overall AUC from $0.766$ to $0.848$ once read through a Gaussian process, and the network expert a further $0.091$ on top of that, to $0.939$. Regarding $\eta$, the contribution of the structured mean is visible here, as Table~\ref{tab:eta-linear-scar} shows. The full model gives a mean $\hat\eta$ of $0.209$ with coverage $0.870$, close to the nominal $0.80$, while GP-PU-PoE$_{\beta_0}$, which replaces the structured mean by an intercept, gives $0.181$ with coverage $0.450$.

\section{Additional application results}
\label{supp:application}

\subsection{Elicited priors} \label{supp:elicited-priors}

\textcolor{black}{For the application, we use the index $j=0$ for the covariate expert and $j=1,2,3$ for Networks~1--3, respectively.} Applying Algorithm~\ref{alg:hyperprior} to the observed designs with $c=2$ and shares $(1/2,1/6,1/6,1/6)$ gives the expert scales $c_0=\sqrt{2}$ for the covariates and $c_j=\sqrt{6}/3$ for each network. The resulting expert specific hyperpriors are reported in Table~\ref{tab:supp-priors}. The attainable correlation ranges support the use of relative rather than common absolute targets. The covariate kernel spans the whole unit interval, so the $10\%$ and $50\%$ levels correspond to absolute targets of $0.100$ and $0.500$. For the network kernels, the attainable maxima are $0.801$, $0.324$ and $0.397$ for \textcolor{black}{Network~1, Network~2 and Network~3}, respectively. Since the attainable minima are essentially zero, the same relative levels give the target pairs $(0.080,0.400)$,
$(0.032,0.162)$ and $(0.040,0.198)$. For the four mean designs, whose columns are standardized,
$b_j=\{n^{-1}\lVert\mathbf Z_j\rVert_{\mathrm F}^{2}\}^{1/2}$ gives $3.602$, $2.234$, $1.413$ and $0.999$. We set $\lambda_j=c_j/(t_{3,0.75}b_j)$ so that the prior median of $\tau_jb_j$ equals the one standard deviation expert scale $c_j$. Using the rounded values of $\lambda_j$ gives medians $1.405$, $0.820$, $0.821$ and $0.818$, matching the targets $1.414$ for the
covariate expert and $0.817$ for each network expert. Finally, the priors for the intercept and the PU parameter are $\beta_0\sim\mathcal N(-3.4,0.61^2)$ and
$\eta\sim\mathrm{Beta}(2,6.39)$, respectively.
\begin{table}[tbp]
\centering
\caption{Elicited hyperpriors for the four experts.}
\label{tab:supp-priors}
\footnotesize
\setlength{\tabcolsep}{4pt}
\resizebox{\linewidth}{!}{
\begin{tabular}{lrrrrrrrr}
\toprule
Expert & $q_j$ & Corr.\ targets & $l$ median & $l$ interval &
$\sigma$ median & $\sigma$ interval & $b_j$ & $\lambda_j$ \\
\midrule
Covariates & 13 & 0.100, 0.500 & 2.746 & [1.892, 3.984] &
1.419 & [0.715, 2.818] & 3.602 & 0.51 \\
Network 1 & 5 & 0.080, 0.400 & 0.317 & [0.282, 0.356] &
6.297 & [3.171, 12.503] & 2.234 & 0.48 \\
Network 2 & 2 & 0.032, 0.162 & 0.350 & [0.305, 0.401] &
4.712 & [2.373, 9.356] & 1.413 & 0.76 \\
Network 3 & 1 & 0.040, 0.198 & 0.343 & [0.305, 0.386] &
5.366 & [2.702, 10.655] & 0.999 & 1.07 \\
\bottomrule
\end{tabular}
}

\end{table}

\subsection{Posterior summaries and diagnostics}
\label{supp:posterior-diagnostics}

Figure~\ref{fig:supp-traces-full} shows the parameter traces for the
full PU-PoE model. Table~\ref{tab:supp-parameters} reports the
posterior summaries and effective sample sizes of the sampled
parameters, while Table~\ref{tab:supp-ess} summarizes the effective
sample sizes for the latent field and the induced probabilities.
Figure~\ref{fig:supp-eta} shows a histogram of the posterior of
$\eta$.

Furthermore, as a diagnostic for the anchor condition in
Assumption~\ref{ass:lower-anchor}, at each posterior draw we compared
the maximum fitted latent probability, $\max_i \sigma(f_i)$, with the
threshold $0.99$. The maximum exceeded $0.99$ in every retained draw,
giving a posterior probability of $1.000$, while its posterior
$2.5\%$ quantile was also $1.000$ to three decimal places. Thus, every posterior draw contains at least one unit with near certain latent risk. \textcolor{black}{This finding is compatible with the anchor condition in Assumption~\ref{ass:lower-anchor} under the fitted model and priors, but does not independently verify the condition.}

\begin{table}[tbp]
\centering
\caption{Posterior summaries and effective sample sizes for the full
PU-PoE model.}
\label{tab:supp-parameters}
\small
\setlength{\tabcolsep}{4pt}
\begin{tabular}{lrrrrrr}
\toprule
Parameter & Mean & SD & 5\% & Median & 95\% & ESS \\
\midrule
$l_{\mathrm{0}}$ & 2.884 & 0.488 & 2.172 & 2.836 & 3.744 & 1399.5 \\
$\sigma_{\mathrm{0}}$ & 1.592 & 0.539 & 0.865 & 1.505 & 2.558 & 601.1 \\
$l_{\mathrm{1}}$ & 0.312 & 0.018 & 0.283 & 0.311 & 0.343 & 1359.5 \\
$\sigma_{\mathrm{1}}$ & 7.894 & 2.754 & 4.236 & 7.461 & 12.912 & 760.5 \\
$l_{\mathrm{2}}$ & 0.342 & 0.023 & 0.306 & 0.341 & 0.382 & 1242.7 \\
$\sigma_{\mathrm{2}}$ & 6.752 & 2.209 & 3.681 & 6.466 & 10.684 & 903.5 \\
$l_{\mathrm{3}}$ & 0.336 & 0.020 & 0.304 & 0.336 & 0.370 & 915.8 \\
$\sigma_{\mathrm{3}}$ & 8.247 & 4.014 & 3.677 & 7.300 & 16.341 & 318.5 \\
$\eta$ & 0.137 & 0.076 & 0.029 & 0.129 & 0.277 & 1301.9 \\
$\beta_0$ & $-3.586$ & 0.544 & $-4.476$ & $-3.598$ & $-2.681$ & 2694.6 \\
$\tau_{\mathrm{0}}$ & 0.268 & 0.193 & 0.022 & 0.236 & 0.630 & 996.1 \\
$\tau_{\mathrm{1}}$ & 0.398 & 0.387 & 0.034 & 0.311 & 1.043 & 257.1 \\
$\tau_{\mathrm{2}}$ & 0.793 & 0.780 & 0.055 & 0.582 & 2.184 & 2089.4 \\
$\tau_{\mathrm{3}}$ & 1.035 & 1.128 & 0.062 & 0.740 & 2.960 & 1901.1 \\
\bottomrule
\end{tabular}

\end{table}

\begin{table}[tbp]
\centering
\caption{Effective sample size summaries for the latent field and the
induced probabilities.}
\label{tab:supp-ess}
\small
\begin{tabular}{lrrrrrr}
\toprule
Quantity & Mean ESS & Median ESS & Min ESS & 5\% ESS & 95\% ESS & Max ESS \\
\midrule
$f$ & 1406.7 & 1324.6 & 152.7 & 787.6 & 2233.2 & 3477.0 \\
$\sigma(f)$ & 3947.4 & 4089.3 & 771.8 & 2659.1 & 4742.3 & 5000.0 \\
\bottomrule
\end{tabular}

\end{table}

\begin{figure}[tbp]
\centering
\includegraphics[width=0.90\linewidth]
  {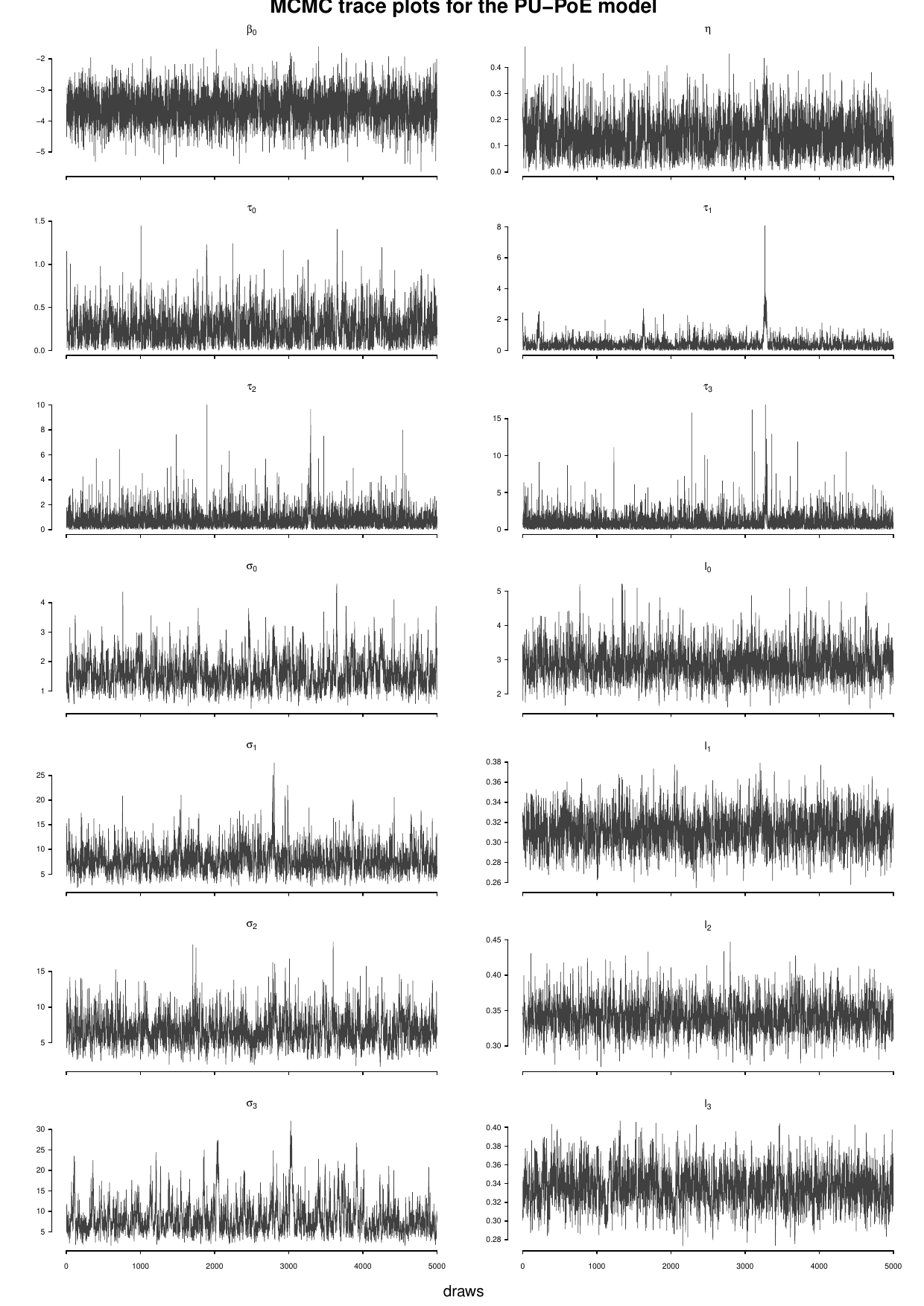}

\caption{Parameter trace plots for the full PU-PoE model.}
\label{fig:supp-traces-full}
\end{figure}

\begin{figure}[!t]
\centering
\includegraphics[width=0.6\linewidth]
  {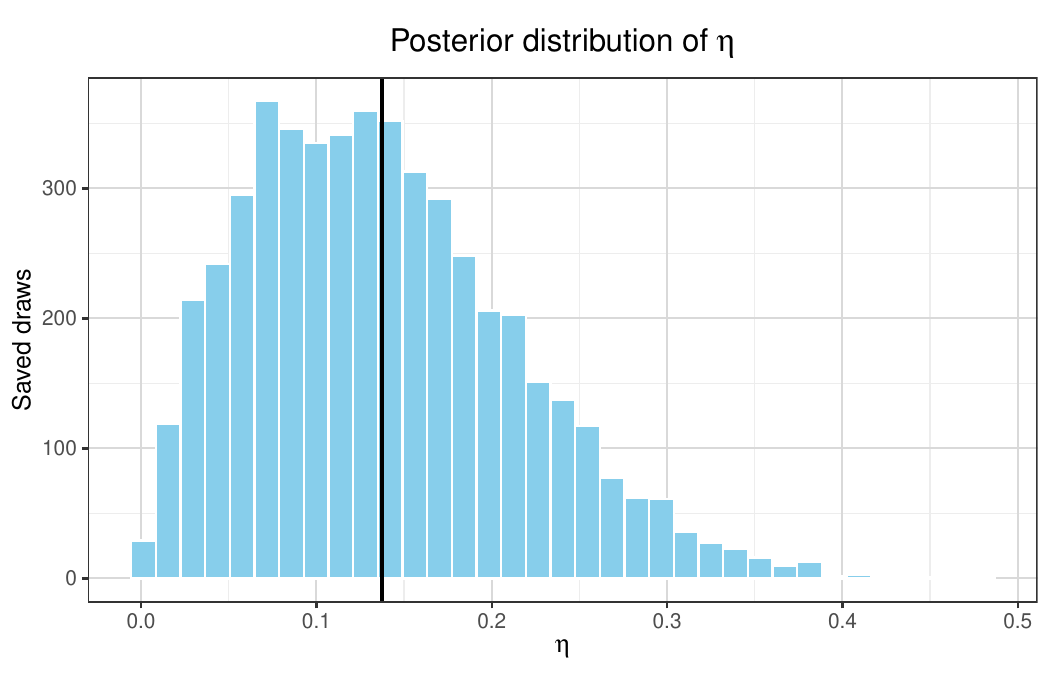}
  
\caption{Posterior distribution of $\eta$ under the full PU-PoE
model. The vertical line marks the posterior mean.}
\label{fig:supp-eta}
\end{figure}

\subsection{Effective expert shares}
\label{supp:expert-shares}

Table~\ref{tab:expert-weights} reports the prior allocation and the
posterior effective share of each expert. For posterior draw $s$, we
measure the GP component of expert $j$ by $v_{K,j}^{(s)}
=
n^{-1}\operatorname{tr}\{\mathbf C_j^{(s)}\}$
the average marginal variance of its Gaussian process, and its
structured mean component by
$v_{\mu,j}^{(s)}
=
\operatorname{Var}_i
\{[\mathbf Z_j\boldsymbol\gamma_j^{(s)}]_i\}$, the variance of its mean contribution across \textcolor{black}{firms}. The covariance,
mean and combined shares are respectively
$v_{K,j}^{(s)}/\sum_k v_{K,k}^{(s)}$,
$v_{\mu,j}^{(s)}/\sum_k v_{\mu,k}^{(s)}$ and
$\{v_{K,j}^{(s)}+v_{\mu,j}^{(s)}\}/\sum_k\{v_{K,k}^{(s)}+v_{\mu,k}^{(s)}\}$,
and the table reports their posterior means. The combined share gives
the overall effective contribution of an expert. It is a derived
summary rather than a model parameter and omits covariance terms
among the summed components, so it remains nonnegative but is not an
exact decomposition of the latent variance.

Relative to the prior allocation, the posterior combined share of the covariate expert falls from $0.500$ to $0.301$, while the three networks rise jointly from $0.500$ to $0.699$. Within the structured mean, the covariate expert has the largest share at $0.342$, whereas
the \textcolor{black}{Network~3} expert has the smallest at $0.155$. The \textcolor{black}{Network~3} expert is more prominent in the covariance component, where its share is $0.264$. Across all experts, the structured mean has a mean additive share of $0.203$, with a $5\%$ to $95\%$ interval of $[0.027,0.517]$. The corresponding exact decomposition gives mean fractions of $0.190$ for the structured mean, $0.757$ for the GP component and $0.052$ for their interaction. None of the $21$ structured mean coefficients has a $90\%$ posterior interval excluding zero, so the mean block is interpreted jointly rather than one coefficient at a time.

\begin{table}[tbp]
\centering
\caption{Prior allocation and posterior effective expert shares.}
\label{tab:expert-weights}
\small
\begin{tabular}{lrrrr}
\toprule
\multirow{2}{*}{Expert} &
Prior &
Posterior &
Posterior &
Posterior \\[-0.6ex]
&
Combined Share &
Covariance share &
Mean share &
Combined share \\
\midrule
Covariates & 0.500 & 0.306 & 0.342 & 0.301 \\
Network 1 & 0.167 & 0.230 & 0.249 & 0.228 \\
Network 2 & 0.167 & 0.201 & 0.254 & 0.224 \\
Network 3 & 0.167 & 0.264 & 0.155 & 0.247 \\
\bottomrule
\end{tabular}

\end{table}

\subsection{Comparison with the zero mean fit}
\label{supp:mu0}

Refitting with $\boldsymbol\mu_j=\mathbf 0$ isolates the contribution
of the structured mean. Figure~\ref{fig:supp-traces-mu0} shows the
parameter traces for this fit, Table~\ref{tab:supp-parameters-mu0}
the posterior summaries and effective sample sizes of the sampled
parameters, and Table~\ref{tab:supp-ess-mu0} the same for the latent
field and the induced probabilities.
Five of the six screening candidates are recovered and the ranking is largely unchanged, but the probabilities are not. On those five \textcolor{black}{firms} the mean posterior probability falls from $0.609$ to $0.487$, \textcolor{black}{while $\widehat H_\eta$, the estimate based on $\eta$ under SCAR, falls from $6.26$ to $4.59$}, and the descriptive scores on the observed positives deteriorate. \textcolor{black}{For the posterior count $H$, $\mathbb E(H\mid\mathbf y)$ falls from $5.25$ to $3.24$, with posterior medians $5$ and $3$ and $80\%$ HPD intervals $[0,8]$ and $[0,5]$, respectively. The corresponding $80\%$ HPD intervals for the count based on $\eta$ under SCAR are $[0.66,9.64]$ and $[0.44,7.10]$.} Table~\ref{tab:full-mu0} summarizes the two fits, while Table~\ref{tab:supp-candidate-comparison} compares the screening candidates individually. The sixth candidate is not selected by the $\mu=0$ fit. Thus, the structured mean appears to increase confidence in cases already identified by the model rather than substantially changing their ranking, and Figure~\ref{fig:prob-difference} shows that this increase is selective.

\begin{table}[tbp]
\centering
\caption{Posterior summaries and effective sample sizes for the
PU-PoE model with $\mu=0$.}
\label{tab:supp-parameters-mu0}
\small
\setlength{\tabcolsep}{4pt}
\begin{tabular}{lrrrrrr}
\toprule
Parameter & Mean & SD & 5\% & Median & 95\% & ESS \\
\midrule
$l_{\mathrm{0}}$ & 2.895 & 0.472 & 2.206 & 2.844 & 3.735 & 1268.8 \\
$\sigma_{\mathrm{0}}$ & 1.673 & 0.526 & 0.924 & 1.609 & 2.635 & 638.7 \\
$l_{\mathrm{1}}$ & 0.312 & 0.018 & 0.283 & 0.312 & 0.341 & 1530.8 \\
$\sigma_{\mathrm{1}}$ & 8.519 & 2.710 & 4.758 & 8.146 & 13.447 & 1170.6 \\
$l_{\mathrm{2}}$ & 0.340 & 0.023 & 0.304 & 0.340 & 0.378 & 1389.9 \\
$\sigma_{\mathrm{2}}$ & 6.636 & 2.093 & 3.623 & 6.423 & 10.503 & 1009.3 \\
$l_{\mathrm{3}}$ & 0.335 & 0.020 & 0.304 & 0.334 & 0.369 & 1233.5 \\
$\sigma_{\mathrm{3}}$ & 8.460 & 3.929 & 3.798 & 7.544 & 16.199 & 461.2 \\
$\eta$ & 0.105 & 0.063 & 0.022 & 0.095 & 0.225 & 2773.1 \\
$\beta_0$ & $-3.390$ & 0.536 & $-4.280$ & $-3.390$ & $-2.515$ & 4124.3 \\
\bottomrule
\end{tabular}

\end{table}

\begin{table}[tbp]
\centering
\caption{Effective sample size summaries for the latent field and the
induced probabilities under the $\mu=0$ fit.}
\label{tab:supp-ess-mu0}
\small
\begin{tabular}{lrrrrrr}
\toprule
Quantity & Mean ESS & Median ESS & Min ESS & 5\% ESS & 95\% ESS & Max ESS \\
\midrule
$f$ & 1675.7 & 1527.4 & 710.6 & 967.0 & 2839.1 & 4301.5 \\
$\sigma(f)$ & 4314.4 & 4467.2 & 1969.7 & 3265.3 & 5000.0 & 5000.0 \\
\bottomrule
\end{tabular}

\end{table}

\begin{figure}[tbp]
\centering
\includegraphics[width=0.90\linewidth]
  {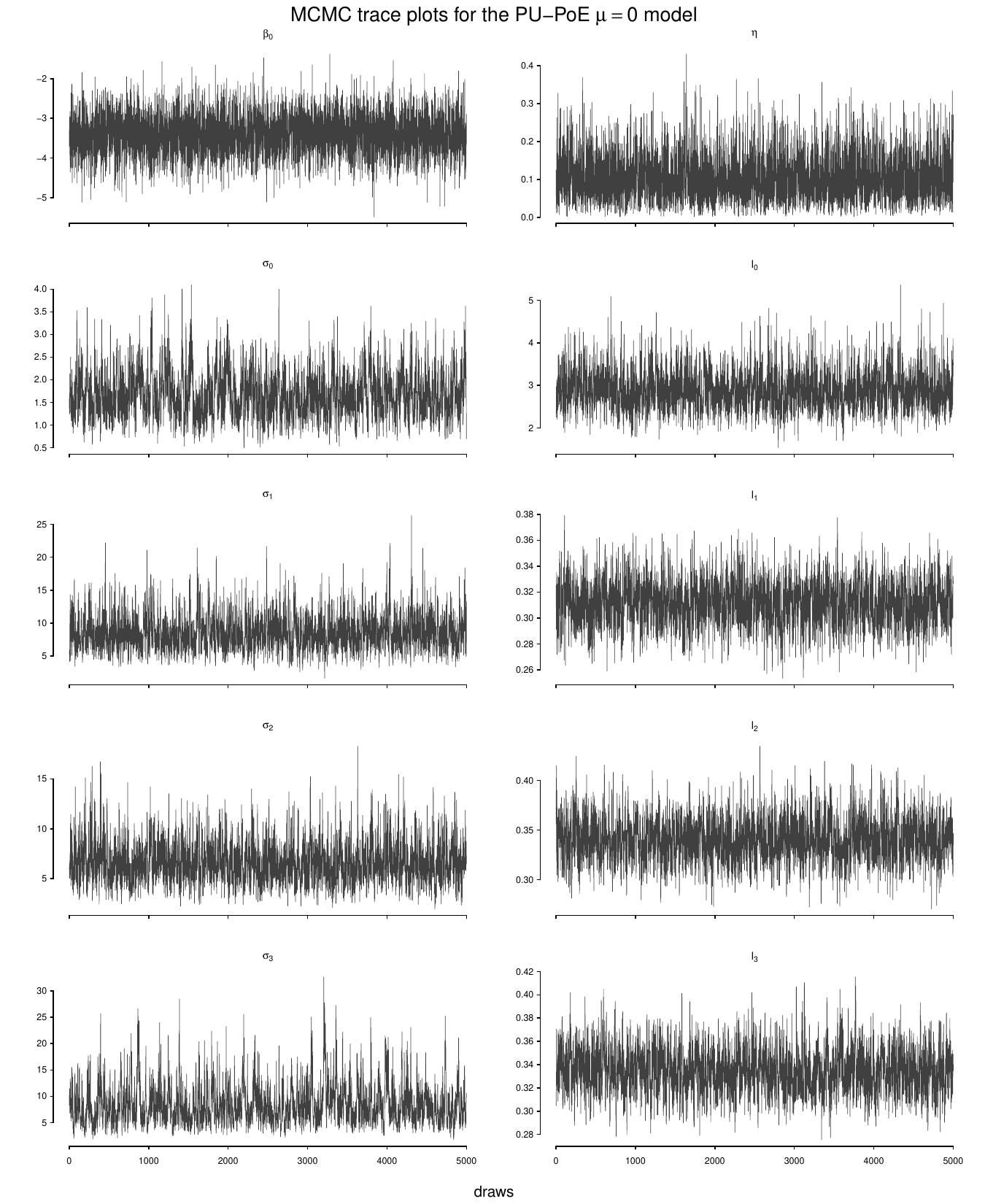}

\caption{Parameter trace plots for the PU-PoE model with $\mu=0$.}
\label{fig:supp-traces-mu0}
\end{figure}

\begin{table}[tbp]
\centering
\caption{\textcolor{black}{Full versus $\mu=0$ PU-PoE summaries. The rows for $H$ summarize the posterior count in \eqref{eq:H-posterior}, while $H_\eta$ is the count based on $\eta$ under SCAR defined in \eqref{eq:H-estimator}.} The log loss
and Brier entries are descriptive in sample scores on the $37$
observed positives.}
\label{tab:full-mu0}
\small
\begin{tabular}{lrr}
\toprule
Quantity & Full PU-PoE & PU-PoE $(\mu=0)$ \\
\midrule
$\mathbb E(\eta\mid\mathbf y)$ & 0.137 & 0.105 \\
$\mathbb E(H\mid\mathbf y)$ & 5.25 & 3.24 \\
Posterior median of $H$ & 5 & 3 \\
$80\%$ HPD for $H$ & $[0,8]$ & $[0,5]$ \\
$\widehat H_\eta$ & 6.26 & 4.59 \\
$80\%$ HPD for $H_\eta$ & $[0.66,9.64]$ & $[0.44,7.10]$ \\
Mean observed positive probability & 0.801 & 0.772 \\
Median observed positive probability & 0.900 & 0.849 \\
Minimum observed positive probability & 0.205 & 0.153 \\
Observed positive log loss & 0.259 & 0.310 \\
Observed positive Brier & 0.072 & 0.091 \\
\bottomrule
\end{tabular}

\end{table}

\begin{table}[tbp]
\centering
\caption{Candidate level comparison between the full and $\mu=0$
fits.}
\label{tab:supp-candidate-comparison}
\small
\begin{tabular}{lrrrrr}
\toprule
Candidate & Rank (full) & Rank ($\mu=0$) &
$\hat p_{\mathrm{full}}$ & $\hat p_{\mu=0}$ & Selected by both \\
\midrule
C1 & 1 & 1 & 0.720 & 0.676 & Yes \\
C2 & 2 & 3 & 0.637 & 0.477 & Yes \\
C3 & 3 & 4 & 0.605 & 0.414 & Yes \\
C4 & 4 & 2 & 0.557 & 0.501 & Yes \\
C5 & 5 & 5 & 0.524 & 0.368 & Yes \\
C6 & 6 & -- & 0.459 & -- & No \\
\bottomrule
\end{tabular}

\end{table}

\begin{figure}[tbp]
\centering
\includegraphics[width=0.9\linewidth]
  {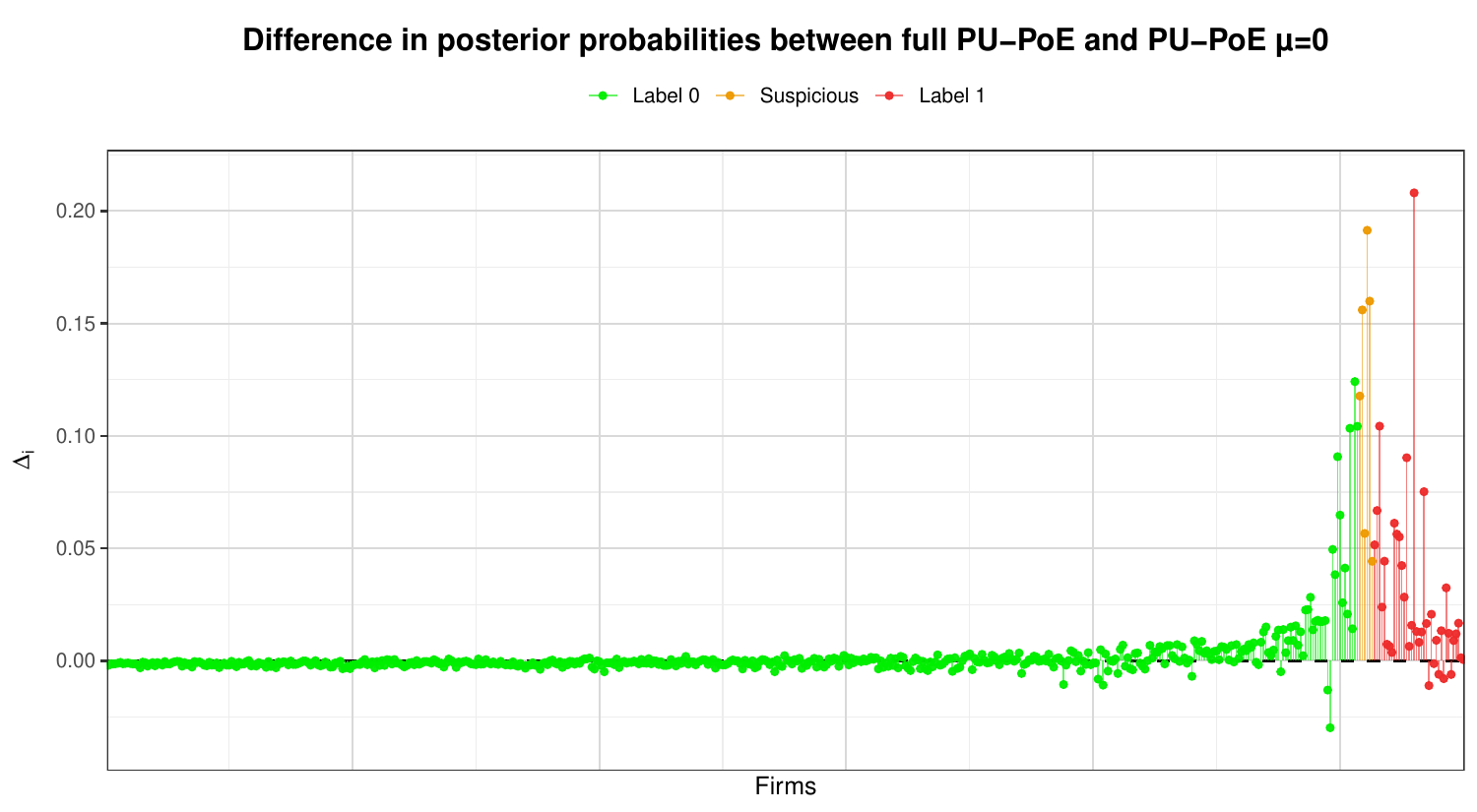}
  
\caption{Differences
$\hat p_{i,\mathrm{full}}-\hat p_{i,\mu=0}$ for all $550$ \textcolor{black}{firms},
ordered as ordinary unlabeled \textcolor{black}{firms}, the six candidates and the
$37$ observed positives.}
\label{fig:prob-difference}
\end{figure}

\FloatBarrier

\end{document}